\numberwithin{equation}{section}
\newtheorem{thm}{theorem}[section]
\newtheorem{rmk}[thm]{remark}
\newtheorem{cor}[thm]{corollary}
\newtheorem{prop}[thm]{proposition}
\newtheorem{lma}[thm]{lemma}
\newtheorem{cond}[thm]{Condition}
\newcommand{\tp}{\mathsf{T}}
\DeclareMathOperator*{\tensort}{\otimes_2}  
\DeclareMathOperator*{\btensort}{ {\bigotimes}_2} 
  \newcommand{\tb}{\tilde \beta}
\DeclareMathOperator*{\tensor}{\otimes}  
\DeclareMathOperator*{\kprod}{\: \widetilde{\otimes} \:}  
\DeclareMathOperator*{\bkprod}{\: \widetilde{\bigotimes} \:}  
\renewcommand{\sc}[1]{\left\langle #1 \right\rangle}
\newcommand{\ssc}[1]{{\langle #1 \rangle}} 
\newcommand{\HSsc}[1]{\left \langle #1 \right \rangle_{\!\!\mathcal{S}_{2}}}
\newcommand{\ind}[1]{\mathbf{1}_{\{ #1 \} }}
\newcommand{\LL}[1]{{L^2\left( #1, \mathbb{R} \right)}\xspace}
 \DeclareMathOperator{\ee}{ \:{\mathbb{E} }}
 \newcommand{\eee}[1]{\,\:{\mathbb{E}\left[ #1 \right]}}
\DeclareMathOperator{\cov}{ {\mathrm{cov}}}
\newcommand{\covv}[1]{ {{\cov}\left[ #1 \right] }}
\newcommand{\rainf}{\rightarrow \infty}
\newcommand{\raz}{\rightarrow 0}
\newcommand{\convd}{\stackrel{d}{\longrightarrow}}
\newcommand{\convp}{\stackrel{p}{\longrightarrow}}
\newcommand{\id}{\mathrm{Id}}
\DeclareMathOperator{\trace}{{\mathrm{Tr}}}
\renewcommand{\vec}{{\mathrm{Vec}}}
\newcommand{\schatten}{\mathcal{S}}
\newcommand{\tc}[1]{\schatten_1(#1)}
\newcommand{\bounded}[1]{\schatten_\infty(#1)}
\newcommand{\HS}[1]{\schatten_2(#1)}
\newcommand{\hnorm}[1]{{\left\lVert #1 \right\rVert}}   
\newcommand{\snorm}[1]{{\left\vert\kern-0.2ex\left\vert\kern-0.2ex\left\vert #1
  \right\vert\kern-0.2ex\right\vert\kern-0.2ex\right\vert}}
\newcommand{\ssnorm}[1]{{\vert\kern-0.2ex\vert\kern-0.2ex\vert #1
  \vert\kern-0.2ex\vert\kern-0.2ex\vert}} 
\newcommand{\opnorm}[1]{\snorm{#1}_\infty}
\newcommand{\tnorm}[1]{\snorm{#1}_1}
\newcommand{\hsnorm}[1]{\snorm{#1}_2}
\newcommand{\vep}{{\varepsilon}}
\newcommand{\bR}{\mathbb{R}}
\newcommand{\simiid}{\stackrel{\text{i.i.d.}}{\sim}}
\newcommand{\bT}{\mathbf{T}}
\title{Tests for separability in nonparametric covariance operators of random surfaces}
\author{J. A. D. Aston\footnote{Research Supported by EPSRC grant EP/K021672/2.}, D. Pigoli and S. Tavakoli\footnote{Address for correspondence: Shahin Tavakoli,
Statistical laboratory, Department of Pure Mathematics and
Mathematical Statistics, University of Cambridge, Wilberforce Road,
Cambridge, CB3 0WB, United Kingdom. Email:
s.tavakoli@statslab.cam.ac.uk}}
\affil{Statistical
Laboratory\\ Department of Pure Mathematics and Mathematical Statistics\\
University of Cambridge}
\begin{document}

\date{}
\maketitle


\abstract{
  .\hspace*{1mm}The assumption of separability of the covariance operator for a random image or
hypersurface can be of substantial use in applications, especially in situations where the accurate estimation
of the full covariance structure is unfeasible, either for computational reasons, or due to a small sample
size. However, inferential tools to verify this assumption are somewhat lacking in high-dimensional or
functional {data analysis} settings, where this assumption is most relevant. We propose here to test separability by focusing
on $K$-dimensional projections of the difference between the covariance operator and {a} nonparametric
separable approximation. The subspace we project onto is one generated by the eigenfunctions of the covariance
operator estimated under
the separability hypothesis, negating the need to ever estimate the full non-separable covariance. We show
that the rescaled difference of the sample covariance operator with its separable approximation is
asymptotically Gaussian. As a by-product of this result, we derive asymptotically pivotal tests under Gaussian
assumptions, and propose bootstrap methods for approximating the distribution of the test statistics. We probe
the finite sample performance through simulations
studies, and present an application to log-spectrogram images from a phonetic linguistics dataset.
}

\paragraph{Keywords:} Acoustic Phonetic Data, Bootstrap, Dimensional Reduction,
Functional Data, Partial Trace, Sparsity.

\section{Introduction}

Many applications involve hypersurface data, data that is both functional {\citep[as in functional data
  analysis, see e.g.][]{ramsay:2005,ferraty:2006,Horvath:2012,wang:2015review}} and multidimensional. Examples
abound and include images from medical devices such as MRI \citep{lindquist2008statistical} or PET
\citep{worsley1996unified}, spectrograms derived from audio signals \citep[][and as in the application we
consider in Section \ref{sec:results}]{rabiner1978digital} or geolocalized data \citep[see, e.g.,
][]{Secchi2015}. In these kinds of problem, the number of available observations {(hypersurfaces)} is often
small relative to the high-dimensional nature of the individual observation, and not usually large enough to
estimate a full multivariate covariance function.

It is usually, therefore, necessary to make some simplifying
assumptions about the data or their covariance
structure. If the covariance structure is of interest, such as for PCA or network modeling, for
instance, it is usually assumed to have some kind of lower dimensional structure.
Traditionally, this translates into a \emph{sparsity} assumption: one assumes that
most entries of the covariance matrix or function are zero. Though
being  relevant for a number of applications \citep{tibshirani2014praise}, this traditional definition of sparsity
may not be appropriate in some cases, such as in imaging, as this can give rise to artefacts in the analysis
(for example, holes in an image). In such problems, where the data is multidimensional,
a natural assumption that can be made is that the covariance is
\emph{separable}. This assumption greatly simplifies both the
estimation and the computational cost in dealing with multivariate
covariance functions, while still allowing for a positive definite
covariance to be specified. In the context of space-time data
$X(s,t)$, for instance, where $s \in [-S,S]^d$, $S
>0$, denotes the location in space, and $t \in [0,T]$,  $T > 0$, is
the time index, the assumption of separability translates into
\begin{equation}
\label{eq:intro-separable-structure}
  c(s,t,s',t')=c_1(s,s')c_2(t,t'), \quad s,s' \in [-S,S]^d; t,t' \in [0,T],
\end{equation}
where  $c$, $c_1$, and $c_2$, are respectively  the full covariance
function, the  space covariance function and the time covariance
function. In words, this means that the full covariance function
factorises as a product of the spatial covariance function with the
time covariance function.

The separability assumption \citep[see e.g.][]{gneiting2006geostatistical,genton2007separable} simplifies the
covariance structure of
the process and makes it far easier to estimate; in some sense, the separability assumption results in a
  estimator of the covariance which has less variance, at the expense of a possible bias.
As an illustrative example, consider that we observe a discretized version of the process through measurements
on a two dimensional grid (without loss of generality, as the same arguments apply for any dimension greater than
$2$)
being a $q\times p$ matrix {(of course, the functional data analysis approach taken here does
  \emph{not} assume that
the replications of the process are observed on same grid, nor that they are observed on a grid)}. Since we are not assuming a parametric
form for the covariance, the degrees of freedom in the full
covariance are  $qp(qp+1)/2$, while the separability assumption
reduces them to $q(q+1)/2 + p(p+1)/2$. This reflects a dramatic
reduction in the dimension of the problem even for moderate value of
$q,p$, and overcomes both computational and estimation problems due to the relatively small sample sizes
available in applications. For example, for $q=p = 10$, we have
$qp(qp+1)/2 = 5050$ degrees of freedom, however, if the
separability holds, then we have only $q(q+1) + p(p+1)= 110$ degrees of
freedom. Of course, this is only one example, and our approach is not restricted to data on a grid, but
this illustrates the computational savings that such assumptions can possess.

\begin{figure}[h]
  \begin{center}
    \includegraphics{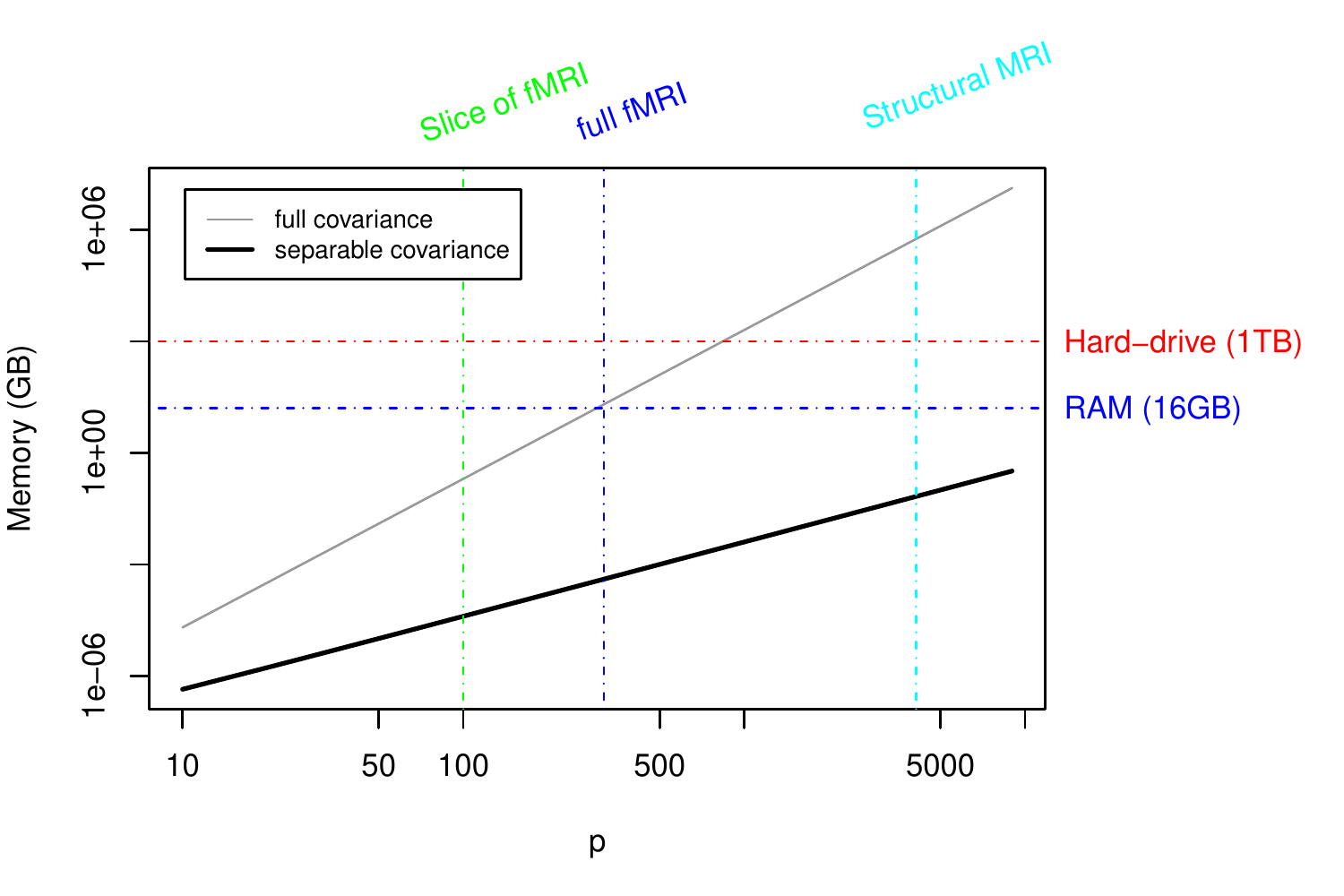}
  \end{center}
  \caption{Memory required to store the full covariance and the separable covariance of $p \times p$ matrix data, as a function of $p$. Several types of data related to Neuroimaging (structural and functional Magnetic Resonance Imaging) are used as exemplars of data sizes, as they naturally have multidimensional structure.}
  \label{fig:memory_usage}
\end{figure}

Three related computational classes of problem can be identified. In the first case, the full covariance structure
can be computed and stored. In the second one, it is still possible,
although burdensome, to compute the full covariance matrix but it
can not be stored, while the last class includes problems where even
computation of the full covariance is infeasible. The values of $q,p$ that set the
boundaries for these classes depend of course on the available
hardware and they are rapidly changing. At the present time however,
for widely available systems, storage is feasible up to $q,p
 \approx 100$ while computation becomes unfeasible when $q,p$ get close to
 $1000$ (see Figure~\ref{fig:memory_usage}). On the contrary, a separable covariance structure can be usually
 both computed and stored without effort even for these sizes of problem. We would like to stress however that the constraints coming from the
need for statistical accuracy are usually tighter. The estimation of
the full covariance structure even for $q,p=100$ presents about
$5\times 10^7$ unknown parameters, when typical sample sizes are in
the order of hundreds at most. If we are able to assume
separability, we can rely on far more accurate estimates.

While the separability assumption can be very useful, and is indeed
often implicitly made in many higher dimensional applications when using
isotropic smoothing \citep{worsley1996unified,lindquist2008statistical}, very little has been done to
develop tools to assess its validity on a case by case basis. In the
classical multivariate setup, some tests for the separability
assumption are available. These have been mainly developed in the
field of spatial statistics \citep[see][and references therein]{Lu2005,Fuentes2006}, where the discussion of separable covariance
functions is well-established, or for applications involving repeated
measures \citep{Mitchell2005}. These methods, however, rely on the
estimation of the full multidimensional covariance structure, which
can be troublesome.  It is sometimes possible to circumvent this
problem by considering a parametric model for the full covariance
structure \citep[][]{Simpson2010,Simpson2014,LiuRH2014}. On the contrary, when
the covariance is being non-parametrically specified, as will be
the case in this paper, estimation of the full covariance is at
best computationally complex with large estimation errors, and in
many cases simply computationally infeasible. Indeed, we highlight
that, while the focus of this paper is on checking the viability of
a separable structure for the covariance, this is done
{without any parametric assumption on the form of}
$c_1(s,s')$ and $c_2(t,t')$,
thus allowing for the maximum flexibility. This is opposed to
assuming a parametric separable form with only few unknown parameters,
which is usually too restrictive in many applications, something that
has led to separability being rightly criticised and viewed with suspicion in the spatio-temporal
statistics literature \citep{gneiting2002nonseparable,gneiting2006geostatistical}. {Moreover, the
methods we develop here are aimed to applications typical of functional data, where replicates from the
underlying random process are available. This is different from the spatio-temporal setting, where
usually only one realization of the process is observed.}
See also \citet{constantinou:2015} for
another approach to test for separability in functional data.

It is important to notice that a separable covariance structure
({or equivalently, a separable correlation structure})
is not necessarily connected with the original data being separable.
Furthermore, sums or differences of separable hypersurfaces are not
necessarily separable. On the other hand, the error structure may be
separable even if the mean is not. Given that in many applications
of functional data analysis, the estimation of the covariance is the
first step in the analysis, we concentrate on covariance
separability. Indeed, covariance separability is an extremely useful
assumption as it implies separability of the eigenfunctions,
allowing computationally efficient estimation of the eigenfunctions
(and principal components). Even if separability is misspecified, separable eigenfunctions can still form a
basis representation for the data, they simply no longer carry
optimal efficiency guarantees in this case \citep{aston2012evaluating}, but can often have near-optimality under the appropriate assumptions \citep{chen2015modeling} .

In this paper, we propose a test to verify if the data at hand are
in agreement with a separability assumption. Our test does not
require the estimation of the full covariance structure, but only
the estimation of the separable structure
\eqref{eq:intro-separable-structure}, thus avoiding both the
computational issues and the diminished accuracy involved in the
former. To do this, we rely on a strategy from Functional Data
Analysis
\citep{ramsay:2002,ramsay:2005,ferraty:2006,ramsayfunctional,horvath2012}, which
consists in projecting the observations onto a carefully chosen
low-dimensional subspace. The key fact for the success of our
approach is that, under the null hypothesis, it is possible to
determine this subspace using only the marginal covariance
functions. {While the optimal choice for the dimension of
this subspace is a non-trivial problem, some insight can be obtained through our
extensive simulation studies (Section~\ref{sec:simulations}). Ultimately, the proposed test checks
the separability in the chosen subspace, which will often be the
focus of following analyses.
}

The paper proceeds as follows. In Section~\ref{s:SepCovs}, we
examine the ideas behind separability,  propose a separable
approximation of a covariance operator, and study the asymptotics of
the difference between the sample covariance operator and its
separable approximation. This difference will be the building block
of the testing procedures introduced in Section~\ref{sec:bootstrap},
and whose distribution we propose to approximate by bootstrap
techniques. In Section~\ref{sec:results}, we investigate by means of
simulation studies the finite sample behaviour of our testing
procedures  and apply our methods to acoustic phonetic data. A
conclusion, given in Section~\ref{sec:conclusion}, summarizes the
main contributions of this paper. Proofs  are collected in
appendices~\ref{app:asymptotic-covariance}, \ref{app:proofs}, and
\ref{app:partial-traces}, while implementation details,
theoretical background and additional figures can be found 
in the appendices~\ref{app:implementation},
\ref{app:background-results-main-paper} and \ref{app:additional-simulations-studies}.
All the tests introduced in the paper are available as an R package \texttt{covsep} \citep{R:covsep},
available on the Comprehensive R Archive Network (CRAN).

For notational simplicity, the proposed method will be described for
two dimensional functional data (e.g.\ random surfaces), hence a
four dimensional covariance structure (i.e.\ the covariance of a
random surface), but the generalization to higher dimensional cases
is straightforward.  The methodology is developed in general for
data that take values in a Hilbert space, but the case of square
integrable surfaces---being relevant for the case of acoustic
phonetic data---is used throughout the paper as a demonstration.
{We recall that the proposed approach is not restricted to data observed on a
regular grid, although for simplicity of exposition we consider here the
case where data are observed densely and a pre-processing smoothing
step allows to consider the smooth surfaces as our observations, as
happens for example the case of the acoustic phonetic data described
in Section \ref{sec:results}. If data are observed sparsely, the
proposed approach can still be applied but there may be the need to
use more appropriate estimators for the marginal covariance
functions \citep[see, e.g.][]{yao:2005} and these need to satisfy
the properties described in Section \ref{s:SepCovs}.}

\section{Separable Covariances: definitions, estimators and asymptotic results}\label{s:SepCovs}

While the general idea of the factorization of a multi-dimensional
covariance structure as the product of lower dimensional covariances
is easy to describe, the development of a testing procedure asks for
a rigorous mathematical definition and the introduction of some
technical results. In this section we propose a definition of
separability for covariance operators, show how it is possible to
estimate a separable version of a covariance operator and evaluate
the difference between the empirical covariance operator and its
separable version. Moreover, we derive some asymptotic results for
these estimators. To do this, we
first set the problem in the framework of random elements in
Hilbert spaces and their covariance operators. {The benefit
in doing this is twofold. First, our results become applicable in more general settings (e.g.\ multidimensional
functional data, data on multidimensional grids, fixed size rectangular random matrices) and do not depend on a specific choice of smoothness of the
data (which is implicitly assumed when modeling the data  as e.g.\ square integrable surfaces). They only rely
on the Hilbert space structure of the space in which the data lie.
Second, it highlights
the importance of the  \emph{partial trace} operator in the
estimation of the separable covariance structure, and how the
properties of the partial trace (Appendix~\ref{app:partial-traces}) play a crucial role in the asymptotic
behavior of the proposed test statistics.} However, to ease
explanation, we use the case of the
Hilbert space of square integrable surfaces (which shall be used in our linguistic application, see
Section~\ref{sec:results}) as an illustration of our testing procedure.


\subsection{{Notation} }

Let us first introduce some definitions and notation about operators in a Hilbert space
\citep[see e.g.][]{gohberg:1990,Kadison:1997vol1,Ringrose:1971}.
  Let $H$ be a real separable Hilbert space (that
is, a Hilbert space with a countable orthonormal basis), whose inner
product and norm are denoted by $\sc{\cdot, \cdot}$ and
$\hnorm{\cdot}$, respectively. The space of bounded (linear) operators on $H$
is denoted by $\bounded{H}$, and its norm is $\opnorm{T} = \sup_{x
  \neq 0} \hnorm{Tx}/\hnorm{x}$. The space of Hilbert--Schmidt operators on $H$
is denoted by $\HS{H}$, and is a Hilbert space with the
inner-product $\HSsc{S,T} = \sum_{i \geq 1}\sc{S e_i, T e_i}$ and
induced norm $\hsnorm{\cdot}$, where $(e_i)_{i \geq 1} \subset H$ is
an orthonormal basis of $H$.  The space of trace-class operator on
$H$ is denoted by $\tc{H}$, and consists of all compact operators
$T$ with finite trace-norm, i.e.\ $\tnorm{T} = \sum_{n \geq 1}
s_n(T) < \infty$, where $s_n(T) \geq 0$ denotes the $n$-th singular
value of $T$. For any trace-class operator $T \in \tc{H}$, we define
its trace by $\trace(T) = \sum_{i \geq 1} \sc{Te_i,
  e_i}$, where $(e_i)_{i \geq 1} \subset H$ is an orthonormal basis, and the sum is independent of the choice
of the orthonormal basis.

If $H_1, H_2$ are real separable Hilbert spaces, we denote by $H =
H_1 \tensor H_2$ their tensor product Hilbert space, which is
obtained by the completion of all finite sums $\sum_{i,j=1}^N
u_i \tensor
  v_j$, $u_i \in H_1, v_j \in H_2$, under the inner-product $\sc{u \tensor v, z \tensor
    w}=\sc{u,z}\sc{v,w}, u,z \in H_1, z,w \in H_2$ \citep[see e.g.][]{Kadison:1997vol1}.
  If $C_1 \in \bounded{H_1}$, $C_2 \in \bounded{H_2}$, we
  denote by  $C_1
\kprod C_2$  the unique linear operator on $ H_1 \tensor H_2 $  satisfying
\begin{equation}
  \label{eq:kprod-defn}
  \left( C_1 \kprod C_2 \right)(u \tensor v) = C_1u \tensor C_2v, \quad \text{for all } u \in H_1, v \in H_2.
\end{equation}
It is a bounded operator on $H$, with $\opnorm{C_1 \kprod C_2} = \opnorm{C_1} \opnorm{C_2}$. Furthermore,
if  $C_1 \in \tc{H_1}$ and $C_2 \in \tc{H_2}$, then $C_1 \kprod C_2 \in \tc{H_1 \tensor H_2}$ and $\tnorm{C_1 \kprod
  C_2} = \tnorm{C_1}\tnorm{C_2}$. We denote by
$\trace_{1}: \tc{H_1 \tensor H_2} \rightarrow \tc{H_2}$ the \emph{partial trace with respect to $H_1$}. It is
the unique bounded linear operator satisfying $\trace_{1}(A \kprod B) = \trace(A)B$, for all $A \in \tc{H_1}, B
\in \tc{H_2}$.  $\trace_2: \tc{H_1 \tensor H_2} \rightarrow \tc{H_1}$ is defined symmetrically (see
Appendix~\ref{app:partial-traces} for more details).

If $X \in H$ is a random element
with $\ee \hnorm{X} < \infty$, then $\mu = \ee X \in H$, the mean of $X$, is well defined. Furthermore, if
$\ee \hnorm{X}^2 < \infty$, then $C = \eee{(X-\mu) \tensort (X-\mu)}$ defines the \emph{covariance operator}
of $X$, where $f \tensort g$ is the operator on $H$ defined by $(f \tensort g)h = \sc{h,g}f$,
for $f,g,h \in H$. The covariance operator $C$ is a trace-class hermitian operator on $H$, and encodes all the
second-order fluctuations of $X$ around its mean.

Using this nomenclature, we are going to deal with random variables
belonging to a tensor product Hilbert space. This framework
encompasses the situation where $X$ is a random surface, for example
a space-time indexed data, i.e.\ $X = X(s,t), s \in [-S,S]^d, t \in
[0,T]$,  $S, T
> 0$, by setting $H = \LL{[-S,S]^d \times [0,T]}$, for instance {(notice however that additional smoothness
  assumptions on $X$ would lead to assume that $X$ belongs to some other Hilbert space)}. In
this case, the covariance operator of the random element $X \in
\LL{[-S,S]^d \times [0,T]}$ satisfies
\[
  Cf(s,t) = \int_{[-S,S]^d} \int_{0}^T c(s,t,s',t') f(s', t') ds' dt', \quad s \in [-S,S]^d, t \in [0,T],
\]
$f \in \LL{[-S,S]^d \times [0,T]}$,
where $c(s,t,s',t') = \covv{X(s,t), X(s',t')}$ is the
\emph{covariance function} of $X$. The space of square integrable
surfaces,
\[
  \LL{[-S,S]^d \times [0,T]},
\]
is a tensor product Hilbert space
because it can can be identified with
\[
  \LL{[-S,S]^d} \tensor \LL{[0,T]}.
\]

\subsection{ {Separability} }
We recall now that we want to define separability so that the
covariance function can be written as
  $c(s,t,s',t') = c_1(s,s') c_2(t,t'),$
for some $c_1 \in \LL{[-S,S]^d \times [-S,S]^d}$ and $c_2 \in \LL{[0,T]
\times [0,T]}$. This can be extended to the covariance operator of a
random elements $X \in H = H_1 \tensor H_2$, where $H_1, H_2$ are
arbitrary separable real Hilbert spaces. We call its covariance
operator $C$  \emph{separable} if
\begin{equation}
  C = C_1 \kprod C_2,
  \label{eq:abstract-definition-of-separable}
\end{equation}
where $C_1$, respectively $C_2$, are trace-class operators on $H_1$,
respectively on $H_2$, and $C_1 \kprod C_2$ is defined in
\eqref{eq:kprod-defn}. Notice that though the decomposition
\eqref{eq:abstract-definition-of-separable} is not unique, since
$C_1 \kprod C_2 = (\alpha C_1) \kprod (\alpha^{-1} C_2)$ for any
$\alpha \neq 0$, this will not cause any problem at a later stage
since we will ultimately be dealing with the product $C_1 \kprod C_2$,
which is identifiable.

In practice, neither $C$ nor $C_1 \kprod C_2$ are known. If $X_1, \ldots, X_N \simiid X$ and
\eqref{eq:abstract-definition-of-separable} holds, the sample covariance operator $\widehat{C}_N$ is not
necessarily separable in finite samples. However, we can estimate a separable approximation of it by
\begin{equation}
  \label{eq:separable-version-of-Cn}
  \widehat{C}_{1,N} \kprod \widehat{C}_{2,N},
\end{equation}
where $\widehat{C}_{1,N} =
\trace_2(\widehat{C}_N)/\sqrt{\trace(\widehat{C}_N)}$,
$\widehat{C}_{2,N} =
\trace_1(\widehat{C}_N)/\sqrt{\trace(\widehat{C}_N)}$. The intuition
behind \eqref{eq:separable-version-of-Cn} is that
\[
  \trace(T)T = \trace_2(T) \kprod \trace_1(T),
\]
for all $T \in \tc{H_1 \tensor H_2}$ of the form $T = A \kprod B$,
$A \in \tc{H_1}, B \in \tc{H_2}$, with $\trace(T) \neq 0$.

Let us consider again what this means when $X$ is a random element of $\LL{[-S,S]^d
  \times [0, T]}$---i.e.\ the realization of a space-time process---of which we observe $N$ i.i.d.\ replications
$X_1,\ldots, X_N \sim X$. In this case,
Proposition~\ref{prop:partial-trace-of-kernel-operator-with-continous-kernels}
tells us that if the covariance function is continuous, the
operators $\widehat{C}_{1,N}$ and $\widehat{C}_{2,N}$  are defined by
\begin{align*}
  \widehat{C}_{1,N}f(s) &= \int_{[-S,S]^d} \widehat{c}_{1,N}(s,s')f(s)ds, \quad f \in \LL{[-S,S]^d},
     \\
     \widehat{C}_{2,N}g(t) &= \int_{0}^T \widehat{c}_{2,N}(t,t')g(t)dt, \quad g \in \LL{[0,T]},
\end{align*}
where
  \begin{align*}
    \widehat{c}_{1,N}(s,s') &= \frac{\tilde c_{1,N}(s,s')}{\sqrt{\int_{[-S,S]^d} \tilde c_{1,N}(s,s)ds}},
&    \widehat{c}_{2,N}(t,t') &= \frac{\tilde
c_{2,N}(t,t')}{\sqrt{\int_{0}^T \tilde c_{2,N}(t,t)dt}},
\end{align*}
and
\begin{align*}
  \tilde c_{1,N}(s,s') &= \frac{1}{N} \sum_{i=1}^N \int_{0}^T \left( X_i(s,t) - \overline{X}(s,t)
  \right)\left( X_i(s',t) - \overline{X}(s',t)\right)dt  = \int_0^T c_N(s,t,s',t) dt,
  \\  \tilde c_{2,N}(t,t') &= \frac{1}{N} \sum_{i=1}^N \int_{[-S,S]^d} \left( X_i(s,t) - \overline{X}(s,t)
  \right)\left( X_i(s,t') - \overline{X}(s,t')\right)ds  = \int_{[-S,S]^d} c_N(s,t,s,t') ds,
\\ \overline{X}(s,t) &= \frac{1}{N} \sum_{i=1}^N X_i(s,t),  \widehat{c}_N(s,t,s',t') = \frac{1}{N} \sum_{i=1}^N \left(
X_i(s,t) - \overline{X}(s,t) \right)\left( X_i(s',t') -
\overline{X}(s',t')\right),
  \end{align*}
  for all $s, s' \in [-S,S]^d, t,t' \in [0,T]$.
The assumption of separability here means that the estimated covariance is
written as a product of a purely spatial component and a purely
temporal component, thus making both modeling and estimation easier
in many practical applications.

We stress again that we aim to develop a test statistic that solely
relies on the estimation of the separable components $C_1$ and $C_2$,
and does not require the estimation of the full covariance $C$. We
can expect that under the null hypothesis $H_0 : C = C_1 \kprod
C_2$, the difference $D_N = \widehat{C}_N - \widehat{C}_{1,N} \kprod
\widehat{C}_{2,N}$ between the sample covariance operator and its
separable approximation should take small values. We propose
therefore to construct our test statistic by projecting $D_N$ onto
the first eigenfunctions of $C$, since these encode the directions along
which $X$ has the most variability. If we denote by $C_1 = \sum_{i
\geq 1} \lambda_i u_i \tensort u_i$ and $C_2 = \sum_{j \geq 1}
\gamma_j v_j \tensort v_j$ the Mercer decompositions of $C_1$ and
$C_2$, we have
\[
  C = C_1 \kprod C_2 = \sum_{i,j \geq 1} \lambda_i \gamma_j (u_i \tensor v_j) \btensort (u_i \tensor v_j),
\]
where we have used results from Appendix~\ref{app:Hilbert-spaces}.
The eigenfunctions of $C$ are therefore of the form $u_r \tensor
v_s$, where $u_r \in H_1$ is the $r$-th eigenfunction of $C_1$ and
$v_s \in H_2$ is the $s$-th eigenfunction of $C_2$. We define a test
statistic based on the projection
\begin{equation}
  T_N(r,s) = \sqrt{N}\sc{D_N (\hat u_r \tensor \hat v_s), \hat u_r \tensor \hat v_s}, \quad r,s \geq 1 \text{
    fixed},
  \label{eq:test-statistic-projected}
\end{equation}
where we have replaced the eigenfunctions of $C_1$ and $C_2$ by
their empirical counterpart, i.e.\ the Mercer decompositions of
$\widehat{C}_{1,N}$, respectively $\widehat{C}_{2,N}$, are given by
 $\widehat{C}_{1,N} = \sum_{i \geq 1} \hat \lambda_i \hat u_i \tensor \hat u_i$, respectively
$\widehat{C}_{2,N} = \sum_{j \geq 1} \hat \gamma_j \hat v_j \tensor
\hat v_j$. Notice that though the eigenfunctions of
$\widehat{C}_{1,N}$ and $\widehat{C}_{2,N}$ are defined up to a
multiplicative constant $\alpha = \pm 1$, our test statistic is well
defined. The key fact for the practical implementation of the method
is that $T_N(r,s)$ can be computed without the need to estimate (and
store in memory) the operator $D_N$, since
  $
    T_N(r,s) = \sqrt{N} \left( \frac{1}{N} \sum_{k=1}^N \sc{ X_k-\overline{X}_N, \widehat{v}_i \otimes
        \widehat{u}_j}^2 - \hat
      \lambda_r \hat \gamma_s\right).
  $
In particular, the computation of $T_N(r,s)$ does \emph{not} require an estimation of the full covariance
operator $C$, but only the estimation of the marginal covariance operators $C_1$ and $C_2$, and their
eigenstructure.

\subsection{{Asymptotics}}
The theoretical justification for using a projection of $D_N$ to define a
test procedure is that, under the null hypothesis $H_0: C = C_1
\kprod C_2$, we have $\tnorm{D_N} \convp 0$ as $N \rightarrow \infty$, i.e.\ $D_N$  convergences in probability to zero
with respect to the trace norm. In fact, we will show in
Theorem~\ref{thm:general-asymptotic-distn} that $\sqrt{N}D_N$  is
asymptotically Gaussian under the following regularity conditions:
\begin{cond}
  \label{cond:asymptotics}
 $X$ is a random element of the real Hilbert space $H$ satisfying
  \begin{equation}
    \sum_{j = 1}^\infty \left( \eee{ \sc{X, e_j}^4 } \right)^{1/4} < \infty,
    \label{eq:condition-asymptotics}
  \end{equation}
  for some orthonormal basis $(e_j)_{j \geq 1}$ of $H$.
\end{cond}
The implications of this condition can be better understood in light of
the following remark.
\begin{rmk}[\citet{mas2006sufficient}] \mbox{}
  \begin{enumerate}
    \item Condition \ref{cond:asymptotics} implies
      that $\ee{ \hnorm{X}^4 } < \infty$.
    \item If $\ee{ \hnorm X ^4 } < \infty$, then $\sqrt{N}(C_N - C)$
      converges in distribution to a Gaussian random element of $\HS{H}$ for $N \rightarrow \infty$, with respect to the
      Hilbert--Schmidt topology. Under Condition~\ref{cond:asymptotics}, a stronger form of convergence holds: $\sqrt{N}(C_N - C)$ converges in
      distribution to a random element of $\tc{H}$ for $N \rightarrow \infty$, with respect to the trace-norm topology.
    \item If $X$ is Gaussian and $(\lambda_j)_{j \geq 1}$ is the sequence of eigenvalues of its covariance
      operator, a sufficient condition for \eqref{eq:condition-asymptotics} is $\sum_{j \geq 1 }
      \sqrt{\lambda_j} < \infty$.
  \end{enumerate}
\end{rmk}
Condition~\ref{cond:asymptotics} requires fourth order moments rather than the usual second order moments often assumed in functional data, as in this case we are interested in investigating the variation of the second moment, and hence require assumptions on the fourth order structure.
Recall that $ \widehat{C}_N =
\frac{1}{N} \sum_{j = 1}^N (X_i - \overline{X}) \tensort (X_i -
\overline{X})$, where $\overline X = N^{-1} \sum_{k = 1}^N X_k$. The
following result establishes the asymptotic distribution of $D_N =
\widehat{C}_N - \frac{\trace_2(\widehat{C}_N)\kprod
\trace_1(\widehat{C}_N)}{\trace(\widehat{C}_N)}$:
\begin{thm}
  \label{thm:general-asymptotic-distn}
  Let $H_1, H_2$ be separable real Hilbert spaces,   $X_1,\ldots, X_N \sim X$ be i.i.d. random elements on
  $H_1 \tensor H_2$ with covariance operator $C$, and $\trace{C} \neq 0$.

  If $X$ satisfies Condition~\ref{cond:asymptotics} (with $H = H_1 \tensor H_2$),
  then, under the null hypothesis
  \[
    H_0 : C = C_1 \kprod C_2, \qquad C_1 \in \tc{H_1}, C_2 \in \tc{H_2},
  \]
  we have
  \begin{equation}
    \sqrt{N}\left(  \widehat{C}_N - \frac{\trace_2( \widehat{C}_N) \kprod \trace_1( \widehat{C}_N)}{\trace( \widehat{C}_N)} \right)
    \convd Z, \quad \text{as } N \rightarrow \infty,
    \label{eq:conv-in-d-general}
  \end{equation}
  where $Z$ is a Gaussian random element of $\tc{H_1 \tensor H_2}$ with mean zero, whose covariance structure
  is given in Lemma~\ref{lma:general-asymptotic-covariance}.
\end{thm}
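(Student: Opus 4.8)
The plan is to write $\sqrt{N}D_N$ as the increment of a single fixed map evaluated along $\widehat{C}_N$, and then transport the trace-norm central limit theorem for the sample covariance operator through this map by a delta-method argument. With $H = H_1 \tensor H_2$, define on the open set $\{T \in \tc{H} : \trace(T) \neq 0\}$ the map
\[
  \Phi(T) = T - \frac{\trace_2(T) \kprod \trace_1(T)}{\trace(T)},
\]
so that $\sqrt{N}D_N = \sqrt{N}\,\Phi(\widehat{C}_N)$. The first observation is that $\Phi(C)=0$ under $H_0$: since $C = C_1 \kprod C_2$, the identity $\trace(T)T = \trace_2(T) \kprod \trace_1(T)$ recalled after~\eqref{eq:separable-version-of-Cn} gives $\trace_2(C)\kprod\trace_1(C) = \trace(C)\,C$, so the subtracted term equals $C$. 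Hence $\sqrt{N}D_N = \sqrt{N}\left(\Phi(\widehat{C}_N) - \Phi(C)\right)$, and it suffices to differentiate $\Phi$ at $C$ and invoke the delta method. The input to that method is the trace-norm central limit theorem: under Condition~\ref{cond:asymptotics}, item~2 of the preceding remark gives $\sqrt{N}(\widehat{C}_N - C) \convd G$ with respect to the \emph{trace-norm} topology, where $G$ is a centred Gaussian element of $\tc{H}$ (the centring by the sample mean $\overline{X}$ contributes only an $O_P(1/N)$ perturbation and is absorbed into this statement). This stronger convergence, rather than mere Hilbert--Schmidt convergence, is exactly what is required, because $\Phi$ is built from $\trace$, $\trace_1$, $\trace_2$, which are bounded on $\tc{H}$ but not continuous on $\HS{H}$.

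I would then verify that $\Phi$ is Fréchet differentiable at $C$ with respect to the trace norm. Each ingredient is well behaved there: $\trace$ is a bounded linear functional, $\trace_1,\trace_2$ are bounded linear maps (Appendix~\ref{app:partial-traces}), the bilinear map $(A,B)\mapsto A\kprod B$ is bounded since $\tnorm{A\kprod B}=\tnorm{A}\tnorm{B}$, and $T\mapsto 1/\trace(T)$ is smooth away from $\trace(T)=0$. Thus $\Phi$ is a product and composition of Fréchet-differentiable maps, and, using $\trace_2(C)=\trace(C_2)C_1$, $\trace_1(C)=\trace(C_1)C_2$ and $\trace(C)=\trace(C_1)\trace(C_2)$, its derivative is the bounded linear operator on $\tc{H}$
\[
  D\Phi(C)[\Delta] = \Delta - \frac{\trace_2(\Delta)\kprod C_2}{\trace(C_2)} - \frac{C_1 \kprod \trace_1(\Delta)}{\trace(C_1)} + \frac{\trace(\Delta)}{\trace(C)}\,C .
\]
Since Fréchet differentiability implies Hadamard differentiability, the functional delta method applies to the tight limit $G$ and yields $\sqrt{N}D_N \convd D\Phi(C)[G]$. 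Because $D\Phi(C)$ is a bounded linear operator and $G$ is a centred Gaussian element of $\tc{H}$, the image $Z = D\Phi(C)[G]$ is a centred Gaussian element of $\tc{H_1\tensor H_2}$; its covariance is obtained by transporting the covariance of $G$ through $D\Phi(C)$, which is precisely the content of Lemma~\ref{lma:general-asymptotic-covariance}.

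The main obstacle is carrying out the differentiability/remainder step in the correct norm, since all the boundedness facts above hold for $\tnorm{\cdot}$ but fail for weaker norms. Concretely, if one prefers to avoid quoting an abstract delta-method theorem, I would expand $\trace(\widehat{C}_N)^{-1}$ as a geometric series about $\trace(C)\neq 0$ (legitimate since $\trace(\widehat{C}_N)\convp\trace(C)$ by continuity of the trace in $\tnorm{\cdot}$) and collect terms in $E_N := \widehat{C}_N - C$. The linear part reproduces $D\Phi(C)[E_N]$, while every quadratic-or-higher contribution is $O_P(1/N)$ in trace norm: this uses $\tnorm{E_N}\convp 0$ together with $\tnorm{A\kprod B}=\tnorm{A}\tnorm{B}$ and the boundedness of $\trace_1,\trace_2$ to bound each product term by a constant times a power of $\tnorm{E_N}$. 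Multiplying by $\sqrt{N}$ then leaves only the linear term in the limit, and Slutsky's theorem together with the continuous mapping of the bounded operator $D\Phi(C)$ gives the stated convergence $\sqrt{N}D_N \convd Z$.
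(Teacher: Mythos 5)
Your proposal is correct, and it reaches exactly the paper's limit: under $H_0$, using $\trace_2(C)=\trace(C_2)C_1$, $\trace_1(C)=\trace(C_1)C_2$ and $\trace(C)=\trace(C_1)\trace(C_2)$, your derivative $D\Phi(C)[\Delta]$ coincides with the operator
\[
  \Psi(Y,C) = Y + \frac{\trace(Y)C}{\trace(C)} - \frac{\trace_2(Y) \kprod \trace_1(C)}{\trace(C)} -
  \frac{\trace_2(C) \kprod \trace_1(Y)}{\trace(C)}
\]
that the paper identifies as the limit. The execution, however, is genuinely different. The paper never performs a Taylor expansion and has no remainder to control: it verifies the \emph{exact} finite-sample identity $\sqrt{N}D_N = \Psi\bigl(\sqrt{N}(\widehat{C}_N - C), \widehat{C}_N\bigr)$, where $\Psi(T,S)$ is linear in its first argument and depends on the data only through $\trace(S)$, $\trace_2(S)$; the conclusion then follows from the trace-norm CLT of \citet{mas2006sufficient} together with the continuous mapping theorem, continuity of $\Psi$ holding wherever $\trace(S)\neq 0$. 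Your route is the generic functional delta method: Fréchet (hence Hadamard) differentiability of $\Phi$ at $C$ in $\tnorm{\cdot}$, plus either an abstract delta-method theorem or your geometric-series expansion with all quadratic-and-higher terms shown to be $O_P(1/N)$ in trace norm. What the paper's approach buys is the complete absence of remainder analysis, at the price of guessing the clever decomposition; what yours buys is mechanical generality (it would work for any smooth functional of $\widehat{C}_N$, not just this one) and, notably, a cleaner Gaussianity argument: since $Z = D\Phi(C)[G]$ is the image of a Gaussian element under a \emph{single} bounded linear operator on $\tc{H_1 \tensor H_2}$, it is Gaussian immediately, whereas the paper argues summand-by-summand via Propositions~\ref{prop:partial-trace-of-Gaussian} and \ref{prop:kronecker-product-of-gaussian-and-fixed-operator} (an argument that is valid only because all four summands are linear images of the same $Y$, i.e.\ jointly Gaussian --- your formulation makes this point transparent). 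Both routes rest on the same essential input, namely that Condition~\ref{cond:asymptotics} upgrades the CLT for $\sqrt{N}(\widehat{C}_N - C)$ from the Hilbert--Schmidt to the trace-norm topology, without which none of $\trace$, $\trace_1$, $\trace_2$ would be continuous on the relevant space.
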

Condition~\ref{cond:asymptotics} is used here because we need
$\sqrt{N}(\widehat{C}_N - C)$ to converge in distribution in the
topology of the space $\tc{H_1 \tensor H_2}$; it could be replaced
by any (weaker) condition ensuring such convergence. The assumption
$\trace{C} \neq 0$ is equivalent to assuming that $X$ is not almost
surely constant.
\begin{proof}[Proof of Theorem~\ref{thm:general-asymptotic-distn}]
  First, notice that $C = C_1 \kprod C_2 = \frac{\trace_2( C) \kprod \trace_1( C)}{\trace( C)}$ under $H_0$.
   Therefore, using the linearity of the partial trace, we get
  \begin{align*}
    \sqrt{N}\left(  \widehat{C}_N - \frac{\trace_2( \widehat{C}_N) \kprod \trace_1( \widehat{C}_N)}{\trace( \widehat{C}_N)} \right)
    &=  \sqrt{N}(  \widehat{C}_N - C )
    \\ & \quad +  \sqrt{N} \left(\frac{\trace_2( C) \kprod \trace_1( C)}{\trace( C)} + \frac{\trace_2( \widehat{C}_N) \kprod \trace_1( \widehat{C}_N)}{\trace( \widehat{C}_N)} \right)
    \\ &=  \sqrt{N}(  \widehat{C}_N - C ) + \frac{\trace\left( \sqrt{N}(\widehat{C}_N - C)
      \right) C }{\trace(\widehat{C}_N)}
    \\ & \quad -   \frac{\trace_2\left( \sqrt{N}(\widehat{C}_N - C) \right) \kprod \trace_1( C)}{\trace( \widehat{C}_N)}
    \\ & \quad - \frac{\trace_2( \widehat{C}_N) \kprod \trace_1\left( \sqrt{N}(\widehat{C}_N - C) \right)}{\trace( \widehat{C}_N)}.
    \\ &= \Psi\left( \sqrt{N}(\widehat{C}_N - C), \widehat{C}_N \right),
  \end{align*}
  where
  \[
    \Psi(T,S) = T + \frac{\trace(T)C}{\trace(S)} - \frac{\trace_2(T) \kprod \trace_1(C)}{\trace(S)} -
    \frac{\trace_2(S) \kprod \trace_1(T)}{\trace(S)};
  \quad T, S \in \tc{H_1 \tensor H_2}.
  \]
  Notice that the function $\Psi: \tc{H_1 \tensor H_2} \times \tc{H_1 \tensor H_2} \rightarrow \tc{H_1 \tensor
  H_2}$ is continuous at $(T,S) \in \tc{H_1 \tensor H_2} \times \tc{H_1 \tensor H_2}$ in each coordinate,
  with respect to the trace norm, provided  $\trace(S) \neq 0$.
Since $\sqrt{N}(\widehat{C}_N - C)$ converges in
distribution---under Condition~\ref{cond:asymptotics}---to a
Gaussian random element $Y \in \tc{H_1 \tensor H_2}$, with respect
to the trace norm $\tnorm{\cdot}$
\citep[see][Proposition~5]{mas2006sufficient}, $\Psi\left(
\sqrt{N}(\widehat{C}_N - C), \widehat{C}_N \right)$ converges in
distribution to
\begin{equation}
  \label{eq:limiting-distn-in-proof}
  \Psi(Y,C) = Y + \frac{\trace(Y)C}{\trace(C)} - \frac{\trace_2(Y) \kprod \trace_1(C)}{\trace(C)} -
  \frac{\trace_2(C) \kprod \trace_1(Y)}{\trace(C)}
\end{equation}
by the continuous mapping theorem in metric spaces
\citep{billingsley:1999}. $\Psi(Y, C)$ is Gaussian because each of
the  summands of \eqref{eq:limiting-distn-in-proof} are Gaussian.
Indeed, the first and second summands are obviously Gaussian, and
the last two summands are Gaussian by
Proposition~\ref{prop:partial-trace-of-Gaussian}, and
Proposition~\ref{prop:kronecker-product-of-gaussian-and-fixed-operator}.
\end{proof}
We can now give the asymptotic distribution of $T_N(r,s)$, defined
in \eqref{eq:test-statistic-projected} as the (scaled) projection of
$D_N$ in a direction given by the tensor product of the empirical
eigenfunctions $\hat{u}_r$ and $\hat{v}_s$. The proof of the
following result is given in Appendix~\ref{app:proofs}.
\begin{cor}
  \label{cor:asymptotic-distribution-Tn-general-case}
  Under the conditions of Theorem~\ref{thm:general-asymptotic-distn}, if
  $\mathcal I \subset \left\{ (i,j) : i,j \geq 1 \right\}$ is a \emph{finite} set of indices such that
  $\lambda_r \gamma_s > 0$ for each $(r,s) \in \mathcal I$, then
  \begin{equation*}
    \left( T_N(r,s) \right)_{(r,s) \in \mathcal I} \convd N(0, \Sigma), \quad \text{as } N \rightarrow \infty.
  \end{equation*}
  \noindent This means that the vector  $\left( T_N(r,s) \right)_{(r,s) \in \mathcal I}$ is asymptotically
  multivariate Gaussian, with asymptotic variance-covariance matrix $\Sigma = \left( \Sigma_{ (r,s),
        (r',s')  } \right)_{(r,s), (r',s') \in \mathcal I}$ is given by
  \begin{align*}
    \Sigma_{ (r,s), (r',s')  } &= \tb_{rsr's'}
    + \frac{\alpha_{rs}\tb_{r's'\cdot \cdot}  + \alpha_{r's}\tb_{r \cdot \cdot s'} + \alpha_{rs'}\tb_{r' \cdot \cdot s} + \alpha_{r's'}\tb_{rs\cdot \cdot}}{\trace(C)}
    \\ & \qquad +  \frac{\alpha_{rs}\alpha_{r's'} \tb_{\cdot\cdot\cdot\cdot}}{ \trace(C)^2}
     + \frac{\lambda_r \lambda_{r'} \tb_{\cdot s \cdot s'}}{\trace(C_1)^2}
 + \frac{\gamma_s \gamma_{s'} \tb_{r \cdot r' \cdot }}{\trace(C_2)^2}
    \\ & \qquad - \frac{\lambda_r \tb_{r's'\cdot s} + \lambda_{r'} \tb_{r s \cdot s'}}{\trace(C_1)}
     - \frac{\gamma_s \tb_{r's' r \cdot } + \gamma_{s'} \tb_{r s r' \cdot } }{\trace(C_2)}
     \\ & \qquad - \frac{\alpha_{rs}}{\trace(C)}
      \left( \frac{\gamma_{s'} \tb_{r' \cdot \cdot \cdot
          }}{\trace(C_2)} + \frac{\lambda_{r'} \tb_{\cdot s' \cdot \cdot }}{\trace(C_1)} \right)
    \\ & \qquad - \frac{\alpha_{r's'} }{\trace(C) } \left( \frac{\gamma_{s} \tb_{r \cdot \cdot \cdot
         }}{\trace(C_2)} + \frac{\lambda_{r} \tb_{\cdot s \cdot \cdot }}{\trace(C_1)} \right)
  \end{align*}
  where $\mu = \eee{X}$, $\alpha_{rs} = \lambda_r \gamma_s$,
  \[
    \tb_{ijkl} = \eee{\sc{X - \mu, u_i \tensor v_j}^2 \sc{X - \mu,
        u_k \tensor v_l}^2},
  \]
  and `$\,\cdot$' denotes summation over the corresponding index, i.e.\ $\tb_{r\cdot
    jk} = \sum_{i \geq 1} \tb_{rijk}$.
\end{cor}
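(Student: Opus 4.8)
The plan is to reduce the randomly–projected statistic to a fixed continuous linear functional of the Gaussian limit $Z$ of $\sqrt{N}D_N$ supplied by Theorem~\ref{thm:general-asymptotic-distn}, and then to evaluate the covariance of that functional in the coordinates $\xi_{ij}=\sc{X-\mu,\,u_i\tensor v_j}$. First I would remove the randomness in the projection direction. Writing $\phi_{rs}=u_r\tensor v_s$, $\hat\phi_{rs}=\hat u_r\tensor\hat v_s$ and using $\sc{Af,f}=\HSsc{A,f\tensort f}$, I bound
\[
  \bigl| T_N(r,s) - \sqrt{N}\,\sc{D_N\phi_{rs},\phi_{rs}} \bigr|
  \le \hsnorm{\sqrt{N} D_N}\;\hsnorm{\hat\phi_{rs}\tensort\hat\phi_{rs} - \phi_{rs}\tensort\phi_{rs}}.
\]
The first factor is $O_p(1)$, since it converges in distribution in trace norm (hence in Hilbert--Schmidt norm) by the theorem, while consistency of the empirical eigenfunctions makes the second factor $o_p(1)$; here it is the \emph{marginal} eigenvalues $\lambda_r,\gamma_s$ that must be simple (and positive, as assumed), not $\alpha_{rs}=\lambda_r\gamma_s$, and the sign indeterminacy of $\hat u_r,\hat v_s$ is immaterial because $\hat\phi_{rs}\tensort\hat\phi_{rs}$ is sign-invariant. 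Hence $T_N(r,s)=\HSsc{\sqrt{N} D_N,\phi_{rs}\tensort\phi_{rs}}+o_p(1)$, and Slutsky together with the continuous mapping theorem give joint convergence of $\left(T_N(r,s)\right)_{(r,s)\in\mathcal I}$ to $\left(\HSsc{Z,\phi_{rs}\tensort\phi_{rs}}\right)_{(r,s)\in\mathcal I}$, a finite-dimensional linear image of the centred Gaussian $Z$, hence multivariate Gaussian with mean zero. It remains to identify $\Sigma$.

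Next I would make the limiting functional explicit. With $Z=\Psi(Y,C)$ as in the proof of Theorem~\ref{thm:general-asymptotic-distn}, the fact that $\phi_{rs}$ is an eigenfunction of $C=C_1\kprod C_2$ with eigenvalue $\alpha_{rs}$, and the identities $\trace_1(C)=\trace(C_1)C_2$, $\trace_2(C)=\trace(C_2)C_1$, $\trace(C)=\trace(C_1)\trace(C_2)$, projecting each of the four summands of $\Psi$ yields
\[
  \HSsc{Z,\phi_{rs}\tensort\phi_{rs}}
  = \sc{Y\phi_{rs},\phi_{rs}}
  + \frac{\alpha_{rs}}{\trace(C)}\trace(Y)
  - \frac{\gamma_s}{\trace(C_2)}\sc{\trace_2(Y)u_r,u_r}
  - \frac{\lambda_r}{\trace(C_1)}\sc{\trace_1(Y)v_s,v_s},
\]
call this $\ell_{rs}(Y)$. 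Because $Y$ is the central-limit limit of $\sqrt{N}(\widehat C_N-C)$, its covariance functional coincides with that of a single centred summand $W=(X-\mu)\tensort(X-\mu)$ (the deterministic $C$ does not contribute to covariances), so $\Sigma_{(r,s),(r',s')}=\cov\bigl(\ell_{rs}(W),\ell_{r's'}(W)\bigr)$.

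To evaluate this I would pass to coordinates. Expanding $X-\mu=\sum_{ij}\xi_{ij}(u_i\tensor v_j)$ in the orthonormal eigenbasis and using the partial-trace formula $\trace_2\bigl((a\tensor b)\tensort(c\tensor d)\bigr)=\sc{b,d}\,a\tensort c$ (and symmetrically for $\trace_1$) from Appendix~\ref{app:partial-traces}, the four pieces evaluated at $W$ become the quadratic forms $\sc{W\phi_{rs},\phi_{rs}}=\xi_{rs}^2$, $\trace(W)=\sum_{ij}\xi_{ij}^2$, $\sc{\trace_2(W)u_r,u_r}=\sum_j\xi_{rj}^2$ and $\sc{\trace_1(W)v_s,v_s}=\sum_i\xi_{is}^2$. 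Thus $\ell_{rs}(W)$ is an explicit quadratic form in the scores, with coefficients $\alpha_{rs}/\trace(C)$, $\gamma_s/\trace(C_2)$ and $\lambda_r/\trace(C_1)$.

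Finally I would expand the bilinear covariance. Since $\eee{\xi_{ij}^2}=\alpha_{ij}$ and $\cov(\xi_{ij}^2,\xi_{kl}^2)=\tb_{ijkl}-\alpha_{ij}\alpha_{kl}$, the covariance of the two quadratic forms is a sum of sixteen such terms, each summed over the dotted indices. Collecting the $\tb$-parts and using the symmetry $\tb_{ijkl}=\tb_{klij}$ together with $\sum_{ij}\alpha_{ij}=\trace(C)$, $\sum_i\lambda_i=\trace(C_1)$, $\sum_j\gamma_j=\trace(C_2)$ reproduces each term of the stated $\Sigma_{(r,s),(r',s')}$; for instance the two mixed cross-products contribute $\alpha_{r's}\tb_{r\cdot\cdot s'}/\trace(C)$ and $\alpha_{rs'}\tb_{r'\cdot\cdot s}/\trace(C)$. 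A useful internal check is that every one of the sixteen $-\alpha_{ij}\alpha_{kl}$ subtractions collapses to $\pm\alpha_{rs}\alpha_{r's'}$, with eight plus and eight minus signs, so they cancel exactly, leaving only the fourth-moment quantities $\tb$. The bulk of the work is this sixteen-term bookkeeping; the one genuinely delicate point is the opening Slutsky reduction, which depends on consistency of the empirical marginal eigenfunctions and hence on the simplicity of $\lambda_r$ and $\gamma_s$.
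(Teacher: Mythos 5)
Your proof is correct, and its skeleton matches the paper's: first replace the empirical projection directions by the population ones via a Slutsky/continuous-mapping argument, then identify the covariance of the Gaussian limit by expanding in the Karhunen--Lo\`eve scores $\xi_{ij}$. The paper phrases the reduction as $T_N(r,s)=\trace[(\hat A_r\kprod \hat B_s)\sqrt{N}D_N]$ with $\hat A_r\kprod\hat B_s \convp A_r\kprod B_s$ in trace norm; your Cauchy--Schwarz bound in the Hilbert--Schmidt inner product is an equivalent way of doing the same step. Where you genuinely diverge is in the covariance identification: the paper first invokes Lemma~\ref{lma:general-asymptotic-covariance}, a general formula for $\eee{\trace[(A_1\kprod A_2)Z]\,\trace[(B_1\kprod B_2)Z]}$ in terms of the fourth-moment operator $\Gamma$, and only then specializes it to the rank-one projections; you instead project the decomposition $Z=\Psi(Y,C)$ immediately, collapsing each limit to the four-term scalar functional $\ell_{rs}(Y)$, and then compute the sixteen covariances of the quadratic forms $\xi_{rs}^2$, $\sum_{ij}\xi_{ij}^2$, $\sum_j\xi_{rj}^2$, $\sum_i\xi_{is}^2$. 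I checked the bookkeeping: your $\ell_{rs}$ is exactly the projection of $\Psi(Y,C)$ (using $\trace_1(C)=\trace(C_1)C_2$, $\trace_2(C)=\trace(C_2)C_1$), the sixteen $\tb$-terms reproduce the stated $\Sigma$ term by term (in particular the $R\times P$ cross-terms give $\alpha_{r's}\tb_{r\cdot\cdot s'}/\trace(C)$ and $\alpha_{rs'}\tb_{r'\cdot\cdot s}/\trace(C)$), and the claimed cancellation of the $\alpha_{ij}\alpha_{kl}$ subtractions---eight with each sign---is correct. What your route buys is a more elementary, self-contained calculation that never needs the general lemma; what it loses is the reusable covariance formula, which the paper also cites in the statement of Theorem~\ref{thm:general-asymptotic-distn}. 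One further point in your favour: you make explicit that the opening reduction requires consistency of $\hat u_r, \hat v_s$, hence simplicity of the \emph{marginal} eigenvalues $\lambda_r$ and $\gamma_s$---a hypothesis the corollary does not state, and which the paper's proof passes over silently when it asserts that $\hat A_r \kprod \hat B_s$ consistently estimates $A_r\kprod B_s$.
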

We note that the asymptotic variance-covariance of  $\left( T_N(r,s) \right)_{(r,s) \in \mathcal I}$ depends on the
second and fourth order moments of $X$, which is not surprising
since it is based on estimators of the covariance of $X$. Under the
additional assumption that $X$ is Gaussian, the asymptotic variance-covariance
of $\left( T_N(r,s) \right)_{(r,s) \in \mathcal I}$ can be entirely expressed in terms of the covariance
operator $C$.
The proof of the following result is given in
Appendix~\ref{app:proofs}.
\begin{cor}
  \label{cor:asymptotic-distn-Tn-gaussian-case}
  Assume the conditions of Theorem~\ref{thm:general-asymptotic-distn} hold, and that $X$ is Gaussian.
 If
  $\mathcal I \subset \left\{ (i,j) : i,j \geq 1 \right\}$ is a \emph{finite} set of indices such that
  $\lambda_r \gamma_s > 0$ for each $(r,s) \in \mathcal I$, then
  \begin{equation*}
    \left( T_N(r,s) \right)_{(r,s) \in \mathcal I} \convd N(0, \Sigma), \quad \text{as } N \rightarrow \infty.
  \end{equation*}
  where
  \begin{align*}
    \Sigma_{(r,s), (r',s')} & = \frac{2 \lambda_r \lambda_{r'} \gamma_s \gamma_{s'}}{\trace(C)^2} \left( \delta_{r r'} \trace(C_1)^2 + \hsnorm{C_1}^2 - ( \lambda_r + \lambda_{r'}) \trace(C_1) \right)
      \\ & \qquad \times   \left( \delta_{s s'}\trace(C_2)^2 + \hsnorm{C_2}^2 - ( \gamma_s + \gamma_{s'})
        \trace(C_2) \right),
  \end{align*}
  and $\delta_{ij} = 1$ if $i=j$, and zero otherwise.
{In particular, notice that $\Sigma$ itself is separable.}
\end{cor}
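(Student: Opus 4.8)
The plan is to deduce the statement from Corollary~\ref{cor:asymptotic-distribution-Tn-general-case}: the convergence $\left( T_N(r,s) \right)_{(r,s) \in \mathcal I}\convd N(0,\Sigma)$ is already established there, and the Gaussian assumption changes nothing in the limit except that it makes the entries $\Sigma_{(r,s),(r',s')}$ of the general formula computable in closed form. The entire task therefore reduces to evaluating the fourth-order quantities $\tb_{ijkl}=\eee{\sc{X-\mu,u_i\tensor v_j}^2\sc{X-\mu,u_k\tensor v_l}^2}$ and all the contracted sums $\tb_{r's'\cdot\cdot}$, $\tb_{\cdot s\cdot s'}$, $\tb_{\cdot\cdot\cdot\cdot}$, etc., that appear in the general expression, and then substituting and simplifying.

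First I would record the second-order structure of the projection coefficients. Writing $\xi_{ij}=\sc{X-\mu,u_i\tensor v_j}$, these are jointly Gaussian and centred, with $\eee{\xi_{ij}\xi_{kl}}=\sc{C(u_i\tensor v_j),u_k\tensor v_l}$. Under $H_0$ we have $C=C_1\kprod C_2$, and since $u_i,v_j$ are the eigenfunctions of $C_1,C_2$ with eigenvalues $\lambda_i,\gamma_j$, the definition \eqref{eq:kprod-defn} gives $C(u_i\tensor v_j)=\lambda_i\gamma_j\,(u_i\tensor v_j)$; hence $\eee{\xi_{ij}\xi_{kl}}=\lambda_i\gamma_j\,\delta_{ik}\delta_{jl}$. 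In other words the $\xi_{ij}$ are independent centred Gaussians with variances $\lambda_i\gamma_j$. I would then invoke Isserlis' (Wick's) formula, which for centred jointly Gaussian variables gives $\eee{\xi_{ij}^2\xi_{kl}^2}=\eee{\xi_{ij}^2}\eee{\xi_{kl}^2}+2\eee{\xi_{ij}\xi_{kl}}^2$, yielding
\[
  \tb_{ijkl}=\lambda_i\gamma_j\lambda_k\gamma_l+2\lambda_i^2\gamma_j^2\,\delta_{ik}\delta_{jl}.
\]

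The crucial structural feature here is that $\tb_{ijkl}$ factorises into a part depending only on the $\lambda$-indices and a part depending only on the $\gamma$-indices, which is ultimately the reason $\Sigma$ turns out to be separable. Using $\sum_i\lambda_i=\trace(C_1)$, $\sum_i\lambda_i^2=\hsnorm{C_1}^2$ (and the analogues for $\gamma$ and $C_2$), together with $\trace(C)=\trace(C_1)\trace(C_2)$, each contracted sum collapses to a short closed-form expression; for instance $\tb_{r's'\cdot\cdot}=\lambda_{r'}\gamma_{s'}\trace(C)+2\lambda_{r'}^2\gamma_{s'}^2$, and similarly for the others.

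The final, and most laborious, step is to substitute all these evaluated contractions into the general formula for $\Sigma_{(r,s),(r',s')}$ and simplify. I expect this bookkeeping to be the \emph{main obstacle}: the general expression contains many terms carrying different denominators $\trace(C)$, $\trace(C_1)$ and $\trace(C_2)$, and the collapse to the claimed product---a constant $2\lambda_r\lambda_{r'}\gamma_s\gamma_{s'}/\trace(C)^2$ times one factor depending on $(r,r')$ through $C_1$ and one factor depending on $(s,s')$ through $C_2$---relies on extensive cancellation. I would organise the computation by grouping terms according to which denominator they carry, and throughout use the factorisation of $\tb_{ijkl}$ to keep the $C_1$-contributions and the $C_2$-contributions separate; this factorisation is exactly what forces $\Sigma$ to be separable. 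As a final sanity check I would verify that the resulting expression is invariant under the rescaling $C_1\mapsto\alpha C_1$, $C_2\mapsto\alpha^{-1}C_2$, consistent with the identifiability noted after \eqref{eq:abstract-definition-of-separable}.
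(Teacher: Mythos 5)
Your proposal is correct and takes essentially the same route as the paper's own proof: both specialize Corollary~\ref{cor:asymptotic-distribution-Tn-general-case} by evaluating the fourth-order moments under Gaussianity---your Wick-formula expression $\tb_{ijkl}=\alpha_{ij}\alpha_{kl}+2\alpha_{ij}^2\delta_{ik}\delta_{jl}$ is exactly the paper's case split ($3\alpha_{kl}^2$ when $(i,j)=(k,l)$, $\alpha_{ij}\alpha_{kl}$ otherwise)---then computing the contracted sums such as $\tb_{rs\cdot\cdot}=2\alpha_{rs}^2+\alpha_{rs}\trace(C)$ and substituting into the general covariance formula. The paper also leaves the final algebraic collapse as ``direct calculations,'' so your acknowledgment that this bookkeeping remains is consistent with the paper's own level of detail.
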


It will be seen in the next section that even in the case where we use a bootstrap test, knowledge of the asymptotic distribution can be very useful to establish a pivotal bootstrap test, which will be seen to have very good performance in simulation.

\section{Separability Tests and Bootstrap Approximations}
\label{sec:bootstrap}

In this section we use the estimation procedures and the theoretical
results presented in Section \ref{s:SepCovs} to develop a test for
$H_0 : C = C_1 \kprod C_2$, against the alternative that $C$ cannot
be written as a tensor product.

First, it is straightforward to
define a testing procedure when $X$ is Gaussian. Indeed, if we let
\begin{equation}
  \label{eq:asymptotically-sigma2-times-chi2-test}
G_N(r,s) = T_N^2(r,s)
= N \left( \frac{1}{N} \sum_{k=1}^N \sc{ X_k-\overline{X}, \widehat{u}_r \otimes \widehat{v}_s}^2 - \hat
  \lambda_r \hat \gamma_s\right)^2,
\end{equation}
 and
\begin{multline}
  \hat \sigma^2(r,s) = \left( {\trace(\widehat{C}_{1,N})^2\trace(\widehat{C}_{2,N})^2 } \right)^{-1}  2 \hat
    \lambda_r^2 \hat \gamma_s^2
    \\ \times \left( \trace(\widehat{C}_{1,N})^2 + \hsnorm{\widehat{C}_{1,N}}^2 - 2
      \hat \lambda_r \trace(\widehat{C}_{1,N}) \right)
    \\ \times \left( \trace(\widehat{C}_{2,N})^2 + \hsnorm{\widehat{C}_{2,N}}^2 - 2 \hat \gamma_s
      \trace(\widehat{C}_{2,N}) \right),
\end{multline}
then $\hat \sigma^{-2}(r,s) G_N(r,s)$ is asymptotically $\chi^2_1$
distributed, and $\{G_N^2(r,s) > \hat \sigma^2(r,s)
\chi^2_{1}(1-\alpha)\}$, where $\chi^2_1(1-\alpha)$ is the
$1-\alpha$ quantile of the $\chi^2_1$ distribution, would be a
rejection region of level approximately $\alpha$, for $\alpha \in
[0,1]$ and $N$ large.

Apart for the distributional assumption for $X$ to be Gaussian, this
approach suffers also the important limitation that it only tests
the separability assumption along \emph{one} eigendirection. It is
possible to extend this approach to take into account several
eigendirections. {
  For simplicity, let us consider the case $\mathcal I = \left\{ 1, \ldots, p \right\} \times \left\{ 1,
    \ldots, q \right\}$. Denote by $\bT_N(\mathcal I)$ the $p \times q$ matrix with entries
  $( \bT_N(\mathcal I) )_{ij} = T_N(i,j)$, and let
    \begin{equation}
    \label{eq:test-stat-multiple-directions-stud-full}
      \widetilde G_N(\mathcal I) = \left| \hat \Sigma_{L,\mathcal I}^{-1/2} \bT_N(\mathcal I) \hat
      \Sigma_{R,\mathcal I}^{-\tp/2}  \right|^2,
    \end{equation}
  where $|A|^2$ denotes the sum of squared entries of a matrix $A$, $A^{-1/2}$ denotes the inverse of (any)
  square root of the matrix $A$, $A^{-\tp/2} = (A^{-1/2})^\tp$, and the matrices $\hat
  \Sigma_{L,\mathcal I}$, respectively $\hat \Sigma_{R,\mathcal I}$, which are estimators of the row, resp. column,
  asymptotic covariances of $\bT_N(\mathcal I)$,  are defined in Appendix~\ref{app:implementation}.
  Then $\widetilde G_N(\mathcal I)$ is asymptotically $\chi^2_{pq}$ distributed.
  In the
simulation studies (Section~\ref{sec:simulations}), we consider also an approximate version of
this Studentized test statistics,
  $
    \widetilde G_N^a(\mathcal I) = \sum_{(r,s) \in \mathcal I} T_N^2(r,s)/\hat \sigma^2(r,s),
  $
which are  obtained simply by standardizing marginally each entry $T_N^2(r,s)$,
thus ignoring the dependence between the test statistics associated with different directions.
In order to assess the advantage of Studentization, we also consider the
non-Studentized test statistic
\[
  G_N(\mathcal I) = \sum_{(r,s) \in \mathcal I} T_N^2(r,s).
\]
 The computation details for $\widetilde G_N$,
 $T_N$, $\hat \sigma^2(r,s)$, $\hat \Sigma_{L, \mathcal I}$ and $\hat \Sigma_{R, \mathcal I}$ are described in
Appendix~\ref{app:implementation}.
}

\begin{rmk}
  Notice that the only test whose asymptotic distribution is parameter free is $\widetilde G_N(\mathcal
  I)$, under Gaussian assumptions. It would in principle be possible to construct an analogous test without
  the Gaussian assumptions (using Corollary~\ref{cor:asymptotic-distribution-Tn-general-case}). However, due to the
  large number of parameters that would need to be estimated in this case, we expect the asymptotics to come
  into force only for very large sample sizes (this is actually the case under Gaussian assumptions, specially
  if the set of projections $\mathcal I$ is large, as can be seen in
  Figure~\ref{fig:simulations-gaussian-and-t-R3}). For these reasons, we shall
  investigate bootstrap approximations to the test statistics.
\end{rmk}

{The choice of the number of eigenfunctions $K$ (the number
of elements in $\mathcal I$) onto which one should project is not
trivial. The popular choice of including enough eigenfunctions to
explain a fixed percentage of the variability in the dataset may seem
inappropriate in this context, because under the alternative
hypothesis there is no guarantee that the separable eigenfunctions
explain that percentage of {variation}.

For fixed $K$, notice that the test at least guarantees the separability in the subspace of the respective $K$ eigenfunctions, which is where the following analysis will be often focused. On the other hand, since our test statistic looks at an estimator of the non-separable component
\[
  D = C - \frac{\trace_2(C) \kprod \trace_1(C)}{\trace(C)},
\]
restricted to the
subspace spanned by the eigenfunctions $u_r \tensor v_s$, the test takes small values (and thus lacks power) when
\[
  \sc{D (u_r \tensor v_s), u_r \tensor v_s} = \HSsc{D, (u_r \tensort u_r) \kprod (v_s \tensort v_s)} = 0,
\]
that is when the non-separable component $D$ is orthogonal to
\[
  (u_r \tensort u_r) \kprod (v_s \tensort v_s)
\]
with respect to the Hilbert--Schmidt inner product.
Thus the proposed
test statistic $G_N(\mathcal I)$ is powerful when $D$ is
not orthogonal to the subspace
\[
  V_{\mathcal I} = \mathrm{span}\{ (u_i \tensort u_i) \kprod (v_j \tensort v_j), (i,j)\in \mathcal{I}\},
\]
and in general the power of the test for finite
sample size depends on the properly rescaled norm  of the projection of $D$ onto $V_{\mathcal I}$.

In practice, it seems reasonable to use the subset of eigenfunctions
that it is possible to estimate accurately given the available
sample sizes. The accuracy of the estimates for the eigendirections
can be in turn evaluated with bootstrap methods, see e.g.\
\citet{Hall2006} for the case of functional data. A good strategy
may also be to consider more than one subset of eigenfunctions and
then summarize the response obtained from the different tests using
a Bonferroni correction.}

As an alternative to these test statistics (based on
projections of $D_N = C_N - C_{1,N} \kprod C_{2,N}$), we consider
also a test based on the squared Hilbert--Schmidt norm
of $D_N$, i.e.\ $\hsnorm{D_N}^2$, whose null distribution will be approximated by a bootstrap procedure (this
test will be referred to as
\emph{Hilbert--Schmidt test} hereafter). Though it seems that such
tests would require one to store the full sample covariance of the
data (which could be infeasible), we describe in Appendix~\ref{app:implementation}
 a way of circumventing such problem, although the computation of each entry  of
the full covariance is still needed.  Therefore this could be used only for applications in which the
dimension of the discretized covariance matrix is not too large.

In the following, we propose also a bootstrap approach to approximate the distribution of the test
statistics  $\widetilde{G}_N(\mathcal{I})$, $\widetilde{G}^a_N(\mathcal I)$ and $G_N(\mathcal{I})$, with the aim
to improve the finite sample  properties of the procedure and to relax the distributional assumption on $X$.

\subsection{Parametric Bootstrap}

If we assume we know the distribution of $X$ up to its mean
$\mu$ and its covariance operator $C$, i.e.\ $ X \sim F(\mu; C)$,  we can approximate
the distribution of $\widetilde G_N(\mathcal I)$, $\widetilde{G}^a_N(\mathcal I)$, $G_N(\mathcal I)$  and
$\hsnorm{D_N}^2$ under the separability
hypothesis via a parametric bootstrap procedure.
Since
$C_{1,N} \kprod C_{2,N}$, respectively $\overline X$,  is an estimate of $C$, respectively $\mu$,
we simulate $B$ bootstrap samples
$X^b_1,\ldots,X^b_N \simiid F\left( \:\overline X, C_{1,N} \kprod C_{2,N}\right)$, for $b=1,\ldots,B$. For each sample, we compute
$H_N^b = H_N(X_1^b, \ldots, X_N^b)$, where $H_N = G_N(\mathcal I)$, $H_N = \widetilde
G_N(\mathcal I)$, $H_N = \widetilde
G^a_N(\mathcal I)$ respectively $H_N = \hsnorm{D_N}^2$, if we wish to use the non-Studentized projection test,
the Studentized projection test, the approximated Studentized version or the Hilbert--Schmidt test, respectively.
A formal description of the algorithm for obtaining the
$p$-value of the test based on the statistic $H_N = H_N(X_1,\ldots,
X_N)$ with the parametric bootstrap can be found in Appendix~\ref{app:implementation}, along with the details
for the computation of
$H_N$. We highlight that this procedure does not ask for the
estimation of the full covariance structure, but only of its
separable approximation, with the exception of the Hilbert--Schmidt
test (and even in this case, it is possible to avoid the storage of
the full covariance).

\subsection{Empirical Bootstrap}

In many applications it is not possible to assume a distribution for
the random element $X$, and a non-parametric approach is therefore
needed. In this setting, we can use the empirical bootstrap to
estimate the distribution of the test statistic
$G_N(\mathcal I), \widetilde G_N(\mathcal I)$ or $\hsnorm{D_N}^2$ under the null
hypothesis $H_0 : C = C_1 \kprod C_2$. Let $H_N$ denote the test statistic whose distribution is of interest.
Based on an i.i.d. sample $X_1, \ldots, X_N \sim X$, we wish to approximate the distribution of $H_N$ with
 the distribution of some test statistic $\Delta_N^* = \Delta_N(X_1^*, \ldots, X_N^*)$,
where $X_1^*, \ldots, X_N^*$ is obtained by drawing with replacement from the set $\left\{ X_1,\ldots, X_N
\right\}$.
Though it is tempting to use $\Delta^*_N = H_N(X_1^*, \ldots, X_N^*)$, this is not an appropriate choice.
Indeed, let us look at the case $H_N = G_N(i,j)$. Notice that the true covariance of $X$ is
\begin{equation}
  \label{eq:general-C-with-non-separable-part}
  C = \frac{\trace_2(C) \kprod \trace_1(C)}{\trace(C)} + D,
\end{equation}
where  $D$ is a possibly non-zero operator, and that
\[
  H_N^* = G_N(i,j|X_1^*, \ldots, X_N^*) = N \sc{(C_N^* - C_{1,N}^* \kprod C_{2,N}^*) (\hat u_i \tensor \hat v_j), \hat u_i \tensor
    \hat  v_j}^2,
\]
where $C_N^*=C_N(X_1^*, \ldots, X_N^*), C_{1,N}^*=C_{1,N}(X_1^*, \ldots, X_N^*),$ and $C_{2,N}^*=C_{2,N}(X_1^*,
\ldots, X_N^*)$.
Since $(C_N^* - C_{1,N}^* \kprod C_{2,N}^*) \approx (C_N - C_{1,N} \kprod C_{2,N}) \approx D$, the statistic
$H_N^*$ would approximate the distribution of $H_N$ under the hypothesis
\eqref{eq:general-C-with-non-separable-part}, which is not what we want.
We therefore propose the following choices of $\Delta_N^* = \Delta_n(X_1^*, \ldots, X_N^*; X_1, \ldots, X_N)$, depending on the choice of $H_N$:
\begin{enumerate}
  \item $H_N = G_N(\mathcal I)$, $\Delta_N^* = \sum_{(i,j) \in \mathcal I}  \left( T_N^*(i,j) - T_N(i,j)
    \right)^2$.
    \item {$H_N = \widetilde G_N(\mathcal I)$,
          $\Delta_N^* =
          \left| \left( \hat  \Sigma^*_{L,\mathcal I} \right)^{-1/2} \left(  \bT^*_N(\mathcal I) -
            \bT_N(\mathcal I) \right) \left( \hat \Sigma^*_{R,\mathcal I}  \right)^{-\tp/2} \right|^2,$
          where $\hat  \Sigma^*_{L,\mathcal I} = \hat  \Sigma_{L,\mathcal I}(X_1^*, \ldots, X_N^*)$, and
          $\hat  \Sigma^*_{R,\mathcal I} = \hat  \Sigma_{R,\mathcal I}(X_1^*, \ldots, X_N^*)$.
 are the row, resp. column, covariances estimated from the bootstrap sample.}
  \item $H_N = \widetilde G^a_N(\mathcal I)$,
      $\Delta_N^* = \sum_{(i,j) \in \mathcal I}  \left( T_N^*(i,j) - T_N(i,j)
      \right)^2/ \hat \sigma_*^2(i,j),$
    where $\hat \sigma_*^2(i,j) = \hat \sigma^2(i,j| X_1^*, \ldots, X_N^*)$.
  \item $H_N = \hsnorm{D_N}^2$, $\Delta_N^* = \hsnorm{D_N^* - D_N}^2$, where $D_N^* = D_N(X_1^*, \ldots,
    X_N^*)$.
\end{enumerate}

The algorithm to approximate the $p$-value of $H_N$ by the empirical
bootstrap is described in detail. The
basic idea consists of generating $B$ bootstrap samples, computing
$\Delta_N^*$ for each bootstrap sample and looking at the proportion
of bootstrap samples for which $\Delta_N^*$ is larger than the test
statistic $H_N$ computed from the original sample.

\section{Empirical demonstrations of the method}
\label{sec:results}

\subsection{Simulation studies}
\label{sec:simulations}
We investigated the finite sample behavior of our testing procedures through an intensive reproducible
simulation study (its running time is equivalent to approximately 401 days on a single CPU computer).
We compared the test based on the asymptotic distribution of \eqref{eq:asymptotically-sigma2-times-chi2-test},
as well as the tests based on $G_N(\mathcal I), \widetilde G_N(\mathcal I), \widetilde G^a_N(\mathcal I)$,  and $\hsnorm{D_N}^2$, with the
$p$-values obtained via the parametric bootstrap or the empirical bootstrap.

We generated discretized functional data $X_1,\ldots, X_N \in
\bR^{32 \times 7}$ under two scenarios. In the first scenario
(Gaussian scenario), the data were generated from a multivariate
Gaussian distribution $\mathcal N (0, \mathbf{C})$. In the second
scenario (Non-Gaussian scenario), the data were generated from a
multivariate $t$ distribution with $6$ degrees of freedom and non centrality parameter equal to zero.
In the Gaussian scenario, we set $\mathbf C = \mathbf{C}^{(\gamma)}$, where
\begin{multline}
\mathbf{C}^{(\gamma)}(i_1, j_1, i_2, j_2)= (1-\gamma)c_1(i_1,i_2)c_2(j_1,j_2) \\
+ \gamma\frac{1}{( j_1-j_2
  )^2+1}\exp\left\{-\frac{(i_1-i_2)^2}{(j_1-j_2)^2+1}\right\},
\end{multline}
$\gamma \in [0,1]; i_1, i_2 = 1, \ldots, 32; j_1, j_2 = 1,\ldots,
7$. The covariances $c_1$ and $c_2$ used in the simulations can be
seen in Figure~\ref{fig:c_xy}. For the Non-Gaussian scenario, we chose a multivariate $t$ distribution with
the correlation structure implied by $\mathbf{C}^{(\gamma)}$, $\gamma \in [0,1]; i_1, i_2 = 1, \ldots, 32;
j_1, j_2 = 1,\ldots, 7$.
The parameter $\gamma \in [0,1]$ controls the departure from the separability of the covariance
$\mathbf{C}^{(\gamma)}$: $\gamma = 0$ yields a separable covariance, whereas $\gamma = 1$ yields a complete
non-separable covariance structure \citep{Cressie1999}.
{All the simulations have been performed using the R package \texttt{covsep} \citep{R:covsep},
  available on CRAN, which implements the tests presented in the paper}.

\begin{figure}[h]
  \centerline{ \makebox{
    \includegraphics[scale=0.65, trim=10 50 0 50, clip=TRUE]{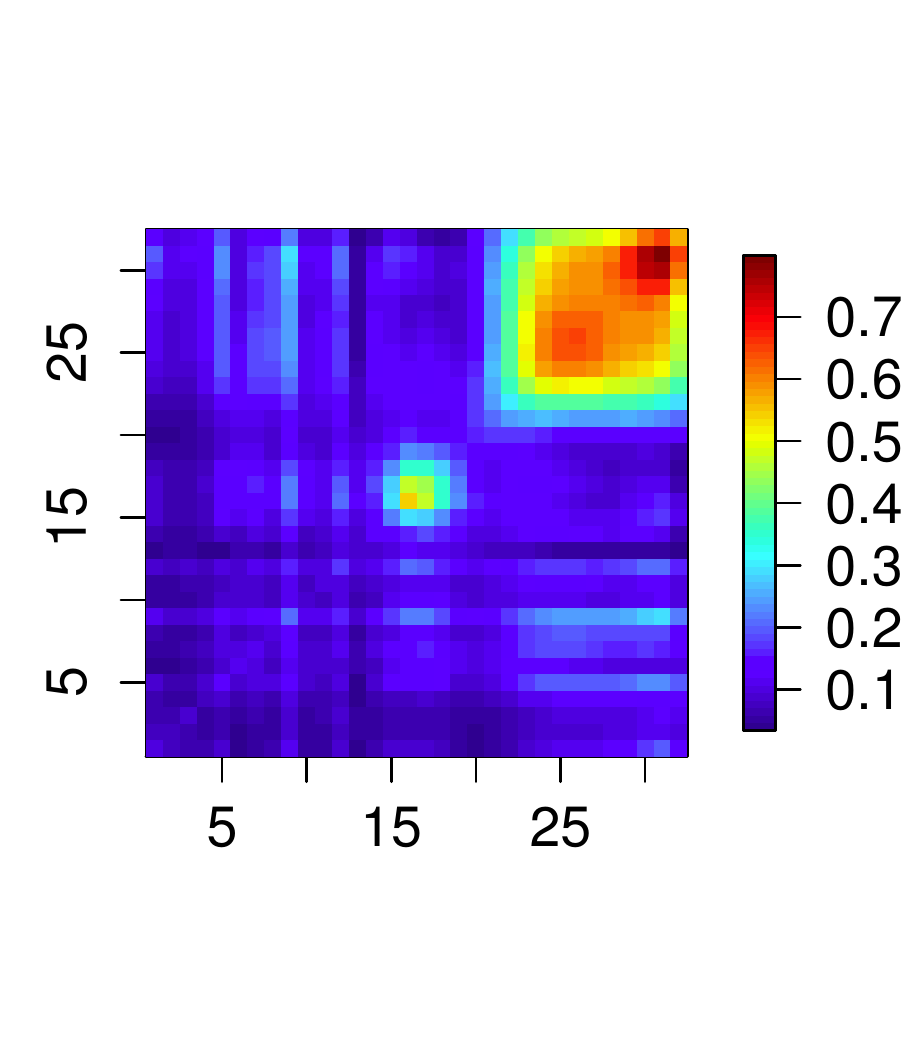}
    \hspace{.1cm}
    \includegraphics[scale=0.65, trim=0 50 0 50, clip=TRUE]{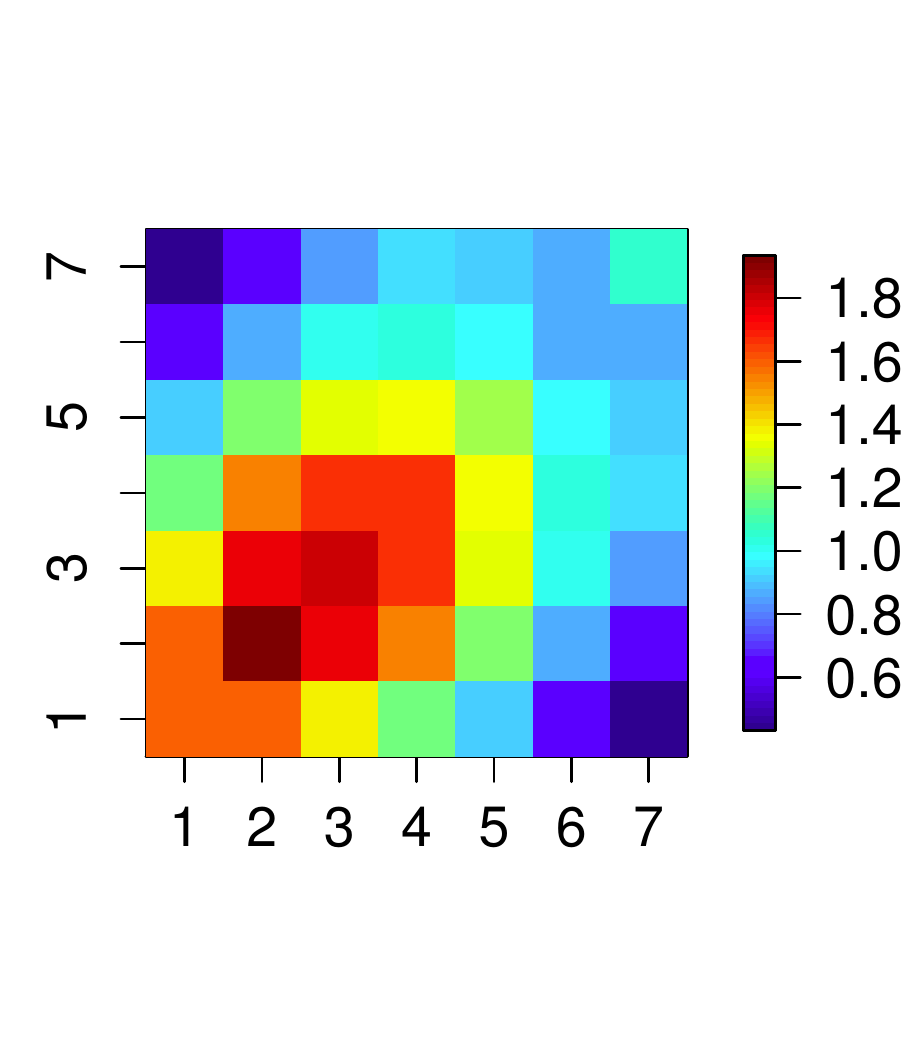}
  }
}
  \caption{Covariance functions $c_1$ (left) and $c_2$ (right) used in
    the simulation study.}\label{fig:c_xy}
\end{figure}

For each value of $\gamma \in \{0,0.01, 0.02, \ldots, 0.1\}$  and $N
\in \{10,25,50,100 \}$, we performed $1000$ replications for each of
the above simulations, and estimated the power of the tests based on
the asymptotic distribution of
\eqref{eq:asymptotically-sigma2-times-chi2-test}.

\begin{figure}[p]
  \centering 
  \begin{subfigure}[c]{\linewidth}
    {\includegraphics[width=\linewidth, trim=0 0 0 5, clip=TRUE]{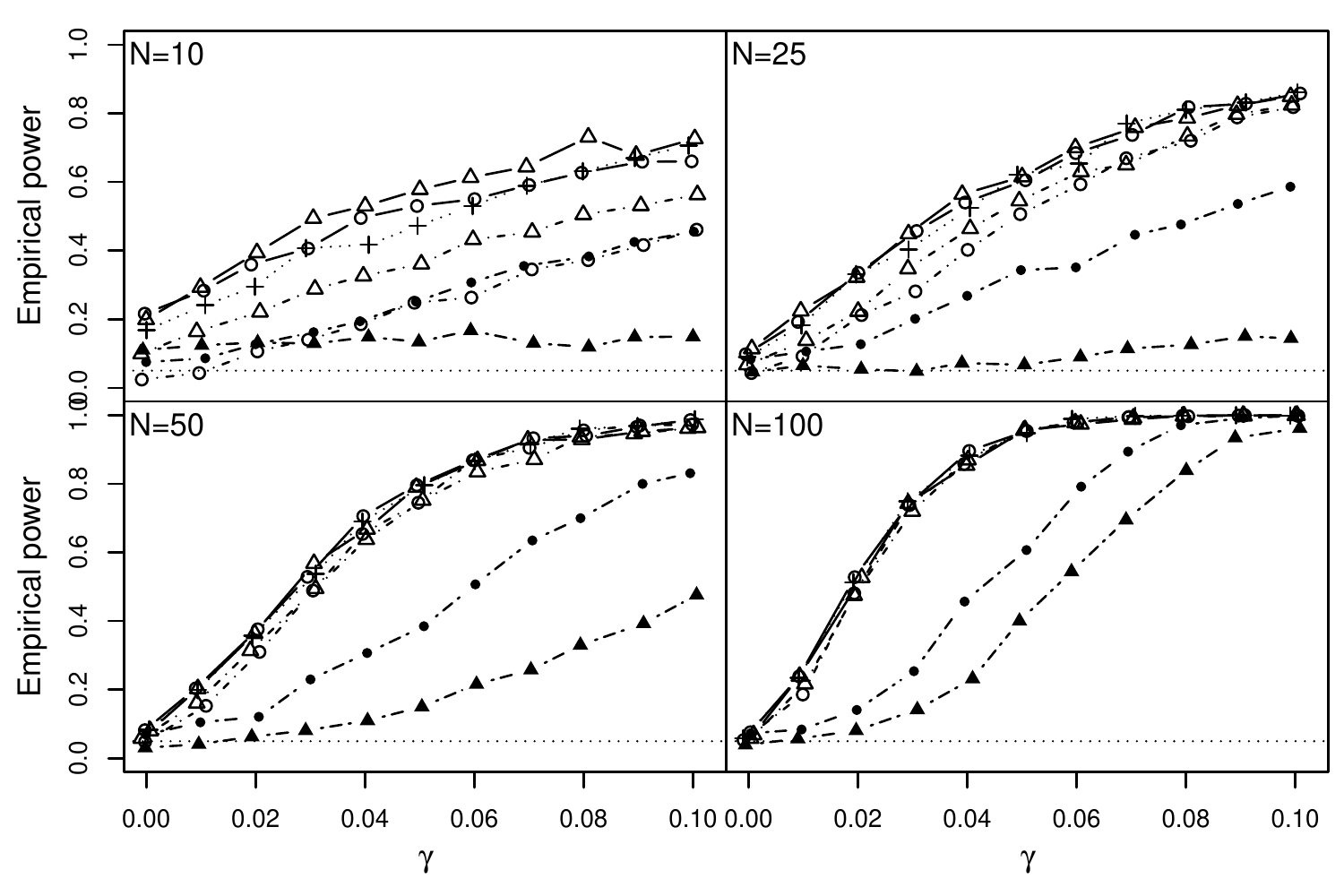}}
    \caption{Gaussian scenario}
  \end{subfigure}
  \begin{subfigure}[c]{\linewidth}
    {\includegraphics[width=\linewidth, trim=0 0 0 5, clip=TRUE]{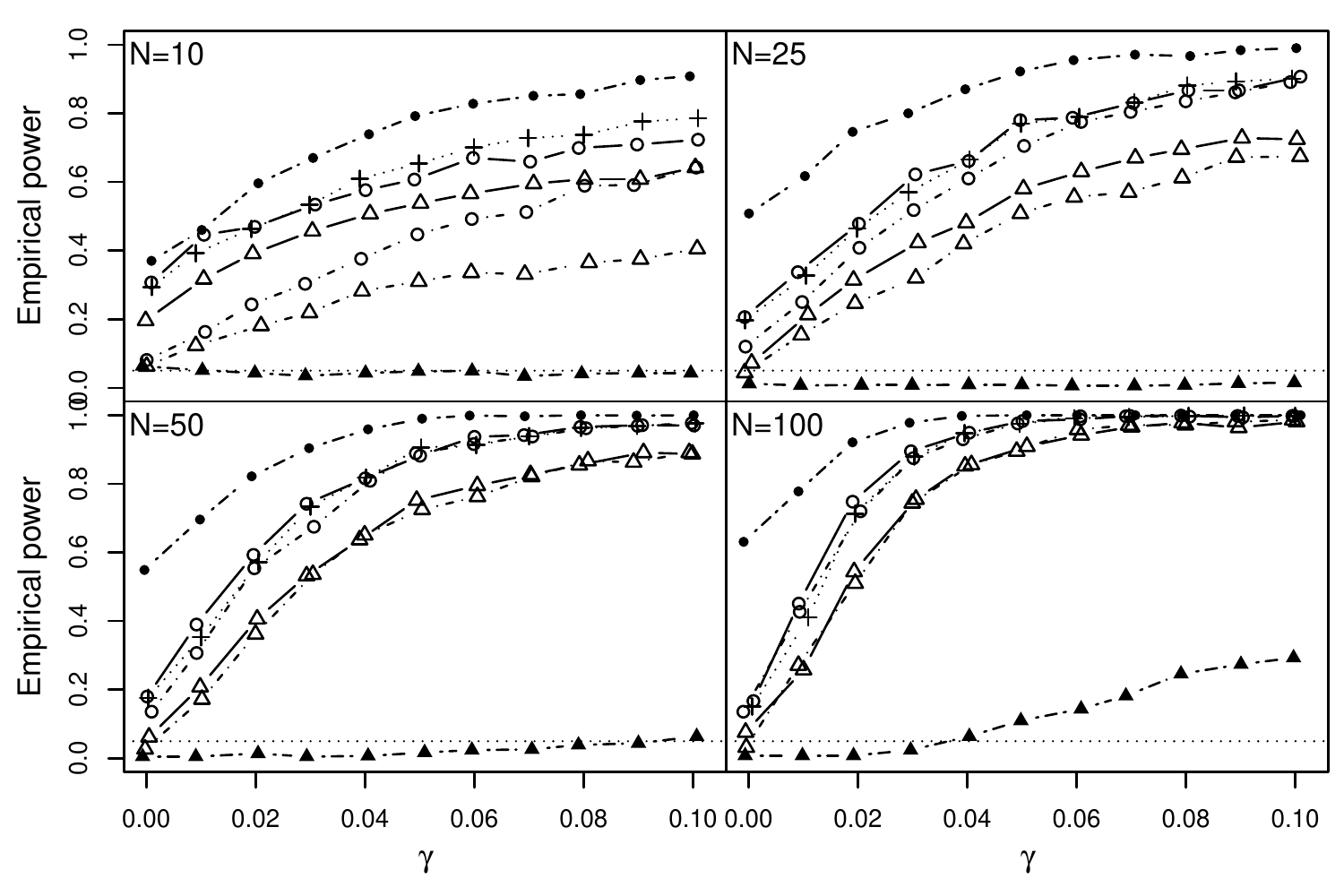}}
    \caption{Non Gaussian scenario}
  \end{subfigure}
  \caption{Empirical power of the testing procedures in the
    \emph{Gaussian} scenario (panel (a)) and \emph{non-Gaussian} scenario (panel (b)), for $N=10, 25, 50, 100$ and $\mathcal I = \mathcal I_1$. The
    results shown correspond to the test \eqref{eq:asymptotically-sigma2-times-chi2-test} based on its
    asymptotic distribution ($\cdot\!\cdot\!\cdot\!\cdot\!+\!\cdot\!\cdot\!\cdot\!\cdot$), the Gaussian
    parametric bootstrap test (solid line with empty circles) and its studentized version (dash-dotted line with
    empty circles), the empirical parametric bootstrap test (---$\bigtriangleup$---) and its Studentized
    version (-- --$\bigtriangleup$-- --), the Gaussian parametric Hilbert--Schmidt test (dash-dotted line with
    filled circles) and the empirical Hilbert--Schmidt test (dash-dotted line with filled triangles). The
    horizontal dotted line indicates the nominal level ($5\%$) of the test. 
  Note that the points have been horizontally jittered for better visibility.
  }
  \label{fig:simulations-gaussian-and-t-R1}
\end{figure}

We first also estimated
the power of the tests $\widetilde G_N(1,1)$, $G_N(1,1)$,  and
$\hsnorm{D_N}$, with distributions  approximated by a Gaussian
parametric bootstrap, and the empirical bootstrap, with $B = 1000$.
The results are shown in Figure~\ref{fig:simulations-gaussian-and-t-R1}.
In the Gaussian scenario (Figure~\ref{fig:simulations-gaussian-and-t-R1}, panel (a)), the
empirical size of all the proposed tests gets closer to the nominal
level ($5\%$) as $N$ increases (see also
Table~\ref{tab:empirical-size-r1}). Nevertheless, the non-Studentized tests
$G_N(1,1)$, for both parametric and empirical bootstrap, seem to
have a slower convergence with respect to the Studentized version,
and even for $N=100$ the level of these tests appear still higher
than the nominal one (and a CLT-based 95\% confidence interval for
the true level does not contain the nominal level in both cases).
The empirical bootstrap version of the Hilbert--Schmidt test also
fails to respect the nominal level at $N=100$, but its parametric
bootstrap counterpart respects the level, even for $N=25$. For
$N=25, 50, 100$, the most powerful tests (amongst those who respect
the nominal level) are the parametric and empirical bootstrap
versions of $\widetilde G_N(1,1)$, and they seem to have equal
power. The power of the Hilbert--Schmidt test based on the
parametric bootstrap seems to be competitive only for $N=100$ and
$\gamma=0.1$, and is much lower for other values of the parameters.
The test based on the asymptotic distribution does not respect the
nominal level for small $N$ but it does when $N$ increases. Indeed,
the convergence to the nominal level seems remarkably fast and its
power is comparable with those of the parametric and empirical
bootstrap tests based on $\widetilde G_N(1,1)$. Despite being based
on an asymptotic result, its performance is quite good also in
finite samples, and it is less computationally demanding than the
bootstrap tests.

In the non-Gaussian scenario
(Figure~\ref{fig:simulations-gaussian-and-t-R1}, panel (b)), only
the empirical bootstrap version of $\widetilde G_N(1,1)$ and of the Hilbert--Schmidt test seem to respect the
level for $N=10$ (see also
Table~\ref{tab:empirical-size-r1}).
Amongst these tests, the most powerful one is
clearly the empirical bootstrap test based on $\widetilde G_N(1,1)$.
Although the Gaussian parametric bootstrap test has higher empirical
power, it does not have the correct level (as expected) and thus
cannot be used in a non-Gaussian scenario.  Notice also that the
test based on the asymptotic distribution of $\widetilde G_N(1,1)$
(under Gaussian assumptions) does not respects the level of the test
even for $N=100$. The same holds for the Gaussian bootstrap version
of the Hilbert--Schmidt test. Finally, though the empirical
bootstrap version of the Hilbert--Schmidt test respects the level
for $N=10, 25, 50, 100$, it has virtually no power for $N= 10, 25,
50$, and has very low power for $N=100$ (at most $0.3$ for
$\gamma=0.1$).

As mentioned previously, there is no guarantee that a violation in
the separability of $C$ is mostly reflected in the first separable
eigensubspace. {Therefore, we consider also a larger
subspace for the test.
Figure~\ref{fig:simulations-gaussian-and-t-R2} shows the empirical
power for the asymptotic test, the parametric and empirical
bootstrap tests based on the test statistic $\widetilde
G_N(\mathcal{I}_2)$, as well as parametric and bootstrap tests based
on the test statistics $ G_N(\mathcal{I})$, $\widetilde
G^a_N(\mathcal{I}_2)$ where $\mathcal{I}_2= \{ (i,j) : i,j = 1,2 \}$. In
the Gaussian scenario, the asymptotic test is much slower in
converging to the correct level compared to its univariate version
based on $\widetilde G_N(1,1)$. For larger $N$ its power is
comparable to that of the parametric and empirical bootstrap based
on the Studentized test statistics $\widetilde G_N(\mathcal{I}_2)$,
which in addition respects the nominal level, even for $N=10$. It is interesting to note that the approximated
Studentized
bootstrap tests $\widetilde{G}_N^a(\mathcal{I}_2)$  have a performance
which is better than the non Studentized  bootstrap tests
$G_N(\mathcal{I}_2)$ but far worse than that of the Studentized tests $\widetilde G_N(\mathcal I_2)$.
The Hilbert--Schmidt test is again outperformed by all the other
tests, with the exception of the non-Studentized bootstrap test when
$N=10,25$. The results are similar for the non-Gaussian scenario,
apart for the fact that the asymptotic test does not respect the
nominal level (as expected, since it asks for $X$ to be Gaussian).} 

To investigate the difference between projecting on one or several
eigensubspaces, we also compare the power of the empirical bootstrap
version of the tests $\widetilde G_N(\mathcal I)$ for increasing
projection subspaces, i.e.\ for $\mathcal I = \mathcal I_l,
l=1,2,3$, where  $\mathcal I_1 = \{(1,1)\}, \mathcal I_2 = \{ (i,j)
: i,j = 1,2 \}$ and $\mathcal I _3 = \{ (i,j) : i = 1,\ldots, 4;
  j=1,\ldots, 10 \}$. The results are shown in
  Figure~\ref{fig:sim-Gaussian-increasing-proj} for the Gaussian scenario and
  Figure~\ref{fig:sim-tdist-increasing-proj} for the non-Gaussian scenario.
{In the Gaussian scenario, for  $N=10$, the most powerful test is
$\widetilde G_N(\mathcal I_2)$. In this case, projecting onto a
larger eigensubspace decreases the power of the test dramatically.
However, for $N\geq 25$ the power of the test is the largest for
$\widetilde G_N(\mathcal I_3)$, albeit only significantly larger than that of $\widetilde G_N(\mathcal I_2)$
when $\gamma=0.01$. Our interpretation is
that when the sample size is too small, including too many
eigendirection is bound to add only noise that degrades the
performance of the test. However, as long as the separable
eigenfunctions are estimated accurately, projecting in a larger eigenspace improves
the performance of test.}
See also Figure~\ref{fig:simulations-gaussian-and-t-R3} for 
the complete simulation results of the projection set $\mathcal I_3$.

\begin{figure}[h!]
\centering
\makebox{
\includegraphics[width=\linewidth, trim=0 0 0 5,
clip=TRUE]{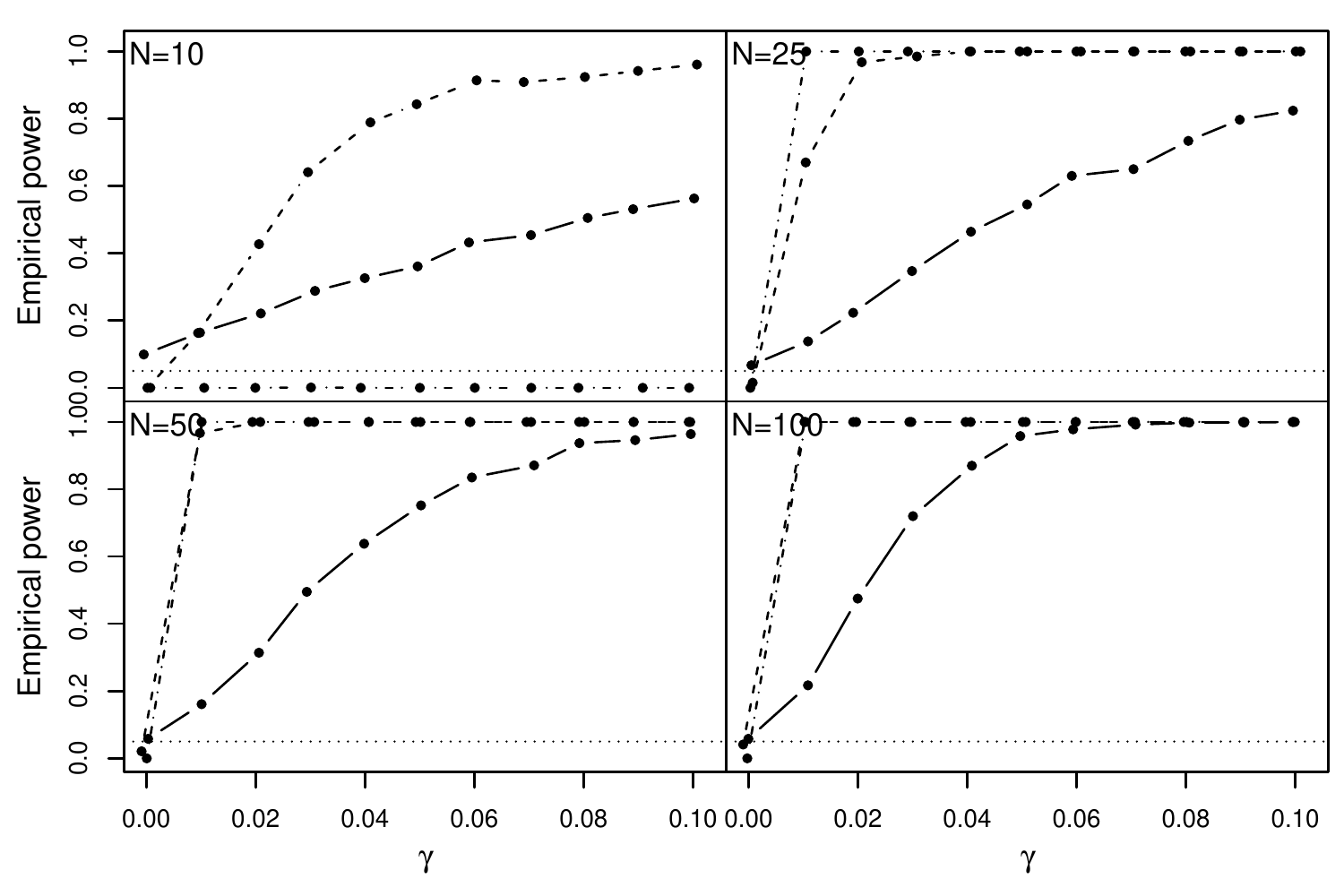}
} 
\caption{Empirical power of the empirical bootstrap version of $\widetilde G_N(\mathcal I_l)$, for $l = 1$
  (solid line), $l=2$ (dashed line) and $l=3$ (dash-dotted line), in the \emph{Gaussian} scenario.  The
  horizontal dotted line indicates the nominal level ($5\%$) of the test.
  Note that the points have been horizontally jittered for better visibility.
}
\label{fig:sim-Gaussian-increasing-proj}
\end{figure}

\begin{figure}[h]
  \begin{center}
\includegraphics[width=\linewidth, trim=0 0 0 5,
clip=TRUE]{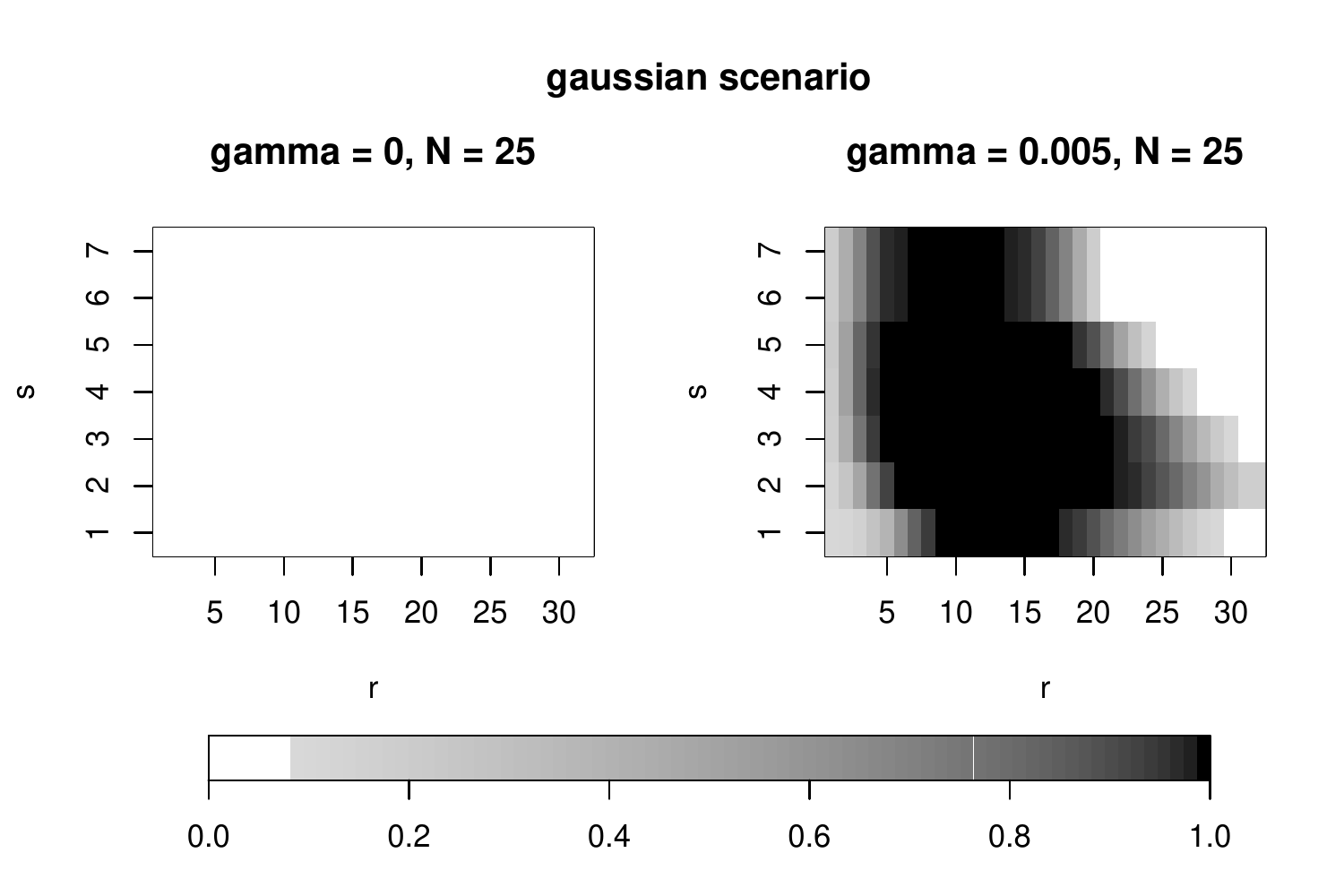}
  \end{center}
  \caption{Empirical size (left) and power (right) of the separability test as functions of the projection set
    $\mathcal I$. The test used is $\widetilde G_N(\mathcal I)$, with distribution approximated by the
    empirical bootstrap with $B=1000$. The left plot, respectively the right plot, was simulated from the Gaussian scenario with $\gamma
    =0$, respectively $\gamma = 0.005$, and $N=25$ . Each $(r,s)$ rectangle represents the level/power of the test based on the projection
    set $\mathcal I = \left\{ (i,j) : 1 \leq i \leq r, 1 \leq j \leq s \right\}$.}
  \label{fig:grid-level-power-gaussian-N25}
\end{figure}

This prompts us to investigate how the power of the test varies across all projection subsets
\[
  \mathcal I_{r,s} = \left\{ (i,j) : 1 \leq i \leq r, 1 \leq j \leq s \right\},
\]
$r=1,\ldots, 32, s=1,\ldots, 7$,=.
The test used is $\widetilde G_N(\mathcal I)$, with distribution approximated by the empirical bootstrap with
$B=1000$. Figure~\ref{fig:grid-level-power-gaussian-N25} shows the empirical size and
power of the separability test in the Gaussian scenario for sample size $N=25$, and
Figure~\ref{fig:grid-level-power-gaussian-3x2}, respectively
Figure~\ref{fig:grid-level-power-student-3x2}, shows the power for different sample sizes in the Gaussian scenario, respectively the
non-Gaussian scenario.

\subsubsection{Discussion of simulation studies}

The simulation studies above illustrate how the empirical bootstrap
test based on the test statistics $\widetilde G_N(\mathcal{I})$
usually outperforms its competitors, albeit it is also much more
computationally expensive than the asymptotic test, whose
performance are comparable in the Gaussian scenario for large enough
number of observations.

The choice of the best set of eigendirections
to use in the definition of the test \mbox{statistics} is difficult.
It seems that $K$ should be ideally chosen to
be increasing with $N$. This is reasonable, because larger values of
$N$ increase the accuracy of the estimation of the eigenfunctions
and therefore we will be able to detect departures from the
separability in more eigendirections, including ones
not only associated with the largest eigenvalues. However, the optimal rate
at which $K$ should increase with $N$ is still an open problem, and
will certainly depend in a complex way on the eigenstructure of the true underlying
covariance operator $C$.

This is confirmed by the results reported in Figure~\ref{fig:grid-level-power-gaussian-N25} and
Figures~\ref{fig:grid-level-power-gaussian-3x2} and
\ref{fig:grid-level-power-student-3x2}. These indeed show that taking into account too few
eigendirections can result in smaller power, while including too many of them can also decrease the power.


As an alternative to tests based on projections of $D_N$, the tests
based on the squared Hilbert--Schmidt norm of $D_N$, i.e.\
$\hsnorm{D_N}^2$, could potentially detect any departure from the
separability hypothesis---as opposed to the tests $\widetilde
G_N(\mathcal I)$. But as the simulation study illustrates, they
might be far less powerful in practice, particularly in situations
where the departure from separability is reflected in only in a few
eigendirections. Moreover, this approach still requires the
computation of the full covariance operator (although not its
storage) and is therefore not feasible for all applications.


\subsection{Application to acoustic phonetic data}\label{s:data}

An
interesting case where the proposed methods can be useful are
phonetic spectrograms. These data arise in the analysis of speech
records, since relevant features of recorded sounds can be better
explored in a two dimensional time-frequency domain.

In particular, we consider here the dataset of 23 speakers from five
different Romance languages that has been first described in
\citet{Pigoli2014}. The speakers were recorded while pronouncing the
words corresponding to the numbers from one to ten in their language
and the recordings are converted to a sampling rate of $16000$
samples per second. Since not all these words are available for all
the speakers, we have a total of $219$ speech records.  We focus on
the spectrum that speakers produce in each speech recording
$x^L_{ik}(t)$, where $L$ is the language, $i=1,\ldots,10$ the
pronounced word and $k=1,\ldots,n_L$ the speaker, $n_L$ being the
number of speakers available for language $L$. We then use a
short-time Fourier transform to obtain a two dimensional
log-spectrogram: we use a Gaussian window function $w(\cdot)$ with a window
size of $10$ milliseconds and we compute the short-time Fourier
transform as
$$
X^L_{ik}(\omega,t)=\int_{-\infty}^{+\infty}
x^L_{ik}(\tau)w(\tau-t)e^{-j\omega\tau}\mathrm{d}\tau.
$$
The spectrogram is defined as the magnitude of the Fourier transform
and the log-spectrogram (in decibel) is therefore
$$
\mathfrak{S}^L_{ik}(\omega,t)=10\log_{10}(|X^L_{ik}(\omega,t)|^2).
$$
The raw log-spectrograms are then smoothed \citep[with the robust
spline smoothing method proposed in][]{Garcia2010} and aligned in
time using an adaptation to 2-D of the procedure in
\citet{TangM2008}. The alignment is needed because a phase
distortion can be present in acoustic signals, due to difference in
speech velocity between speakers. Since the different words of each
language have different mean log-spectrograms, the focus of the
linguistic analysis---which is the study cross-linguistics changes---is on the residual log-spectrograms
$$
R_{ik}^L(\omega,t)=S_{ik}^L(\omega,t)- (1/n_{i})\sum_{k=1}^{n_i}
S_{ik}^L(\omega,t).
$$
Assuming that all the words within the language have the same
covariance structure, we disregard hereafter the information
about the pronounced words that generated the residual
log-spectrogram, and  use the surface data
$R_{j}^L(\omega,t), j=1,\ldots,N_L$, i.e.\ the set of observations
for the language $L$ including all speakers and words, for the separability test. These observations are
measured on an equispaced grid with $81$ points in the frequency
direction and $100$ points in the time direction. This translate on
a full covariance structure with about $33 \times 10^6$ degrees of
freedom. Thus, although the discretized covariance matrix is in
principle computable, its storage is a problem. More importantly, the
accuracy of its estimate is poor, since we have at most $50$
observations within each language. For these reasons, we would like
to investigate if a separable approximation of each covariance is appropriate.

We thus apply the Studentized version of the empirical bootstrap
test for separability to the residual log-spectrograms for each
language individually. Here, we take into consideration different
choices for set of eigendirections to be used in the definition of
the test statistic $\widetilde{G}_N(\mathcal{I})$,  namely
$\mathcal{I}=\mathcal{I}_1=\{(1,1)\}$,
$\mathcal{I}=\mathcal{I}_2=\{(r,s):1 \leq r \leq 2,1 \leq s \leq 3\}$,
$\mathcal{I}=\mathcal{I}_3=\{(r,s):1 \leq r \leq 8,1 \leq s \leq 10\}$. For all cases
we use $B=1000$ bootstrap replicates.

The resulting $p$-values for each language and for each set of indices
can be found in Table~\ref{tab:lang}. Taking into account the
multiple testings with a Bonferroni correction, we can conclude that
the separability assumption does not appear to hold. We can also see that the departure from separability is caught mainly on the
first component for the two Spanish varieties. In conclusion, a separable covariance structure is not a good
fit for these languages and thus, when practitioners use this
approximation for computational or modeling reasons, they should bear in mind that relevant aspects of the
covariance structure may be missed in the analysis.

\begin{table}
\caption{$P$-values 
for the test for the separability of the
covariance operators of the residual log-spectrograms of the five
Romance languages, using the Studentized version of the empirical
bootstrap. \label{tab:lang}
\vspace{.4cm}
}

\begin{tabular}{|l|c|c|c|c|c|}
\hline
$\mathcal{I}$ & French   &  Italian &  Portuguese & American Spanish & Iberian Spanish\\
    \hline
$\mathcal{I}_1$ & 0.65  & $<$ 0.001& $<$ 0.001 & $<$ 0.001 & $<$ 0.001\\
 \hline
$\mathcal{I}_2$ & 0.078  & 0.197& 0.022& 0.36& 0.013\\
\hline
$\mathcal{I}_3$ & 0.001  & 0.002& 0.001& 0.001& $<$ 0.001 \\
\hline
\end{tabular}
\end{table}

\section{Discussion and conclusions}
\label{sec:conclusion}

We presented tests to verify the separability assumption for the
covariance operators of random surfaces (or hypersurfaces) through
hypothesis testing. These tests are based on the difference between
the sample covariance operator and its separable
approximation---which we have shown to be asymptotically
Gaussian---projected onto subspaces spanned by the eigenfunctions of
the covariance of the data. While the optimal choice for this subspace is still an open problem and it may depend on the eigenstructure of the full covariance operator, it is however possible to give some advice on how to choose $\mathcal I$ in practice:

\begin{itemize}
\item in many cases, a dimensional reduction based on the separable eigenfunctions is needed also for the follow up analysis. Then, it is recommended to use the same subspace for the test procedure as well, so that we are guaranteed at least that the projection of the covariance structure in the subspace that will be used for the analysis is separable, as shown in Section \ref{sec:bootstrap}.
\item As mentioned in Section \ref{sec:bootstrap}, it is usually better to focus on the subset of eigenfunctions that it is possible to estimate accurately with the available data. These can be again identified with bootstrap methods such as the one described in
\citet{Hall2006} or considering the dimension of the sample size. As highlighted by the results of the simulation studies in Figure~\ref{fig:grid-level-power-gaussian-N25} and in Figures~\ref{fig:grid-level-power-gaussian-3x2} and \ref{fig:grid-level-power-student-3x2}, the empirical power of the test starts to decline when eigendirections that cannot be reasonably estimated with the available sample size are included.
\item When in doubt, it is also possible to apply the test to more than one subset of eigenfunctions and then
  summarize the response using a Bonferroni correction. We follow this approach in the data application
  described in Section \ref{s:data}.
\end{itemize}

Though an asymptotic distribution is
available in some cases, we also propose to approximate the
distribution of our test statistics using either a parametric
bootstrap (in case the distribution of the data is known) or an
empirical bootstrap. A simulation study suggests that the
Studentized version of the empirical bootstrap test gives the
highest power in non-Gaussian settings, and has power comparable to
its parametric bootstrap counterpart and to the asymptotic test in
the Gaussian setting. We therefore use the Studentized empirical
bootstrap for the application to linguistic data, since it is not
easy to assess the distribution of the data generating process. The
bootstrap test leads to the conclusion that the covariance structure
is indeed not separable.

Our present approach implicitly assumed that the functional
observations (e.g.\ the hypersurfaces) were densely observed. Though
this approach is not restricted to data observed on a grid, it
leaves aside the important class of functional data that are
sparsely observed \citep[e.g.][]{yao:2005}. However, the extension
of our methodology to the case of sparsely observed functional data
is also possible, as long as the estimator used for the full
covariance is consistent and satisfies a central limit theorem. Indeed, while we have only detailed the methods for 2-dimensional surfaces, the extension to higher-order multidimensional functions (such as 3-dimensional volumetric images from applications such as magnetic resonance imaging) is straightforward.


\appendix

\small

\section{The Asymptotic Covariance Structure}
  \label{app:asymptotic-covariance}

\begin{lma}
  \label{lma:general-asymptotic-covariance}

  The covariance operator of the random operator $Z$, defined in Theorem~\ref{thm:general-asymptotic-distn}, is
  characterized by the following equality, in which
  $\Gamma = \eee{(X \tensor X - C ) \bkprod (X \tensor X - C)}$:
  \begin{align}
    \lefteqn{
      \eee{ \trace\left[(A_1 \kprod A_2)Z\right] \trace\left[(B_1 \kprod B_2)Z\right]} =
  } \label{eq:charac-asymptotic-covariance}
    \\ & \trace\left[ (A \bkprod B) \Gamma \right]
     + \frac{\trace\left[ B C \right]}{\trace(C)}    \trace\left[ (A
    \bkprod \id_{H}) \Gamma \right]
 - \frac{\trace[ B_2 C_2]}{\trace[C_2]} \trace\left[ \left( A \bkprod (B_1 \kprod \id_{H_2}) \right)\Gamma  \right]
\nonumber      \\ &   - \frac{\trace[B_1 C_1]}{\trace[C_1]}
\trace\left[ \left(A \bkprod (\id_{H_1} \kprod B_2)  \right) \Gamma
\right]
 + \frac{\trace[A C]}{\trace[C]}\Bigg\{ \trace\left[ (\id_H \bkprod B)\Gamma \right]
 + \frac{\trace[BC]}{\trace[C]}\trace[\Gamma]
\nonumber  \\ & - \frac{\trace[B_2 C_2]}{\trace[C_2]}\trace\left[
(\id_H
      \bkprod (B_1 \kprod \id_{H_2}) ) \Gamma \right]
- \frac{\trace[B_1 C_1]}{\trace[C_1]}\trace\left[ (\id_H \bkprod
(\id_{H_1}
    \kprod B_2) ) \Gamma \right] \Bigg\}
\nonumber    \\ & - \frac{\trace[A_2 C_2]}{\trace[C_2]}\Bigg\{ \trace\left[ \left( (A_1 \kprod \id_{H_2}) \bkprod B
    \right)\Gamma \right]
+ \frac{\trace[BC]}{\trace[C]}\trace\left[ \left( (A_1 \kprod
\id_{H_2}) \bkprod
        \id_{H} \right)\Gamma \right]
\nonumber        \\ & - \frac{\trace[B_2 C_2]}{\trace[C_2]}
\trace\left[ \left( (A_1 \kprod \id_{H_2}) \bkprod (B_1 \kprod
    \id_{H_2}) \right)\Gamma \right]
\nonumber        \\ & - \frac{\trace[B_1 C_1]}{\trace[C_1]}
\trace\left[ \left( (A_1 \kprod \id_{H_2}) \bkprod (\id_{H_1} \kprod
B_2) \right)\Gamma \right]
    \Bigg\}
\nonumber    \\ & - \frac{\trace[A_1 C_1]}{\trace[C_1]} \Bigg\{
    \trace\left[ \left( (\id_{H_1} \kprod A_2) \bkprod B \right)\Gamma \right]
+ \frac{\trace[BC]}{\trace[C]} \trace\left[ \left( (\id_{H_1} \kprod
A_2) \bkprod \id_{H} \right)\Gamma \right] \nonumber        \\ & -
\frac{\trace[B_2 C_2]}{\trace[C_2]} \trace\left[ \left( (\id_{H_1}
\kprod A_2) \nonumber        \bkprod (B_1 \kprod \id_{H_2})
\right)\Gamma \right] \nonumber       \\ &  - \frac{\trace[B_1
C_1]}{\trace[C_1]} \trace\left[ \left( (\id_{H_1} \kprod A_2)
\bkprod (\id_{H_1} \kprod B_2) \right)\Gamma \right] \Bigg\},
\nonumber
  \end{align}
  where $A_1, B_1 \in \bounded{H_1}, A_2, B_2 \in \bounded{H_2}$, and $A = A_1 \kprod A_2$, $B = B_1 \kprod
  B_2$, $H = H_1 \tensor H_2$, and $\id_H$ denotes the identity operator on the Hilbert space $H$.

  \begin{proof}
By the linearity of the expectation and the trace, and by the properties of the partial trace, the
computation of \eqref{eq:charac-asymptotic-covariance}
boils down to the computation of expressions of the form
\[
  \eee{ \trace\left[ (A'_1 \kprod A'_2) Y \right]
    \trace\left[ (B'_1 \kprod B'_2) Y \right]},
\]
for general $A'_1, B'_1 \in \bounded{H_1}, A'_2, B'_2 \in
\bounded{H_2}$. Since $\ee \tnorm{Y}^2 < \infty$, we have
\begin{align*}
  \lefteqn{
    \ee{ ( \trace\left[ (A'_1 \kprod A'_2) Y \right] \trace\left[ (B'_1 \kprod B'_2) Y \right] )} =}
  \\ &= \trace\left[ \left( (A'_1 \kprod A'_2) \bkprod (B'_1 \kprod B'_2) \right) \ee{\left( Y \bkprod Y
      \right)} \right]
  \\ &= \trace\left[ \left( (A'_1 \kprod A'_2) \bkprod (B'_1 \kprod B'_2) \right)  \Gamma  \right],
\end{align*}
where $\Gamma = \eee{(X \tensor X - C ) \bkprod (X \tensor X - C)}$. The computation  of
\eqref{eq:charac-asymptotic-covariance} follows directly.
  \end{proof}
\end{lma}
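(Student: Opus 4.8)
The plan is to exploit the fact that, for fixed $C$, the limit $Z = \Psi(Y,C)$ of \eqref{eq:limiting-distn-in-proof} is a \emph{continuous linear} image of the centred Gaussian element $Y \in \tc{H_1 \tensor H_2}$ whose covariance operator is $\Gamma = \eee{Y \bkprod Y}$; hence $Z$ is Gaussian and its covariance is obtained by pushing $\Gamma$ through this map. Concretely, I would first rewrite the scalar $\trace[(A_1 \kprod A_2)Z]$ as $\trace[M_A\, Y]$ for a deterministic operator $M_A \in \bounded{H}$, and likewise $\trace[(B_1 \kprod B_2)Z] = \trace[M_B\, Y]$. The four summands of $\Psi$ each produce one contribution to $M_A$: the term $Y$ gives $A_1 \kprod A_2$; the term $\trace(Y)C/\trace(C)$ gives $(\trace[(A_1 \kprod A_2)C]/\trace(C))\,\id_H$, since $\trace(Y) = \trace[\id_H Y]$; and the two partial-trace terms are handled through the duality relation for the partial trace (Appendix~\ref{app:partial-traces}), namely $\trace(A_1 \trace_2(Y)) = \trace[(A_1 \kprod \id_{H_2})Y]$ and $\trace(A_2 \trace_1(Y)) = \trace[(\id_{H_1} \kprod A_2)Y]$ (where the left-hand traces are over $H_1$, resp.\ $H_2$, and the right-hand traces over $H$), which convert them into traces against $Y$ of the \emph{padded} operators $A_1 \kprod \id_{H_2}$ and $\id_{H_1} \kprod A_2$.

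After this reduction, and using $C = C_1 \kprod C_2$ under $H_0$ to simplify the scalar prefactors (e.g.\ $\trace(A_2 \trace_1(C))/\trace(C) = \trace[A_2 C_2]/\trace[C_2]$ and symmetrically, since $\trace_1(C) = \trace(C_1)C_2$), one obtains
\[
  M_A = A_1 \kprod A_2 + \frac{\trace[AC]}{\trace(C)}\id_H - \frac{\trace[A_2 C_2]}{\trace[C_2]}(A_1 \kprod \id_{H_2}) - \frac{\trace[A_1 C_1]}{\trace[C_1]}(\id_{H_1} \kprod A_2),
\]
and an identical expression for $M_B$ with $A$ replaced by $B$. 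Each of the four pieces of $M_A$ (and of $M_B$) is of the elementary tensor form (operator on $H_1$)$\,\kprod\,$(operator on $H_2$) — including $\id_H = \id_{H_1} \kprod \id_{H_2}$ — which is exactly the form for which the bilinear identity
\[
  \eee{\trace[(A'_1 \kprod A'_2)Y]\,\trace[(B'_1 \kprod B'_2)Y]} = \trace\left[\left((A'_1 \kprod A'_2) \bkprod (B'_1 \kprod B'_2)\right)\Gamma\right]
\]
applies. This identity is the analytic heart of the argument and holds because $Y$ is centred with $\ee\tnorm{Y}^2 < \infty$ and covariance $\Gamma$, so that $\eee{\trace[M_A Y]\,\trace[M_B Y]} = \trace[(M_A \bkprod M_B)\Gamma]$ by bilinearity of $\bkprod$ and of the trace.

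It then remains to expand $M_A \bkprod M_B$ into its $4 \times 4 = 16$ terms and to match each against $\Gamma$ with one line of \eqref{eq:charac-asymptotic-covariance}: the product of the first pieces gives $\trace[(A \bkprod B)\Gamma]$; the row generated by $\frac{\trace[AC]}{\trace(C)}\id_H$ forms the braced block carrying that prefactor, the cross term $\id_H \bkprod \id_H$ producing $\trace[\Gamma]$; and the two padded pieces of $M_A$ generate the remaining braced blocks with prefactors $-\trace[A_2 C_2]/\trace[C_2]$ and $-\trace[A_1 C_1]/\trace[C_1]$. I expect the main difficulty to be purely organisational: tracking which identity-padding ($\id_{H_1}\kprod\,\cdot$ versus $\cdot\,\kprod\,\id_{H_2}$ versus $\id_H$) attaches to each of the sixteen cross terms, and verifying that every scalar coefficient collapses correctly under $H_0$ so the factors line up with the stated expression. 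The probabilistic content — Gaussianity of $Y$ and that its covariance is $\Gamma$ — is already furnished by the remark following Condition~\ref{cond:asymptotics} and by Proposition~5 of \citet{mas2006sufficient}, so no further limit theory is required here.
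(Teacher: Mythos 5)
Your proposal is correct and follows essentially the same route as the paper's own proof: the paper likewise invokes linearity and the partial-trace duality $\trace(S\trace_1(T)) = \trace((\id_1 \kprod S)T)$ to reduce everything to expressions of the form $\eee{ \trace\left[ (A'_1 \kprod A'_2) Y \right] \trace\left[ (B'_1 \kprod B'_2) Y \right]}$, and then applies the identity $\eee{\trace[A'Y]\,\trace[B'Y]} = \trace\left[ (A' \bkprod B') \Gamma \right]$ valid since $\ee \tnorm{Y}^2 < \infty$. Your write-up simply makes explicit (via the operators $M_A$, $M_B$, the simplification of the prefactors under $H_0$, and the sixteen-term expansion) the bookkeeping that the paper compresses into ``the computation follows directly.''
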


\section{Proofs}
\label{app:proofs}

\begin{proof}[Proof of Corollary~\ref{cor:asymptotic-distribution-Tn-general-case}]
    To alleviate the notation, we shall assume without loss of generality that $\mu = \ee X = 0$.
    Using the properties of the tensor product (see Appendix~\ref{app:Hilbert-spaces}, we get that $T_{N}(r,s) = \trace\left[(\hat A_r \kprod \hat B_s) \sqrt{N} D_N
\right]$,
    where $\hat A_r= (\hat u_r \tensort \hat u_r)$, $\hat B_s =  (\hat v_s \tensort \hat v_s)$.
    Now notice that though $A_r = u_r \tensort u_r$ and $B_s = v_s \tensort v_s$ are not estimable separately (since $C_1$
    and $C_2$ are not identifiable), their $\kprod$-product is identifiable, and is consistently estimated by
    $\hat A_r \kprod \hat B_s$  (in trace norm).
    Slutsky's Lemma, Theorem~\ref{thm:general-asymptotic-distn} and the continuous mapping theorem imply
    therefore that $\left( T_N(r,s) \right)_{(r,s) \in \mathcal I}$ has the same asymptotic distribution of
    $\left( \widetilde T_N(r,s) \right)_{(r,s) \in \mathcal I}$, where
    $\widetilde T_N(r,s) = \trace\left[( A_r \kprod  B_s) \sqrt{N} D_N \right]$.
    This implies that
    \[
      \left( T_N(r,s) \right)_{(r,s) \in \mathcal I} \convd Z' = \left( \trace\left[ (A_r \kprod B_s) Z \right]
      \right)_{(r,s) \in \mathcal I},  \quad \text{as } N \rightarrow \infty,
    \]
    where $Z$ is a mean zero Gaussian random element of $\tc{H_1 \tensor H_2}$ whose covariance structure is given by
    Lemma~\ref{lma:general-asymptotic-covariance}.  $Z'$ is therefore also Gaussian random element, with mean zero and
    covariances
    \[
      \Sigma_{(r,s), (r', s')} =     \cov(Z'_{(r,s)}, Z'_{(r',s')}) =   \eee{ \trace\left[ (A_r \kprod B_s) Z
        \right] \trace\left[ A_{r'} \kprod B_{s'}) Z \right]}.
    \]
    Using Lemma~\ref{lma:general-asymptotic-covariance}, we see that the computation of $\Sigma_{(r,s), (r', s')}$ depends on
    the terms $\trace\left[ (A_r \kprod B_s) C \right] = \lambda_r \gamma_s$,
       $\trace\left[ A_r C_1 \right] = \lambda_r$,
       $\trace\left[ B_s C_2 \right] = \gamma_s$,
    as well as on the value of
    \begin{equation*}
      \trace\left[ \left( (A_1' \kprod B_1') \bkprod (A_2' \kprod B_2') \right)
        \Gamma \right]
    \end{equation*}
    for general $A_1', A_2' \in \bounded{H_1}, B_1', B_2' \in \bounded{H_2}$. Using the
    Karhunen--Lo\`eve expansion $X = \sum_{i,i' \geq 1} \xi_{ii'} u_i \tensor v_{i'}$, where $\xi_{ii'} =
    \sc{X, u_i \tensor v_{i'}}$,
    we get
    \begin{align*}
      \Gamma &= \ee\left( \left( X \tensort X - C \right) \kprod \left( X \tensort X - C \right) \right)
      \\ &= \sum_{i,i',j,j',k,k',l,l' \geq 1} \beta_{ii'jj'kk'll'} \left( u_{ij} \kprod v_{i'j'} \right)
      \bkprod \left( u_{kl} \kprod v_{k'l'} \right)
      \\ & \qquad  - \sum_{i,i',j,j'} \alpha_{ii'}\alpha_{jj'} \left( u_{ii} \kprod v_{i'i'} \right) \bkprod \left(
        u_{jj} \kprod v_{j'j'} \right)
    \end{align*}
    where we have written $u_{ij} = u_i \tensort u_j \in \tc{H_1}, v_{ij} = v_i \tensort v_j \in \tc{H_2}$,
    $\beta_{ii'jj'kk'll'} =
    \eee{\xi_{ii'}\xi_{jj'}\xi_{kk'}\xi_{ll'}}$, $\alpha_{ij} = \lambda_i \gamma_j$
    and used the identity  $ u_{ij}\kprod v_{i'j'} = (u_i \tensor v_{i'})\tensort (u_{j} \tensor
    v_{j'})$. Therefore,
    \begin{align*}
      \lefteqn{\trace\left[ \left( (A_1' \kprod A_2') \bkprod (B_1' \kprod B_2') \right)
          \Gamma \right] = }
      \\ &= \sum_{i,i',j,j',k,k',l,l' \geq 1} \beta_{ii'jj'kk'll'} \trace[A_1' u_{ij}] \trace[A_2'
      v_{i'j'}] \trace[B_1' u_{kl}] \trace[B_2' v_{k'l'}]
      \\ & \qquad - \sum_{i,i',j,j'} \alpha_{ii'}\alpha_{jj'} \trace[A_1' u_{ii}]\trace[B_1'
      u_{jj}]\trace[A_2' v_{i'i'}]\trace[B_2' v_{j'j'}],
    \end{align*}
    and the computation of the variance $\Sigma_{(r,s), (r', s')}$ follows from a straightforward (though
    tedious) calculation.
  \end{proof}

  \begin{proof}[Proof of Corollary~\ref{cor:asymptotic-distn-Tn-gaussian-case}]
    We only need to compute and substitute the values of the fourth order moments terms $\tb_{ijkl}$ in the
    expression given by Corollary~\ref{cor:asymptotic-distribution-Tn-general-case}. Since $\tb_{ijkl} =
    \eee{\xi_{ij}^2 \xi_{kl}^2} = 3 \alpha_{kl}^2$ if $(i,j) = (k,l)$, and $\tb_{ijkl} = \alpha_{ij}
    \alpha_{kl}$ if $(i,j) \neq (k,l)$, straightforward calculations give
    \begin{align*}
      \tb_{rs \cdot \cdot } &= 2 \alpha_{rs}^2 + \alpha_{rs}\trace(C) = \tb_{r\cdot \cdot s},
      &&&      \tb_{\cdot \cdot \cdot \cdot } &= \trace(C)^2 + 2 \hsnorm{C}^2,
      \\       \tb_{\cdot s\cdot s'} &= \gamma_s\gamma_{s'}(\trace(C_1)^2 + 2 \delta_{ss'} \hsnorm{C_1}^2),
      &&&     \tb_{r \cdot r' \cdot } &= \lambda_r\lambda_{r'}(\trace(C_2)^2 + 2 \delta_{rr'} \hsnorm{C_2}^2),
      \\       \tb_{r s  \cdot s'} &= 2 \delta_{ss'}\alpha_{rs}^2 + \alpha_{rs}\gamma_{s'} \trace(C_1),
      &&&      \tb_{r s r' \cdot} &= 2 \delta_{rr'}\alpha_{rs}^2 + \alpha_{rs}\lambda_{r'} \trace(C_2),
      \\      \tb_{\cdot\cdot\cdot s} &= 2 \gamma_s^2 \hsnorm{C_1}^2 + \gamma_s \trace(C_1)^2 \trace(C_2),
      &&&      \tb_{r \cdot\cdot\cdot } &= 2 \lambda_r^2 \hsnorm{C_2}^2 + \lambda_r \trace(C_1) \trace(C_2)^2,
    \end{align*}
  where $\delta_{ij} = 1$ if $i=j$, and zero otherwise.
    The  proof is finished by direct calculations.
  \end{proof}

\section{Partial Traces}
\label{app:partial-traces}

Letting $\tc{H_1 \tensor H_2}$ denote the space of trace-class
operators on $H_1 \tensor H_2$, we define the partial trace with
respect to $H_1$ as the unique linear operator $\trace_{1} :
\tc{H_1 \tensor H_2} \rightarrow \tc{H_2}$ satisfying $\trace_{1}(A
\kprod B)  = \trace(A)B$ for all $A \in \tc{H_1}$, $B \in \tc{H_2}$.
\begin{prop}
  \label{prop:partial-trace}
  The operator $\trace_1$ is well-defined, linear, continuous, and satisfies
  \begin{equation}
    \label{eq:continuity-of-partial-trace}
    \tnorm{\trace_1(A) } \leq \tnorm{A}, \quad A \in \tc{H_1 \tensor H_2}
  \end{equation}
  Furthermore,
  \begin{equation}
    \trace(S\trace_1(T)) = \trace((\id_1 \kprod S)T), \qquad T \in \tc{H_1 \tensor H_2}, S
    \in \bounded{H_2},
    \label{eq:partial-trace-characterization}
  \end{equation}
  where $\id_1$ is the identity operator on $H_1$.
\begin{proof}
  Let us start by proving that the operator $\trace_1(\cdot)$ is well defined. By
  Lemma~\ref{lma:approx-of-trace-class-operator-on-tensor-space}, the space
  \[
    B_0 = \{\sum_{i =1}^n A_i
      \kprod B_i : A_i \in \tc{H_1}, B_i \in \tc{H_2}, n =1,2,\ldots \}
  \]
  is  a dense subset of $\tc{H_1
    \tensor H_2}$. We therefore only need to show that $\trace_1(\cdot)$ is continuous on $B_0$.
  Let $T \in B_0$, $T = \sum_{i = 1}^n A_i \kprod B_i$. Then, for any $S \in \bounded{H_2}$,  we have
  \begin{align*}
    \trace(S \trace_1(T)) &= \sum_{i = 1}^n \trace(A_i)\trace(S B_i)= \sum_{i = 1}^n \trace\left( (\id_1 \kprod S)( A_i \kprod B_i)
    \right)=
    \\ &=  \trace\left((\id_1 \kprod S)\left[ \sum_{i = 1}^n A_i \kprod B_i \right] \right)=  \trace\left((\id_1 \kprod S) T \right).
  \end{align*}
  Hence, using the following formula for the trace norm,
  \[
    \tnorm{T} = \sup \left\{ |\trace(ST)| : \opnorm{S}=1 \right\},
  \]
  we get  $\tnorm{\trace_1(T)} \leq \tnorm{T}$ for all $T \in
  B_0$. Thus
  $\trace_1(\cdot)$  can be extended by continuity to $H_1 \tensor H_2$, and
  \eqref{eq:continuity-of-partial-trace} (of the paper) holds.

  Let us now show \eqref{eq:partial-trace-characterization} (of the paper). Fix $S \in \bounded{H_2}$, and define the
  linear functionals $g_S, h_S : \tc{H_1 \tensor H_2} \rightarrow \bR$ by $g_S(T) = \trace\left(  (S \kprod
    \id_2) T \right)$ and $h_S(T) = \trace(S \trace_1(T))$, for $T \in  \tc{H_1 \tensor H_2}$.
  By H\"older's inequality and \eqref{eq:continuity-of-partial-trace} (of the paper), $g_S$ and $h_S$ are both continuous. Since
  they are equal on the dense subset $B_0$, they are in fact equal everywhere, and
  \eqref{eq:partial-trace-characterization} (of the paper) follows.
\end{proof}
\end{prop}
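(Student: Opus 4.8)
The plan is to realize $\trace_1$ as the unique continuous linear extension, to the whole of $\tc{H_1 \tensor H_2}$, of the obvious assignment on $\kprod$-products, and to control its trace norm by duality rather than by a direct estimate. First I would restrict attention to the subspace $B_0 = \{ \sum_{i=1}^n A_i \kprod B_i : A_i \in \tc{H_1}, B_i \in \tc{H_2}, n \geq 1 \}$, which is dense in $\tc{H_1 \tensor H_2}$ by Lemma~\ref{lma:approx-of-trace-class-operator-on-tensor-space}. On $B_0$, linearity and the defining relation $\trace_1(A \kprod B) = \trace(A) B$ force the formula $\trace_1(\sum_i A_i \kprod B_i) = \sum_i \trace(A_i) B_i$; the content is to check that this is well defined and bounded.

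The computation I would lean on is that, for every $S \in \bounded{H_2}$ and every $T = \sum_i A_i \kprod B_i \in B_0$,
\[
  \trace(S \trace_1(T)) = \sum_i \trace(A_i) \trace(S B_i) = \sum_i \trace( (\id_1 \kprod S)(A_i \kprod B_i) ) = \trace( (\id_1 \kprod S) T ),
\]
the middle step using the factorization $\trace((\id_1 \kprod S)(A \kprod B)) = \trace(A \kprod S B) = \trace(A)\trace(SB)$. This one identity does two jobs. For well-definedness, if two representations of the same element of $B_0$ give candidate values $v, v'$, then $\trace(S(v - v')) = \trace((\id_1 \kprod S) \cdot 0) = 0$ for all $S$, so $v = v'$. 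For the bound, I would invoke the duality formula $\tnorm{R} = \sup\{ |\trace(S R)| : S \in \bounded{H_2}, \opnorm{S} = 1 \}$ in $\tc{H_2}$ together with $\opnorm{\id_1 \kprod S} = \opnorm{\id_1}\opnorm{S} = \opnorm{S}$, to get
\[
  \tnorm{\trace_1(T)} = \sup_{\opnorm{S}=1} |\trace((\id_1 \kprod S) T)| \leq \sup_{\opnorm{R} \leq 1} |\trace(R T)| = \tnorm{T},
\]
the last supremum taken over all $R \in \bounded{H_1 \tensor H_2}$, which proves \eqref{eq:continuity-of-partial-trace} on $B_0$.

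Having a bounded linear map from the dense subspace $B_0$ into the Banach space $\tc{H_2}$, I would extend it uniquely by continuity to all of $\tc{H_1 \tensor H_2}$; this simultaneously yields that $\trace_1$ is well defined, linear, continuous, and satisfies the bound \eqref{eq:continuity-of-partial-trace}. For the characterization \eqref{eq:partial-trace-characterization}, I would fix $S$ and consider the two linear functionals $g_S(T) = \trace((\id_1 \kprod S) T)$ and $h_S(T) = \trace(S \trace_1(T))$ on $\tc{H_1 \tensor H_2}$. Both are continuous---$g_S$ by H\"older's inequality $|\trace(RT)| \leq \opnorm{R}\tnorm{T}$, and $h_S$ by the same inequality combined with the norm bound just established---and the displayed identity shows they coincide on $B_0$; being continuous and equal on a dense set, they agree everywhere, which is \eqref{eq:partial-trace-characterization}.

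The step I expect to be the real obstacle is the norm bound: estimating $\tnorm{\sum_i \trace(A_i) B_i}$ directly in terms of $\tnorm{\sum_i A_i \kprod B_i}$ is unpleasant, since the representation is highly non-unique and the triangle inequality is far too lossy. Routing the estimate through the trace-norm duality formula avoids this entirely, reducing everything to the elementary multiplicativity of the trace across $\kprod$-products and to the observation that $\id_1 \kprod S$ is an admissible test operator of the same operator norm as $S$. The only imported ingredient is the density of $B_0$, which I take from Lemma~\ref{lma:approx-of-trace-class-operator-on-tensor-space}.
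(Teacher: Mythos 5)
Your proposal is correct and follows essentially the same route as the paper's own proof: density of $B_0$ via Lemma~\ref{lma:approx-of-trace-class-operator-on-tensor-space}, the identity $\trace(S\trace_1(T)) = \trace((\id_1 \kprod S)T)$ on $B_0$, the trace-norm duality formula to obtain \eqref{eq:continuity-of-partial-trace}, extension by continuity, and the two-functionals argument ($g_S$, $h_S$ continuous by H\"older, equal on a dense set) for \eqref{eq:partial-trace-characterization}. Your explicit verification that the value on $B_0$ is independent of the chosen representation is a small worthwhile addition that the paper leaves implicit.
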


We can also define $\trace_2: \tc{H_1 \tensor H_2} \rightarrow \tc{H_1}$ analogously.
The following result gives an explicit formula for the partial traces of integral operators with continuous
kernels.

\begin{prop}
  \label{prop:partial-trace-of-kernel-operator-with-continous-kernels}
Let $D_s \subset \bR^p, D_t \subset \bR^q$ be compact subsets,
  $H_1 = L^2\left( D_s, \bR \right), H_2 = L^2\left( D_t, \bR \right)$, and $H = L^2(D_s \times D_t, \bR) =
  H_1 \tensor H_2$.
If $C \in \tc{L^2(D_s \times D_t, \bR)}$  is a positive definite operator with symmetric continuous kernel
$c=c(s,t,s',t')$, i.e.\ $c(s,t,s',t')=c(s',t',s,t)$ for all $s,s' \in D_s, t,t' \in D_t$, and
  \[
    Cf(s,t) = \iint_{D_s \times D_t} c(s,t,s',t') f(s',t') ds' dt', \qquad f \in L^2(D_s \times D_t,
    \bR),
  \]
  then $\trace_1(C)$ is the integral operator on $L^2(D_t, \bR)$ with kernel $k(t,t') = \int_{D_s}
  c(s,t,s,t') ds$. The analogous result also holds for $\trace_2(C)$.
\begin{proof}
  Let
  $\vep > 0$. By Lemma~\ref{lma:approx-of-trace-class-operator-with-continuous-kernel-on-tensor-space}, we
  know that
  there exists an integral operator $C'$ with continuous kernel $c'$ such that
  $\tnorm{C - C'} \leq \vep/2$ and $\hnorm{c - c'}_\infty \leq \vep/2,$
  where $C' = \sum_{n = 1}^N A_n \kprod B_n$, and each $A_n, B_n$ are finite rank operators, with continuous
  kernels $a_n$, respectively $b_n$, and $\hnorm{g}_\infty~=~\sup_{x} |g(x)|$.
  We have
  \begin{align}
    \snorm{\trace_1(C) - \int_{D_s} c(s,\cdot,s,\cdot) ds }_2 & \leq \snorm{\trace_1(C) - \trace_1(C')}_2
    \label{eq:partial-trace-integral-formula-proof}
    \\ & \quad +
    \snorm{\trace_1(C') - \int_{D_s} c'(s,\cdot,s,\cdot) ds }_2
    \nonumber
    \\ & \quad + \snorm{\int_{D_s} c'(s,\cdot,s,\cdot) ds - \int_{D_s} c(s,\cdot,s,\cdot) ds }_2.
    \nonumber
  \end{align}
  The first term is bounded $\tnorm{\trace_1(C) - \trace_1(C')} \leq \tnorm{C - C'} \leq \vep/2$. The second
  term is equal to zero since
  \begin{align*}
    \trace_1(\sum_{n = 1}^N A_n \kprod B_n) &= \sum_{n = 1}^N \trace(A_n)B_n
    \\ & = \sum_{n = 1}^N \int_{D_s} A_n(s,s) ds B_n
    \\ & = \int_{D_s} \left( \sum_{n = 1}^N A_n \kprod B_n \right)(s,\cdot,s,\cdot) ds.
    \\ & = \int_{D_s} c'(s,\cdot,s,\cdot) ds,
  \end{align*}
  where the second equality comes from the fact that $A_n$ is a finite rank operator (hence trace-class) with
  continuous kernel. The third term of \eqref{eq:partial-trace-integral-formula-proof}  is
  \begin{multline*}
    \left( \iint_{D_t \times D_t} \left( \int_{D_s} \left[ c'(s,t,s,t') - c(s,t,s,t') \right] ds \right)^2 dt dt'
    \right)^{1/2} \\
    \leq |D_s| \: |D_t| \: \hnorm{c' - c}_\infty \leq \vep/2,
  \end{multline*}
  where $|D_s| = \int_{D_s} dx$ and $|D_t| = \int_{D_t} dy$.
  Therefore,
  \[
    \snorm{\trace_1(C) - \int_{D_s} c(s,t,s,t') ds}_2 \leq \vep.
  \]
  Since this holds for any $\vep > 0$,
  $\trace_1(C)$ is equal to the operator with kernel $k(t,t') =   \int_{D_s} c(s,t,s,t') ds$.
  The proof of the analogous result for $\trace_2(C)$ is similar.
\end{proof}
\end{prop}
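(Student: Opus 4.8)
The plan is to reduce the identity to the elementary case of a finite sum of tensor products, where the partial trace is computable by hand, and then to pass to the limit by approximating $C$ \emph{simultaneously} in trace norm and in the uniform norm of its kernel. Fix $\vep > 0$. Invoking the approximation lemma (Lemma~\ref{lma:approx-of-trace-class-operator-with-continuous-kernel-on-tensor-space}), I would pick $C' = \sum_{n=1}^N A_n \kprod B_n$, with $A_n$, $B_n$ finite-rank operators carrying continuous kernels $a_n$, $b_n$, such that $\tnorm{C - C'} \le \vep/2$ and, writing $c'$ for the (continuous) kernel of $C'$, also $\hnorm{c - c'}_\infty \le \vep/2$.

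For such an elementary $C'$ the partial trace is explicit: by linearity and the defining relation $\trace_1(A \kprod B) = \trace(A)B$ one gets $\trace_1(C') = \sum_{n=1}^N \trace(A_n) B_n$. The one auxiliary fact I would use here is the diagonal trace formula $\trace(A_n) = \int_{D_s} a_n(s,s)\,ds$, which is immediate for finite-rank $A_n$, since it is a finite sum of rank-one operators $\phi \tensort \psi$ with continuous factors and $\trace(\phi \tensort \psi) = \sc{\phi,\psi} = \int_{D_s} \phi(s)\psi(s)\,ds$. Substituting shows that $\trace_1(C')$ is exactly the integral operator on $L^2(D_t,\bR)$ with kernel $\sum_{n} \left( \int_{D_s} a_n(s,s)\,ds \right) b_n(t,t') = \int_{D_s} c'(s,t,s,t')\,ds$.

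It then remains to estimate the error, which I would do in the Hilbert--Schmidt norm $\snorm{\cdot}_2$ (equivalently the $L^2(D_t \times D_t)$ norm of kernels), splitting $\trace_1(C) - \int_{D_s} c(s,\cdot,s,\cdot)\,ds$ into three pieces via the triangle inequality. The first, $\snorm{\trace_1(C) - \trace_1(C')}_2$, is controlled by $\tnorm{\trace_1(C - C')} \le \tnorm{C - C'} \le \vep/2$, using $\snorm{\cdot}_2 \le \tnorm{\cdot}$ together with the contractivity of the partial trace, eq.~\eqref{eq:continuity-of-partial-trace}. The second piece vanishes identically, by the exact computation of $\trace_1(C')$ above. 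The third piece is the $L^2$-distance between the diagonal kernel integrals of $c'$ and $c$, which is at most $|D_s|\,|D_t|\,\hnorm{c - c'}_\infty \le \vep/2$ because $D_s$ and $D_t$ are compact. Letting $\vep \downarrow 0$ identifies $\trace_1(C)$ with the integral operator of kernel $\int_{D_s} c(s,t,s,t')\,ds$; the statement for $\trace_2$ follows by interchanging the roles of $H_1$ and $H_2$.

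The main obstacle I anticipate is not this telescoping estimate but the supporting approximation lemma, which must deliver an approximant close to $C$ in \emph{both} norms at once: closeness in trace norm is what lets the partial trace be controlled through eq.~\eqref{eq:continuity-of-partial-trace}, while uniform closeness of kernels is what controls the diagonal integral. This is exactly where positive definiteness of $C$ is used, via Mercer's theorem, to produce continuous-kernel building blocks with simultaneous control. As an alternative that sidesteps the approximation, one could test the claimed identity against rank-one operators $S = g \tensort h$ on $H_2$, using the characterization $\trace(S\,\trace_1(C)) = \trace((\id_1 \kprod S)C)$ of eq.~\eqref{eq:partial-trace-characterization} to evaluate $\sc{\trace_1(C)g, h}$ directly; but the right-hand side again reduces to a diagonal-kernel trace, so the analytic core would be unchanged.
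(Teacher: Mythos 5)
Your proposal is correct and follows essentially the same route as the paper's own proof: the same simultaneous (trace-norm and sup-norm) approximation via Lemma~\ref{lma:approx-of-trace-class-operator-with-continuous-kernel-on-tensor-space}, the same explicit computation of $\trace_1$ on finite sums $\sum_n A_n \kprod B_n$, and the same three-term triangle-inequality estimate with the bounds $\vep/2$, $0$, and $|D_s|\,|D_t|\,\hnorm{c-c'}_\infty$. The only differences are cosmetic: you spell out the diagonal trace formula for finite-rank operators slightly more explicitly, and you note (correctly) that the duality-based alternative via \eqref{eq:partial-trace-characterization} would not change the analytic core.
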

The next result states that the partial trace of a Gaussian random trace-class operator is also Gaussian.
\begin{prop}
  \label{prop:partial-trace-of-Gaussian}
  Let $Z \in \tc{H_1 \tensor H_2}$ be a Gaussian random element. Then $\trace_1(Z) \in \tc{H_2}$ is a Gaussian random element.
  \begin{proof}
    The proof is finished by noticing that  $A \in \bounded{H_2}$, we have
      $\trace(A \trace_1(Z)) = \trace\left( (\id \kprod A) Z \right)$,
    where the right-hand side is obviously Gaussian.
  \end{proof}
\end{prop}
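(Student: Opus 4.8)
The plan is to verify the defining property of a Gaussian random element directly: a random element $W$ of a separable Banach space is Gaussian precisely when $\ell(W)$ is a real-valued Gaussian variable for every continuous linear functional $\ell$. I would therefore reduce the Gaussianity of $\trace_1(Z)$ in $\tc{H_2}$ to the assumed Gaussianity of $Z$ in $\tc{H_1 \tensor H_2}$ by passing through the natural duality. Recall that the topological dual of $\tc{H_2}$ may be identified with $\bounded{H_2}$ through the pairing $(A, T) \mapsto \trace(AT)$; hence it is enough to show that $\trace(A\,\trace_1(Z))$ is a real Gaussian variable for each fixed $A \in \bounded{H_2}$.

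First I would fix $A \in \bounded{H_2}$ and invoke the partial-trace characterization \eqref{eq:partial-trace-characterization} of Proposition~\ref{prop:partial-trace}, which yields
\[
  \trace\big(A\,\trace_1(Z)\big) = \trace\big((\id_1 \kprod A)\,Z\big),
\]
where $\id_1$ is the identity on $H_1$. The key point is that $\id_1 \kprod A$ is a bounded operator on $H_1 \tensor H_2$ --- indeed $\opnorm{\id_1 \kprod A} = \opnorm{A}$ by the product rule for operator norms of $\kprod$ --- so the right-hand side is a continuous linear functional applied to $Z$.

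Since $Z$ is Gaussian, every continuous linear functional evaluated at $Z$ is a real Gaussian variable; applying this to the functional $T \mapsto \trace((\id_1 \kprod A)\,T)$ shows that $\trace((\id_1 \kprod A)\,Z)$, and hence $\trace(A\,\trace_1(Z))$, is Gaussian. As $A \in \bounded{H_2}$ was arbitrary, this is exactly the assertion that $\trace_1(Z)$ is a Gaussian random element of $\tc{H_2}$.

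I do not expect a genuine obstacle here: the whole argument amounts to observing that $A \mapsto \id_1 \kprod A$ is the Banach-space adjoint of $\trace_1$, so the partial trace carries Gaussian elements to Gaussian elements almost by definition. The only subtleties worth flagging are that \eqref{eq:partial-trace-characterization} must hold for \emph{all} bounded $A$ (not merely trace-class $A$), which is precisely what Proposition~\ref{prop:partial-trace} supplies, and that one should recall the duality identifying $\tc{H_2}^{*}$ with $\bounded{H_2}$ so that the scalar Gaussianity criterion applies on the target space.
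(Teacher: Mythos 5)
Your proof is correct and takes essentially the same route as the paper's own argument: both reduce Gaussianity of $\trace_1(Z)$ to the scalar criterion via the duality between $\tc{H_2}$ and $\bounded{H_2}$, and both rest on the identity $\trace(A\,\trace_1(Z)) = \trace\left( (\id_1 \kprod A) Z \right)$ from Proposition~\ref{prop:partial-trace}. You simply make explicit the details (boundedness of $\id_1 \kprod A$ and the duality identification) that the paper compresses into ``the right-hand side is obviously Gaussian.''
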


\section{Background Results}
\label{app:background-results-main-paper}

This section presents some background results that are used in the
paper. Some references for these results are
\citet{Zhu:2007,Gohberg:1971,gohberg:1990,Kadison:1997vol1,Kadison:1997vol2,Ringrose:1971}.

\subsection{Tensor Products Hilbert Spaces, and Hilbert--Schmidt Operators}
\label{app:Hilbert-spaces}

Let $H_1,H_2$ be two real separable Hilbert spaces, whose inner
products are denoted by $\sc{\cdot,\cdot}_1$ and
$\sc{\cdot,\cdot}_2$, respectively. Let  $H_1 \tensor H_2$ denote
the Hilbert space  obtained as the completion of the space of finite
linear combinations of simple tensors $u \tensor v, u \in H_1, v \in
H_2$ under the inner product
\[ \sc{u \tensor v, u' \tensor v'} = \sc{u, u'}_1\sc{v,v'}_2. \]

The Hilbert space $H_1 \tensor H_2$ is actually isometrically isomorphic to the space of \emph{Hilbert--Schmidt
  operators} from $H_2$ to $H_1$, denoted by $\schatten_2(H_2, H_1)$, which consists of all continuous linear
operators $T : H_2 \rightarrow H_1$ satisfying
\[
  \snorm{T}_2^2 = \sum_{n \geq 1} \hnorm{T e_n}^2,
\]
where the sum extends over any orthonormal basis $(e_n)_{n \geq 1}$ of $H_2$. The norm $\snorm{\cdot}$ is
actually induced by the inner-product
 inner product
 \[
 \HSsc{T, S} = \sum_{n \geq 1} \sc{T e_n, S e_n}_1, \quad T,S \in \HS{H_2,H_1},
 \]
 which is independent of the choice of the basis (the space $\HS{H_2, H_1}$ is therefore itself a Hilbert
 space).
The isomorphism between $H_1 \tensor H_2$ and $\HS{H_2,H_1}$ is given by the mapping
$\Phi: H_1 \tensor H_2 \rightarrow \schatten_2(H_2, H_1)$, defined by
$\Phi(u \tensor v) = u \tensort v$ for all $u \in H_1, v
\in H_2$, where $u \tensort v (v') = \sc{v', v}u$ for $u \in H_1, v,v' \in H_2$.
We therefore identify these two spaces, and might write $u
\tensor v$ instead of $u \tensort v$ hereafter.

Notice that since $\mathcal H = \HS{H_1} = \HS{H_1 , H_1}$ is itself a Hilbert space, if $A, B \in
\mathcal H$, the operator $A \btensort B \in \HS{\mathcal H, \mathcal H}$ is defined by $\left( A \btensort B
\right)(C) = \HSsc{C,B}A$, for $A,B,C \in \mathcal H$.
Here are some properties of the tensor product $\cdot\tensort \cdot$:
\begin{prop} \label{prop:properties-tensor}
  Let $H$ be a real separable Hilbert space.
For any $u,v,f,g \in H$, $A,B \in \schatten_{2}(H)$
    \begin{enumerate}
        \item $ \cdot \tensort \cdot $ is linear on the left, and conjugate-linear on the right,
        \item $\HSsc{u\tensort v, f\tensort g } = \sc{u,f}\sc{g,v} = \sc{(u \tensort v)g, f}$,
        \item $\HSsc{A, u\tensort v} = \sc{Av, u} = \HSsc{v \tensort u, A^{\dag}},$
                \item $\trace(u \tensort v)= \sc{u,v}$,
                \item $\tnorm{u \tensort v} = \snorm{u \tensort v}_{2} = \hnorm{u}\hnorm{v}$,
        \item $(u \tensort v)(f \tensort g) = \sc{f, v}u \tensort g$,
        \item $(u \tensort v)^{\dag} = v \tensort u,$
                \item  $(A \btensort B)^{\dag} = B \btensort A$.
    \end{enumerate}
        \begin{proof}
          The proof follows from the definition and the properties of the inner product, and is therefore
          omitted.
        \end{proof}
\end{prop}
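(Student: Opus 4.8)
The plan is to verify each of the eight identities directly from the definition of the elementary (rank-one) operator, namely $(u \tensort v)h = \sc{h,v}u$ for $h \in H$, by evaluating the Hilbert--Schmidt inner product and the trace against a fixed orthonormal basis $(e_n)_{n \geq 1}$ of $H$ and collapsing the resulting sums through Parseval's identity $\sum_n \sc{a,e_n}\sc{e_n,b} = \sc{a,b}$. Since $\HSsc{\cdot,\cdot}$ and $\trace(\cdot)$ were already noted to be independent of the choice of basis, I am free to pick any convenient one. I would establish the items in the order (1), (6), (7), (2), (3), (4), (5), (8), so that each may invoke only previously proved identities.

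For (1), I would observe that $h \mapsto ((\alpha u_1 + u_2) \tensort v)h = \sc{h,v}(\alpha u_1 + u_2)$ expands by linearity in the first slot, giving left-linearity, while $v \mapsto \sc{h,v}$ is conjugate-linear, which over $\bR$ reduces to plain linearity; hence $\cdot \tensort \cdot$ is in fact bilinear in the present real setting. For (6), composing $(u \tensort v)(f \tensort g)h = \sc{h,g}\,(u \tensort v)f = \sc{h,g}\sc{f,v}\,u$ and comparing with $\sc{f,v}\,(u \tensort g)h$ yields the product rule. For (7), I would compute $\sc{(u \tensort v)h,k} = \sc{h,v}\sc{u,k}$ and check that this equals $\sc{h,(v \tensort u)k}$, identifying the adjoint as $v \tensort u$.

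The inner-product identities are then short Parseval computations. For (2), $\HSsc{u \tensort v, f \tensort g} = \sum_n \sc{e_n,v}\sc{e_n,g}\sc{u,f}$ collapses to $\sc{u,f}\sc{g,v}$, which equals $\sc{(u \tensort v)g, f}$ straight from the definition. For (3), expanding $Av = \sum_n \sc{e_n,v}\,Ae_n$ turns $\HSsc{A, u \tensort v} = \sum_n \sc{e_n,v}\sc{Ae_n,u}$ into $\sc{Av,u}$; the third expression follows by applying the same computation to $A^\dag$ and using $\sc{A^\dag u,v} = \sc{u,Av} = \sc{Av,u}$ together with the symmetry of the real inner product. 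Identity (4) is the one-line sum $\trace(u \tensort v) = \sum_n \sc{e_n,v}\sc{u,e_n} = \sc{u,v}$. For (5), (2) gives $\hsnorm{u \tensort v}^2 = \sc{u,u}\sc{v,v}$ at once; the trace-norm claim then follows because $u \tensort v$ has rank at most one, so it has a single nonzero singular value equal to $\hnorm{u}\hnorm{v}$, whence $\tnorm{u \tensort v}$ (the sum of singular values) coincides with $\hsnorm{u \tensort v}$ (their root-sum-of-squares), both equal to $\hnorm{u}\hnorm{v}$.

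Finally, (8) is simply (7) transported to the Hilbert space $\mathcal H = \HS{H}$: by definition $A \btensort B$ is the rank-one operator $C \mapsto \HSsc{C,B}A$ on $\mathcal H$, so its adjoint is $B \btensort A$. None of these computations presents a genuine difficulty; the only points requiring a little care are the legitimacy of interchanging the Parseval summations with the inner products (justified by absolute convergence, since $u \tensort v \in \HS{H}$) and tracking which slot is conjugate-linear, a distinction that becomes vacuous over $\bR$. The main, and altogether mild, obstacle is thus purely bookkeeping: keeping the chosen ordering of the eight items internally consistent and confirming the rank-one reduction invoked in (5).
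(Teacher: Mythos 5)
Your proof is correct, and it is exactly the routine verification the paper has in mind when it omits the proof as ``follows from the definition and the properties of the inner product'': each identity reduces to the rank-one formula $(u \tensort v)h = \sc{h,v}u$, Parseval's identity, and (for items 5 and 8) the rank-one singular value decomposition and the transport of item 7 to $\HS{H}$. The ordering of the items, the absolute-convergence justification for the basis sums, and the remark that conjugate-linearity is vacuous over $\bR$ are all sound, so nothing is missing.
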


Recall that for $A \in \bounded{H_1}, B \in \bounded{H_2}$,  the operator $(A
\kprod B) \in \bounded{ H_1 \tensor H_2 }$  is defined by the linear
extension of
\[
  (A \kprod B)(u \tensor v) = Au \tensor Bv, u \in H_1, v \in H_2
\]
Furthermore, we have $(A \kprod B)^{\dag} = A^{\dag}
\kprod B^{\dag}$, $\opnorm{A \kprod B} = \opnorm{A}\opnorm{B}$, and
$(A \kprod B)(C \kprod D) = (AC \kprod CD)$ for $A,C \in
\bounded{H_1}$, $B,D \in \bounded{H_2}$. For $u,v \in H_1, f,g \in
H_2$, $(u\tensort v)\kprod (f\tensort g) = (u \tensor f) \btensort
(v \tensor g)$ If $A \in \tc{H_1}, B \in \tc{H_2}$, then $A \kprod B
\in \tc{H_1 \tensor H_2}$, 
\[
  \tnorm{A \kprod B} \leq \tnorm{A}\tnorm{B},
\]
and $\trace(A \kprod B) = \trace(A) \trace(B)$.

In the case $H_1 = L^2\left( [-S, S]^d, \bR \right)$, $H_2 = L^2\left( [0,T], \bR \right)$, with $S, T > 0$,
if $A \in \HS{H_1}$, $B \in \HS{H_2}$ are
Hilbert--Schmidt operators (hence also integral operators, with
kernels $a(s,s), b(t,t)$, respectively), the operator $A \kprod B
\in \HS{H_1 \tensor H_2} = \HS{L^2\left( [-S, S]^d \times [0,T], \bR \right)}$ is also an integral operator with
kernel $k(s,t, s',t') = a(s,s') {b(t,t')},$ that is, 
\[
  (A \kprod B) u (s,t) = \int_{0}^T \int_{[-S, S]^d} k(s,t,s',t')u(s',t')ds dt.
\]

\subsection{Random Elements in Banach Spaces}

We understand random elements of  a separable Banach space  $(B,
\hnorm{\cdot})$  in the Bochner sense \citep[e.g.][]{Ryan:2002}. A
random element $X \in B$ satisfying $\ee\hnorm{X} < \infty$ has a
mean $\ee X \in B$, which satisfies $S (\ee X) = \ee(SX)$ for all
bounded linear operator $S : B \rightarrow B'$, where $B'$ is
another Banach space.

\subsection{Random Trace-class Operators}

If $X \in \tc{H}$ is a random element satisfying $\ee \tnorm{X} <
\infty$, i.e.\ a \emph{random trace-class operator}, then $\ee{\trace(A X)} = \trace(A \ee X)$ for any $A \in
\bounded{H}$. Furthermore, if $X' \in \tc{H}$ is another random
element such that $\ee(\tnorm{X}\tnorm{X'}) < \infty$, then
\begin{align*}
  \ee \left( \trace\left[ A X \right] \trace[A' X']\right) &= \ee \left( \trace\left[ AX \bkprod A' X' \right] \right)
  \\ &= \ee \left(  \trace\left[ (A \bkprod A') (X \bkprod X') \right] \right)
  \\ &=   \trace\left[ (A \bkprod A') \ee\left(  X \bkprod X'  \right)\right],
\end{align*}
for any $A, A' \in \bounded{H}$.

The second-order structure of a random element $X \in \tc{H}$
satisfying $\ee \tnorm{X}^2 < \infty$ is encoded by the covariance
functional $\Gamma: \bounded{H} \times \bounded{H} \rightarrow \bR$,
which is defined by \[
  \Gamma(A,B) = \cov \left( \trace[AX], \trace[BY] \right).
\]
Since
\begin{align*}
\Gamma(A,B) &= \ee\left( \trace[A(X-\mu)]\trace[B(X-\mu)] \right)
  \\ &= \trace\left[ (A \bkprod B) \ee \left( (X-\mu) \bkprod (X - \mu) \right) \right],
\end{align*}
the second-order structure is also encoded by the \emph{ generalized
covariance operator } $$\Gamma = \ee\left( (X - \mu) \bkprod (X -
\mu) \right) \in \tc{H \tensor H}.$$

\begin{prop}
  \label{prop:kronecker-product-of-gaussian-and-fixed-operator}
  Let $H_1, H_2$ be real separable Hilbert spaces. Let $Y \in \tc{H_1}$ be a Gaussian
  random element such that $\ee \tnorm{Y}^2 < \infty$.  Then, for any $T \in \tc{H_2}$
  fixed, $Y \kprod T$ is a Gaussian random element of $\tc{H_1 \tensor H_2}$.
  \begin{proof}
    We need to show that for all $S \in \bounded{H_1 \tensor H_2}$,
    $\trace(S (Y \kprod T))$ is Gaussian. This can be reduced to showing that
    $\trace(S_n(Y \kprod T))$ is Gaussian for all $n \geq 1$, where $(S_n)$ is a sequence
    of operators in $\bounded{H_1 \tensor H_2}$ that converges weakly to $S$. Indeed,
    letting $D_n = S_n - S$,  we
    have
    \begin{align*}
      \eee{ \left( \trace(S_n (Y \kprod T)) - \trace(S (Y \kprod T) \right)^2 } &= \eee{
    \trace(D_n (Y \kprod T))^2}
    \\ &= \ee \trace\left( (D_n \kprod D_n)  (Y \kprod T \kprod Y \kprod T) \right)
    \\ &= \trace\left( (D_n \kprod D_n)  \ee (Y \kprod T \kprod Y \kprod T) \right),
    \end{align*}
    where the last equality is valid since 
    \[
      \ee \tnorm{(D_n \kprod D_n) (Y \kprod T \kprod Y \kprod T)  }  \leq  \opnorm{D_n}^2 \tnorm{T}^2 \ee
      \tnorm{Y}^2 < \infty.
    \]
    Lemma~\ref{lma:weak-convergence-and-kronecker-product} tells us that $D_n \kprod D_n$
    converges weakly to zero, and since 
    \[
      \tnorm{\ee (Y \kprod T \kprod Y \kprod T)
      } \leq \tnorm{T}^2 \ee \tnorm{Y}^2 < \infty,
    \]
  Lemma~\ref{lma:weak-convergence-and-trace} tells us that $\eee{ \left( \trace(S_n (Y
  \kprod T)) - \trace(S (Y \kprod T) \right)^2 } \raz$ as $n \rainf$. Therefore, since the
  space of Gaussian random variables is close under the $L^2(\Omega, \mathbb P)$ norm,
  $\trace(S(Y \kprod T))$ is Gaussian if $\trace(S_n(Y \kprod T))$ is
  Gaussian for all $n \geq 1$.
  Lemma~\ref{lma:weak-approximation-of-bounded-operator-by-sum-of-kprod} tells us that we
  can choose $S_n = \sum_{i = 1}^n A_i \kprod B_i$, where $A_i \in \bounded{H_1}$ and $B_i
  \in \bounded{H_2}$. In this case,
  \[
    \trace(S_n(Y \kprod T)) = \sum_{i = 1}^n \trace\left( (A_i \kprod B_i) (Y \kprod T)
    \right) = \sum_{i = 1}^n \trace(A_i Y) \trace(B_i T),
  \]
 which is Gaussian since $\trace(A_i Y)$ is Gaussian for each $i \geq 1$.
  \end{proof}
\end{prop}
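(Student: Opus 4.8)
The plan is to establish Gaussianity of $Y \kprod T$ by probing it with the continuous linear functionals on $\tc{H_1 \tensor H_2}$. Since these functionals are exactly the maps $Z \mapsto \trace(S Z)$ for $S \in \bounded{H_1 \tensor H_2}$, a random element $Z \in \tc{H_1 \tensor H_2}$ is Gaussian if and only if $\trace(SZ)$ is a real Gaussian variable for every such $S$. The whole argument therefore reduces to showing that $\trace\left( S (Y \kprod T) \right)$ is Gaussian for an arbitrary fixed $S \in \bounded{H_1 \tensor H_2}$.

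First I would dispatch the structured case $S = \sum_{i=1}^{n} A_i \kprod B_i$ with $A_i \in \bounded{H_1}$, $B_i \in \bounded{H_2}$. Using the composition rule $(A_i \kprod B_i)(Y \kprod T) = (A_i Y) \kprod (B_i T)$ together with the multiplicativity $\trace(P \kprod Q) = \trace(P)\trace(Q)$, one gets
\[
  \trace\left( S (Y \kprod T) \right) = \sum_{i=1}^{n} \trace(A_i Y)\, \trace(B_i T).
\]
Here each $\trace(B_i T)$ is a deterministic scalar, and $\left( \trace(A_1 Y), \ldots, \trace(A_n Y) \right)$ is jointly Gaussian because $Y$ is a Gaussian random element of $\tc{H_1}$; hence the displayed quantity, being a fixed linear combination of jointly Gaussian variables, is Gaussian.

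The second step removes the structural restriction on $S$ by weak approximation: by Lemma~\ref{lma:weak-approximation-of-bounded-operator-by-sum-of-kprod} I may choose $S_n$ of the above form with $S_n \convw S$. Writing $D_n = S_n - S$, the decisive estimate is $\eee{ \trace(D_n (Y \kprod T))^2 } \raz$. Passing through the generalized second-moment operator (justified by $\ee \tnorm{Y}^2 < \infty$ and $\tnorm{T} < \infty$) gives
\[
  \eee{ \trace\bigl( D_n (Y \kprod T) \bigr)^2 } = \trace\left( (D_n \bkprod D_n)\, \ee\bigl( (Y \kprod T) \bkprod (Y \kprod T) \bigr) \right).
\]
Here $\ee\bigl( (Y \kprod T) \bkprod (Y \kprod T) \bigr)$ is trace-class, with $\tnorm{\cdot} \leq \tnorm{T}^2 \ee \tnorm{Y}^2 < \infty$, while $D_n \bkprod D_n$ is weakly null by Lemma~\ref{lma:weak-convergence-and-kronecker-product}. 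Since pairing a weakly-null sequence of bounded operators against a fixed trace-class operator in the trace tends to zero (Lemma~\ref{lma:weak-convergence-and-trace}), the right-hand side vanishes as $n \rainf$.

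Finally, because the set of Gaussian random variables is closed in $L^2(\Omega, \mathbb P)$, the $L^2$-limit $\trace(S (Y \kprod T))$ of the Gaussian sequence $\trace(S_n (Y \kprod T))$ is itself Gaussian; as $S$ was arbitrary, $Y \kprod T$ is Gaussian. I expect the main obstacle to be precisely the convergence estimate in the third step: the algebra of the structured case is routine, but verifying that $D_n \bkprod D_n$ is genuinely weakly null and that the trace functional is continuous along this weak convergence --- so that the second moment collapses --- is where the real work (and the auxiliary weak-convergence lemmas) must be invested.
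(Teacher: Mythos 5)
Your proposal is correct and follows essentially the same route as the paper's own proof: the same reduction to scalar functionals $\trace(S(Y \kprod T))$, the same three auxiliary lemmas (weak approximation of $S$ by sums $\sum_i A_i \kprod B_i$, weak convergence of $D_n \kprod D_n$, and trace continuity along weak convergence), and the same $L^2$-closure argument, with only the order of the two steps swapped. If anything, your treatment of the structured case is slightly more careful than the paper's, since you invoke \emph{joint} Gaussianity of $\left( \trace(A_1 Y), \ldots, \trace(A_n Y) \right)$ rather than merely the marginal Gaussianity of each $\trace(A_i Y)$, which is the correct justification for the sum being Gaussian.
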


\subsection{Technical results}
\label{sec:technical}

Recall that $(T_n)_{n \geq 1} \subset \bounded{H}$ is said to
converge weakly to $T \in \bounded{H}$ if for all $u,v \in H$,
$\sc{T_n u, v} \rightarrow \sc{Tu,v}$ as $ n \rainf$.

\begin{lma}
  \label{lma:weak-approximation-of-bounded-operator-by-sum-of-kprod}
  Let $H_1, H_2$ be real separable Hilbert spaces.
  For any $S \in \bounded{H_1 \tensor H_2}$, there exists a sequences of operators
  $(S_n)_{n \geq 1} \subset \bounded{H_1 \tensor H_2}$ of the form $S_n = \sum_{i = 1}^n A_i \kprod
  B_i$ with $A_i \in \bounded{H_1}$, $B_i \in \bounded{H_2}$, such that $S_n$ converges
  weakly to $S$.
  \begin{proof}
  For $S \in \bounded{H_1 \tensor H_2}$, define
  \begin{align*}
    S_N &= \sum_{n,n',m,m'=1}^N \sc{S e_n \tensor f_m, e_{n'} \tensor f_{m'}} (e_{n'} \tensor f_{m'}) \tensort
    (e_n \tensor f_m),
    \\ &= \sum_{n,n',m,m'=1}^N \sc{S e_n \tensor f_m, e_{n'} \tensor f_{m'}} (e_{n'}
    \tensor e_n) \kprod (f_{m'} \tensor f_m),
  \end{align*}
  where $(e_n)_{n \geq 1}$ is an orthonormal basis of $H_1$, and $(f_m)_{m \geq 1}$ is an orthonormal basis of
  $H_2$.
  First,  notice that we have the following equality:
  \begin{align}
    \sc{S_N e_i \tensor f_j, e_k \tensor f_l} &= \sum_{n,m=1}^N \sum_{n',m'=1}^N \sc{S e_n \tensor f_m, e_{n'} \tensor
  f_{m'}} \sc{e_n \tensor f_m, e_i \tensor f_j} 
  \sc{e_{n'} \tensor f_{m'}, e_k \tensor f_l}
  \nonumber
  \\ &= \sc{S e_i \tensor f_j, e_k \tensor f_l} \ind{i \leq N} \ind{j \leq N}\ind{k \leq
N}\ind{l \leq N}. \label{eq:weak-conv-on-simple-tensor}
  \end{align}
  Therefore, for general $g,h \in H_1 \tensor H_2$,  $g = \sum_{i,j \geq 1} \alpha_{ij} e_i \tensor f_j$ and
  \[
    h = \sum_{k,l \geq 1} \beta_{kl} e_k \tensor f_l,
  \]
  we have
  \begin{align*}
    \sc{S_N h, g} &= \sum_{i,j \geq 1}\sum_{k,l \geq 1}\alpha_{ij} \beta_{kl} \sc{S_N(e_i \tensor f_j), e_k \tensor f_l}
    \\ &= \sum_{i,j \geq 1}\sum_{k,l \geq 1} \alpha_{ij} \beta_{kl} \sc{S(e_i \tensor f_j), e_k \tensor f_l} \ind{i \leq N} \ind{j \leq N}\ind{k \leq
N}\ind{l \leq N}
    \tag{Using \eqref{eq:weak-conv-on-simple-tensor}}
    \\ &=  \sc{S \left( \sum_{i,j = 1}^N  \alpha_{ij} e_i \tensor f_j \right), \sum_{k,l =
1}^N \beta_{kl} e_k \tensor f_l}.
  \end{align*}
  Therefore, by continuity of the inner product and the continuity of $S$, we have
  $\lim_{N \rainf} \sc{S_N h, g} = \sc{S h, g}$.
  \end{proof}
\end{lma}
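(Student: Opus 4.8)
The plan is to approximate $S$ by its compressions onto larger and larger coordinate subspaces of a product basis, and to observe that such compressions are automatically finite sums of $\kprod$-products. Concretely, I would fix an orthonormal basis $(e_n)_{n \geq 1}$ of $H_1$ and an orthonormal basis $(f_m)_{m \geq 1}$ of $H_2$, so that $(e_n \tensor f_m)_{n,m}$ is an orthonormal basis of $H_1 \tensor H_2$, and set
\[
  S_N = \sum_{n,n',m,m'=1}^N \sc{S(e_n \tensor f_m), e_{n'} \tensor f_{m'}}\,(e_{n'} \tensor f_{m'}) \tensort (e_n \tensor f_m),
\]
the finite-rank operator obtained by truncating the ``matrix'' of $S$ in this basis.

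The key algebraic step is to recognise each rank-one summand as a single $\kprod$-product. The tensor identity $(u \tensort v) \kprod (f \tensort g) = (u \tensor f) \btensort (v \tensor g)$ from Appendix~\ref{app:Hilbert-spaces} shows that each term $(e_{n'} \tensor f_{m'}) \tensort (e_n \tensor f_m)$ equals $(e_{n'} \tensort e_n) \kprod (f_{m'} \tensort f_m)$, with $e_{n'} \tensort e_n \in \bounded{H_1}$ and $f_{m'} \tensort f_m \in \bounded{H_2}$. Hence $S_N$ has exactly the prescribed form $\sum_i A_i \kprod B_i$.

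It then remains to establish the weak convergence $S_N \to S$. First I would compute the bilinear form of $S_N$ on basis vectors: exploiting orthonormality, $\sc{S_N(e_i \tensor f_j), e_k \tensor f_l}$ collapses to $\sc{S(e_i \tensor f_j), e_k \tensor f_l}$ when all of $i,j,k,l$ are at most $N$, and vanishes otherwise. Expanding arbitrary $g = \sum_{i,j} \alpha_{ij} e_i \tensor f_j$ and $h = \sum_{k,l} \beta_{kl} e_k \tensor f_l$, the resulting double series for $\sc{S_N h, g}$ truncates to
\[
  \sc{S\Big( \sum_{i,j=1}^N \alpha_{ij} e_i \tensor f_j \Big), \sum_{k,l=1}^N \beta_{kl} e_k \tensor f_l}.
\]
As $N \rainf$ the two truncated partial sums converge in norm to $h$ and $g$, so continuity of $S$ and of the inner product yields $\sc{S_N h, g} \rightarrow \sc{S h, g}$, which is precisely weak convergence.

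The argument is essentially bookkeeping, and the one point I would flag is that only weak convergence can be expected here: the compressions $S_N$ need not converge to $S$ in operator norm, nor even be uniformly bounded. This causes no difficulty, however, since weak convergence is tested against \emph{fixed} $g,h$, for which the defining sums reduce to finitely many nonzero terms; the only real care is in verifying the truncation identity on basis elements through the orthonormality relations.
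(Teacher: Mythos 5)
Your proof is correct and is essentially the paper's own argument: truncate $S$ in a product orthonormal basis, rewrite each rank-one term via $(u \tensort v)\kprod(f \tensort g) = (u \tensor f)\btensort(v \tensor g)$ so that $S_N$ is a finite sum of $\kprod$-products, verify the truncation identity on basis vectors, and conclude by continuity of $S$ and of the inner product. One correction to your closing aside: the compressions are in fact uniformly bounded, since $S_N = P_N S P_N$ with $P_N$ the orthogonal projection onto $\mathrm{span}\{e_i \tensor f_j : 1 \leq i,j \leq N\}$ gives $\opnorm{S_N} \leq \opnorm{S}$; what can genuinely fail is only convergence in operator norm, which, as you correctly note, is not needed for weak convergence.
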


\begin{lma}
  \label{lma:weak-convergence-and-kronecker-product}
  Let $(S_n)_{n \geq 1} \subset \bounded{H}$ be a sequence of operators converging
  weakly to $S \in \bounded{H}$. Then, $S_n \kprod S_n$ converges weakly to $S \kprod S
  \in \bounded{H \tensor H}$.
  \begin{proof}
    For $u,v,z,w \in H$, we have
    \begin{align*}
      \lim_{n \rainf } \sc{\left( S_n \kprod S_n \right) (u \tensor v), z \tensor w } & =
      \lim_{n \rainf} \sc{S_n u,
      z}\sc{S_n v, w}
      \\ & = \sc{S u, z}\sc{S v, w }
      \\ & = \sc{\left( S \kprod S
      \right)(u \tensor v), z \tensor w}.
    \end{align*}
    Now for general $g,h \in H \tensor H$, let us write  $g = \sum_{i,j \geq 1} \alpha_{ij}
    e_i \tensor e_j$, $h = \sum_{k,l \geq 1}
    \beta_{kl} e_k \tensor e_l$,
    where $(e_i)_{i \geq 1}$ is an orthonormal basis of $H$, and let
    $g^{N} = \sum_{i,j = 1}^{N} \alpha_{ij} e_i \tensor e_j$, $h^N = \sum_{k,l=1}^N
    \beta_{kl} e_k \tensor e_l$ for $N \geq 1$. Also, let 
    \[
      K = \max\{ \sup_{n \geq 1} \opnorm{S_n}, \opnorm{S} \}, 
    \]
    and notice that $K < \infty$ by the uniform boundedness principle \citep[e.g.][]{rudin:1991}.
    We have
    \begin{align*}
      \left| \sc{(S_n \kprod S_n) g, h}  - \sc{(S \kprod S)g, h} \right| & \leq
      | \sc{(S_n \kprod S_n) g, h}| + |\sc{(S_n \kprod S_n) g^N, h^N}|
      \\ & \quad + \left| \sc{(S_n \kprod S_n) g^N, h^N}  - \sc{(S \kprod S)g^N, h^N}
      \right|
      \\ & \quad + \left| \sc{(S \kprod S) g^N, h^N}  - \sc{(S \kprod S)g, h}
      \right|
      \\ &  \leq K^2 \hnorm{g} \hnorm{h - h^N} + K^2 \hnorm{g - g^N} \hnorm{h^N}
      \\ & \quad + \Big| \sum_{i,j,k,l = 1}^N \alpha_{ij} \beta_{kl} \big[ \sc{(S_n -
  S)e_i, e_k}\sc{S_n e_j, e_l}
\\ & \quad\quad \quad\quad\quad\quad\quad\quad \quad +   \sc{Se_i,e_k}\sc{(S_n - S)e_j, e_l} \big]
      \Big|
      \\ & \quad + K^2 \hnorm{g} \hnorm{h - h^N} + K^2 \hnorm{g^N - g} \hnorm{h}
      \\ &  \leq 4 K^2 \max\left\{ \hnorm{g} , \hnorm{h}
    \right\} \max\left\{ \hnorm{h - h^N} , \hnorm{g - g^N} \right\}
      \\ &  \quad + 2 K N^4 \left( \hnorm{g} + \hnorm{h} \right) \max_{1 \leq i,k \leq N}
    \left\{ \left|
          \sc{(S_n - S) e_i, e_k} \right| \right\}.
    \end{align*}
    Now, for any $\vep > 0$, choose $N > 1$ such that $\max\left\{ \hnorm{h - h^N},
    \hnorm{g - g^N} \right\} \leq \vep\left( 6 K^2 \max\left\{ \hnorm{g} , \hnorm{h}
    \right\}\right)^{-1}$. Since $N$ is fixed, we can find an $n' \geq 0$ such that
    \[
      \max_{1 \leq i,k \leq N} \left| \sc{(S_n - S) e_i, e_k} \right| \leq
      \frac{\vep}{6 K N^4 ( \hnorm{g} + \hnorm{h} )}, \quad \text{for all } n \geq n'.
    \]
    Then, for all $n \geq n'$, we have
      $| \ssc{(S_n \kprod S_n) g, h}  - \ssc{(S \kprod S)g, h} |  \leq \vep$, therefore
      $S_n \kprod S_n$ converges weakly to $S \kprod S$.
  \end{proof}
\end{lma}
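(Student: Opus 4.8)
The plan is to verify the defining property of weak operator convergence directly, namely that $\sc{(S_n \kprod S_n)g, h} \rightarrow \sc{(S \kprod S)g, h}$ as $n \rainf$, for every $g, h \in H \tensor H$. First I would dispatch the case of simple tensors. For $u,v,z,w \in H$, the definition of $\kprod$ and of the inner product on $H \tensor H$ give
\[
  \sc{(S_n \kprod S_n)(u \tensor v), z \tensor w} = \sc{S_n u, z}\sc{S_n v, w}.
\]
Weak convergence of $(S_n)$ makes each factor converge, and the product of two convergent scalar sequences converges to the product of the limits; this settles simple tensors, hence by bilinearity all finite linear combinations of them.

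The substantive work is the passage to arbitrary $g, h$. The crucial preliminary fact is uniform boundedness: weak convergence of $(S_n)$ forces $\sup_n \opnorm{S_n} < \infty$ by the uniform boundedness principle, so with $K = \max\{\sup_n \opnorm{S_n}, \opnorm{S}\}$ I obtain $\opnorm{S_n \kprod S_n} = \opnorm{S_n}^2 \leq K^2$ and likewise $\opnorm{S \kprod S} \leq K^2$. Fixing an orthonormal basis $(e_i)_{i \geq 1}$ of $H$ and expanding $g = \sum_{i,j \geq 1} \alpha_{ij} e_i \tensor e_j$ and $h = \sum_{k,l \geq 1} \beta_{kl} e_k \tensor e_l$, I would introduce the finite truncations $g^N, h^N$ keeping only indices up to $N$. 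A triangle inequality then decomposes $\sc{(S_n \kprod S_n)g, h} - \sc{(S \kprod S)g, h}$ into two approximation terms---each bounded, using the uniform bound, by a constant multiple of $K^2\hnorm{g - g^N}$ or $K^2\hnorm{h - h^N}$---and one finite double sum over the retained indices whose summands feature $\sc{(S_n - S)e_i, e_k}$ and converge to zero for fixed $N$ as $n \rainf$.

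The delicate point, and the main obstacle, is the order of limits in this $\vep$-argument: the finite-sum term carries a combinatorial factor growing like $N^4$, so one cannot let $N$ and $n$ tend to infinity simultaneously. The remedy is to fix $N$ first, large enough that $\hnorm{g - g^N}$ and $\hnorm{h - h^N}$ render the uniformly-bounded tail terms smaller than $\vep$; only then, with $N$ frozen, do I invoke the simple-tensor convergence $\sc{(S_n - S)e_i, e_k} \raz$ to drive the (finite, $N$-dependent) main term below $\vep$ for all $n$ large enough. Combining the two stages bounds the whole difference by a fixed multiple of $\vep$, and since $\vep$ was arbitrary this establishes the weak convergence of $S_n \kprod S_n$ to $S \kprod S$.
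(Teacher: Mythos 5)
Your proposal is correct and follows essentially the same route as the paper's proof: convergence on simple tensors, uniform boundedness of $\opnorm{S_n}$ via the uniform boundedness principle, truncation of $g$ and $h$ in an orthonormal basis, and the crucial ordering of limits (fix $N$ to control the tails uniformly in $n$, then let $n \rainf$ to kill the finite sum). No gaps to report.
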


\begin{lma}
  \label{lma:weak-convergence-and-trace}
  Let $(S_n)_{n \geq 1} \subset \bounded{H}$ be a sequence of operators converging
  weakly to $S \in \bounded{H}$. Then, for all $T \in \tc{H}$, we have
  \[
    \trace(S_n T) \rightarrow \trace(ST), \quad n \rainf.
  \]
  \begin{proof}
    Let $T = \sum_{l \geq 1 } \lambda_l u_l \tensort v_l$ be the singular value
    decomposition of $T$. Without loss of generality, $\left( v_l \right)_{l \geq 1}$ is an
    orthonormal basis of $H$. We have
    \begin{align*}
      \lim_{n \rainf} \trace(S_n T) & = \lim_{n \rainf} \sum_{l \geq 1} \sc{v_l, S_n T v_l}
      \\ &= \lim_{n \rainf} \sum_{l \geq 1} \lambda_l \sc{v_l, S_n u_l}
      \\ &= \sum_{l \geq 1} \lambda_l \lim_{n \rainf} \sc{v_l, S_n u_l}
      \\ &= \sum_{l \geq 1} \lambda_l \sc{v_l, S u_l}
      \\ &= \trace(S T),
    \end{align*}
    where the third equality is justified by the dominated convergence theorem since
    $\sum_{l \geq 1} |\lambda_l| |\sc{v_l, S_n u_l}| \leq \sup_{n \geq 1} \left\{
      \opnorm{S_n}
    \right\} \tnorm{T} < \infty$ by the uniform boundedness theorem
  \end{proof}
\end{lma}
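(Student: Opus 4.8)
The plan is to collapse the trace, which is itself an infinite series, onto a single scalar series indexed by the singular values of $T$, and then to pass the weak limit through that series by dominated convergence, the dominating bound being supplied by the uniform boundedness principle. First I would write $T$ in its singular value (Schmidt) decomposition $T = \sum_{l \geq 1} \sigma_l\, u_l \tensort v_l$, where $(\sigma_l)_{l \geq 1}$ are the singular values with $\sum_{l \geq 1}\sigma_l = \tnorm{T} < \infty$ and $(u_l)_{l\geq 1}$, $(v_l)_{l\geq 1}$ are orthonormal systems in $H$. Since left-multiplication by a bounded operator acts as $S_n(u_l \tensort v_l) = (S_n u_l) \tensort v_l$ (immediate from the definition of $\tensort$), and since $\trace(f \tensort g) = \sc{f,g}$ by Proposition~\ref{prop:properties-tensor}, the continuity of multiplication in trace norm ($\tnorm{S_n A} \leq \opnorm{S_n}\tnorm{A}$) together with the continuity and linearity of the trace on $\tc{H}$ gives the term-by-term identities
\[
  \trace(S_n T) = \sum_{l \geq 1} \sigma_l \sc{S_n u_l, v_l},
  \qquad
  \trace(S T) = \sum_{l \geq 1} \sigma_l \sc{S u_l, v_l}.
\]
For each fixed $l$, weak convergence of $(S_n)$ gives $\sc{S_n u_l, v_l} \rightarrow \sc{S u_l, v_l}$ as $n \rainf$, so the whole claim reduces to exchanging this limit with the summation over $l$.

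To justify the exchange I would apply dominated convergence on the index set $\{1,2,\ldots\}$ equipped with the measure $l \mapsto \sigma_l$. Summability of $(\sigma_l)$ provides finite total mass, so it remains only to dominate the summands uniformly in $n$. The key input here is the uniform boundedness principle (Banach--Steinhaus): a weakly convergent sequence of bounded operators is pointwise bounded, hence $K := \sup_{n \geq 1}\opnorm{S_n} < \infty$. Because $u_l, v_l$ are unit vectors, $\left| \sc{S_n u_l, v_l} \right| \leq \opnorm{S_n} \leq K$ for all $n$ and $l$, so $l \mapsto K\sigma_l$ is a summable dominating function independent of $n$. Dominated convergence then yields $\lim_{n \rainf}\trace(S_n T) = \sum_{l \geq 1}\sigma_l \sc{S u_l, v_l} = \trace(S T)$.

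The only genuinely delicate step is the interchange of the limit with the infinite sum; the SVD reduction and the trace identity are routine bookkeeping, and this interchange is where I expect the main obstacle to lie. It is resolved precisely by the Banach--Steinhaus bound $K < \infty$ paired with $\tnorm{T} = \sum_l \sigma_l < \infty$. I would emphasise that weak \emph{operator} convergence, rather than norm convergence, is exactly the right amount of hypothesis: it delivers the pointwise limits $\sc{S_n u_l, v_l} \to \sc{S u_l, v_l}$, while the trace-class assumption on $T$ contributes the absolute summability that controls the tail uniformly in $n$.
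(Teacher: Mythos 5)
Your proposal is correct and follows essentially the same route as the paper's proof: expand $T$ in its singular value decomposition, reduce $\trace(S_n T)$ to the scalar series $\sum_{l\geq 1}\lambda_l \sc{S_n u_l, v_l}$, and interchange limit and sum by dominated convergence, with the uniform bound $\sup_{n\geq 1}\opnorm{S_n}<\infty$ supplied by the uniform boundedness principle. The only cosmetic difference is that you obtain the series identity by applying the trace term-by-term to the SVD expansion, whereas the paper computes the trace over the orthonormal basis $(v_l)_{l\geq 1}$ and uses $Tv_l = \lambda_l u_l$; these are interchangeable bookkeeping steps.
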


\begin{lma}
  \label{lma:approx-of-trace-class-operator-on-tensor-space}
  The operators of the form $\sum_{n=1}^N A_n \kprod B_n$, where  $A_n: H_1 \rightarrow H_1$ and
  $B_n:H_2 \rightarrow H_2$ are finite rank operators, and $N < \infty$, are dense in the Banach space $\tc{H_1 \tensor H_2}$.
  \begin{proof}
    Let $T \in \tc{H_1 \tensor H_2}$. Then $T = \sum_{n \geq 1} \lambda_n U_n \tensort V_n$, with convergence
    in trace norm, where
    $(\lambda_n)_{n \geq 1}$ is a summable decreasing sequence of positive numbers, and
    $(U_n)_{n \geq 1} \subset H_1 \tensor H_2$ and $(V_n)_{n \geq 1} \subset H_1 \tensor
    H_2$ are orthonormal sequences. Each $U_n$ can be written as $U_n = \sum_{l \geq 1}
    \lambda_{n,l} u_{n,l}^{(1)} \tensor u_{n,l}^{(2)}$, with convergence in the norm of $H_1 \tensor
    H_2$, that we denote by $\hnorm{\cdot}_2$. Similarly, $V_n = \sum_{l \geq 1} \gamma_{n,l}
    v_{n,l}^{(1)} \tensor v_{n,l}^{(2)}$. Let 
    \[
      U_n^M = \sum_{l = 1}^M \lambda_{n,l} u_{n,l}^{(1)} \tensor u_{n,l}^{(2)} \quad \text{ and } \quad
      V_n^M = \sum_{l = 1} \gamma_{n,l} v_{n,l}^{(1)} \tensor v_{n,l}^{(2)}.
    \]
    Fix $\vep > 0$, and choose $N$ such that $\tnorm{T - \sum_{n = 1}^{N}\lambda_n U_n \tensort V_n} \leq
    \vep/2$. For $M \geq 1$ fixed, we have
    \begin{align*}
      \tnorm{T - \sum_{n = 1}^{N}\lambda_n U_n^M \tensort V_n^M} &\leq \tnorm{T - \sum_{n = 1}^{N}\lambda_n
    U_n \tensort V_n}
    \\ & \hspace{1cm}+ \tnorm{\sum_{n = 1}^{N} \lambda_n \left[ (U_n - U_n^M) \tensort V_n + U_n^M \tensort (V_n -
    V_n^M)   \right]}
 \\   &\leq \vep/2 + \sum_{n = 1}^N \lambda_n \left[ \tnorm{(U_n - U_n^M) \tensort V_n } + \tnorm{U_n^M \tensort
(V_n - V_n^M)} \right]
 \\   &\leq \vep/2 + \sum_{n = 1}^N \lambda_n \left(\hnorm{U_n - U_n^M}_2 \hnorm{ V_n }_2 + \hnorm{U_n^M}_2
 \hnorm{V_n - V_n^M}_2 \right).
    \end{align*}
    Take $M \geq 1$ such that
    \[
      \max_{n=1,\ldots, N} \left\{ \hnorm{U_n - U_n^M}_2, \hnorm{V_n - V_n^M}_2
      \right\} \leq \min\left\{\frac{\vep}{6 \trace{C}}, 1\right\}.
    \]
    Then,
    since $U_n, V_n$ have unit length, and  $\hnorm{U_n^M}_2 \leq 1 + \hnorm{U_n - U_n^M} \leq 2$ for
    $n=1,\ldots, N$, we have
    \begin{align*}
      \tnorm{T - \sum_{n = 1}^{N}\lambda_n U_n^M \tensort V_n^M} &\leq \vep/2 +  3 \max_{n=1,\ldots, N} \left\{
    \hnorm{U_n - U_n^M}_2, \hnorm{V_n - V_n^M}_2 \right\} \cdot \left( \sum_{n = 1}^N \lambda_n \right)
    \\ &\leq \vep/2 + 3\frac{\vep}{6 \trace{C}} \trace{C}
    \\ &=\vep.
    \end{align*}
Since $\sum_{n = 1}^{N}\lambda_n U_n^M \tensort V_n^M = \sum_{n =
1}^{N} \sum_{j,l = 1}^{K} \left( \lambda_n \lambda_{n,l}
\gamma_{n,j} u_{n,l}^{(1)}  \tensor v_{n,j}^{(1)} \right) \kprod
\left( u_{n,l}^{(2)} \tensor v_{n,j}^{(2)} \right)$ the proof is
finished.
  \end{proof}
\end{lma}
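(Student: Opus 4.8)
The plan is to approximate an arbitrary $T \in \tc{H_1 \tensor H_2}$ by operators of the prescribed form through three successive reductions, each contributing a controllable piece of the total trace-norm error. First I reduce $T$ to a \emph{finite}-rank operator; then I replace the (abstract) factors of that finite-rank operator by finite linear combinations of simple tensors $u \tensor v$, $u \in H_1$, $v \in H_2$; and finally I invoke the elementary identity
\[
  (u \tensor v) \tensort (u' \tensor v') = (u \tensort u') \kprod (v \tensort v'),
\]
recorded in Appendix~\ref{app:Hilbert-spaces}, which exhibits any rank-one operator on $H = H_1 \tensor H_2$ built from simple tensors as a $\kprod$-product of the rank-one (hence finite-rank) operators $u \tensort u'$ on $H_1$ and $v \tensort v'$ on $H_2$. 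Linearity then turns the final approximant into a finite sum $\sum_n A_n \kprod B_n$ of exactly the required type.

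For the first reduction I would use the Schmidt decomposition of a trace-class operator, $T = \sum_{n \geq 1} \lambda_n U_n \tensort V_n$ with $\lambda_n \geq 0$ summable and $(U_n)_{n \ge 1}, (V_n)_{n \ge 1}$ orthonormal sequences in $H$, the series converging in trace norm; fixing $\vep > 0$, summability lets me pick $N$ with $\tnorm{T - \sum_{n=1}^N \lambda_n U_n \tensort V_n} \leq \vep/2$. For the second reduction, each $U_n \in H$ expands in the $H$-norm as $U_n = \sum_{l \ge 1} \lambda_{n,l}\, u_{n,l}^{(1)} \tensor u_{n,l}^{(2)}$, and likewise $V_n$; truncating at level $M$ yields finite-rank-simple-tensor approximants $U_n^M, V_n^M$. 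To bound the error of replacing $U_n \tensort V_n$ by $U_n^M \tensort V_n^M$ I add and subtract, writing the difference as $(U_n - U_n^M) \tensort V_n + U_n^M \tensort (V_n - V_n^M)$, and use the factorisation of the trace norm of a rank-one operator, $\tnorm{a \tensort b} = \hnorm{a}\hnorm{b}$ from Proposition~\ref{prop:properties-tensor}. Since $\hnorm{U_n} = \hnorm{V_n} = 1$ and $\hnorm{U_n^M} \le 2$ once $M$ is large, each piece is at most $2 \max\{\hnorm{U_n - U_n^M}, \hnorm{V_n - V_n^M}\}$.

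The step that requires the most care is choosing $M$ \emph{uniformly} over the finitely many indices $n = 1, \ldots, N$, so that the aggregated error
\[
  \sum_{n=1}^N \lambda_n \tnorm{U_n \tensort V_n - U_n^M \tensort V_n^M}
  \le 3 \Big( \sum_{n=1}^N \lambda_n \Big) \max_{1 \le n \le N} \max\{\hnorm{U_n - U_n^M}, \hnorm{V_n - V_n^M}\}
\]
can be forced below $\vep/2$; because there are only finitely many $n$ and each of the corresponding expansions converges, such an $M$ exists, while $\sum_{n=1}^N \lambda_n \le \tnorm{T} < \infty$ keeps the prefactor finite. Combining the two $\vep/2$ bounds by the triangle inequality gives $\tnorm{T - \sum_{n=1}^N \lambda_n U_n^M \tensort V_n^M} \le \vep$, and since the displayed identity rewrites $\sum_{n=1}^N \lambda_n U_n^M \tensort V_n^M$ as a finite sum of $\kprod$-products of finite-rank operators, the density claim follows as $\vep > 0$ was arbitrary.
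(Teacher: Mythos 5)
Your proposal is correct and follows essentially the same route as the paper's own proof: truncating the Schmidt (singular value) decomposition of $T$ in trace norm, expanding each singular vector $U_n, V_n$ into simple tensors and truncating uniformly over the finitely many $n$, bounding the error via $\tnorm{a \tensort b} = \hnorm{a}\hnorm{b}$, and converting the result into $\kprod$-form through the identity $(u \tensor v) \tensort (u' \tensor v') = (u \tensort u') \kprod (v \tensort v')$. The only cosmetic difference is that you bound the prefactor by $\tnorm{T}$ where the paper writes $\trace{C}$, which is if anything a small improvement in clarity.
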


If $H=L^2\left( D_s \times D_t, \bR \right)$, we can approximate certain integral operators  in a stronger sense:
\begin{lma}
  \label{lma:approx-of-trace-class-operator-with-continuous-kernel-on-tensor-space}
  Let $D_s \subset \bR^p, D_t \subset \bR^q$ be compact subsets, and $C \in \tc{L^2(D_s \times D_t, \bR)}$  be
  a positive definite
  integral operator with symmetric continuous kernel
  $c=c(s,t,s',t')$, i.e.\ $c(s,t,s',t')=c(s',t',s,t)$ for all $s,s' \in D_s, t,t' \in D_t$.

  For any $\vep > 0$, there exists
  an operator $C' = \sum_{n = 1 }^N A_n \kprod B_n$, where $A_n : L^2(D_s, \bR) \rightarrow L^2(D_s, \bR) ,
  B_n : L^2(D_t, \bR) \rightarrow  L^2(D_t, \bR)$ are finite rank
  operators with continuous kernels $a_n$, respectively $b_n$, such that
  \begin{enumerate}
    \item $\tnorm{C - C'} \leq \vep$,
    \item $\sup_{s,s' \in D_s, t,t' \in D_t} \left| c(s,t,s',t') - c'(s,t,s',t') \right| \leq \vep$, where $c'$
      is the kernel of the operator $C'$,
  \end{enumerate}
  \begin{proof}
    By Mercer's Theorem, there exists continuous orthonormal functions $(U_n)_{n \geq 1} \subset L^2(D_s
    \times D_t, \bR)$
    and    $(\lambda_n)_{n \geq 1} \subset \bR$ is a summable decreasing sequence of positive numbers,
    such that
    \begin{equation}
      \label{eq:mercer-continuous}
      c(s,t,s',t') = \sum_{n \geq 1} \lambda_n U_n(s,t)U_n(s',t'),
    \end{equation}
    where the convergence is uniform
    in $(s,t,s',t')$.

    Let $C^N = \sum_{n = 1}^N \lambda_n U_n \tensort U_n$, and let $c^N$ denote its kernel. Fix $\vep > 0$,
    and let $\hnorm{g}_\infty~=~\sup_{x} |g(x)|$.
    We
    have that for $N$ large enough, both $\tnorm{C - C^N}$ and $\hnorm{c - c^N}_\infty$ are bounded by
    $\vep/2$, since $C$ is positive and \eqref{eq:mercer-continuous} is also its singular value decomposition.

We can now approximate each of the continuous functions $U_n(s,t), n
= 1, \ldots, N,$ by tensor products of continuous functions
\citep{cheney1986multivariate}. Let $U_n^M = \sum_{l = 1}^M
u_{n,l}^{(1)} \tensor u_{n,l}^{(2)}$, where $u_{n,l}^{(1)} \in
L^2(D_s, \bR), u_{n,l}^{(2)} \in L^2(D_t, \bR),  l \geq 1$, are
\emph{continuous} functions  such that
\[
  \hnorm{U_n - U_n^{M}}_\infty \leq \min\left\{ \frac{\vep}{6\kappa \trace{C} }, \kappa\right\},
\]
where $\kappa = \max_{n = 1,\ldots, N} \hnorm{ U_n }_\infty$ (notice
that $\kappa < \infty$ since each $U_n$ is continuous). Writing
$C^{N,M} = \sum_{n=1}^N \lambda_n U_n^M \tensort U_n^M$, and
denoting by $c^{N,M}$ its kernel, we have
\begin{align*}
  \hnorm{c^N - c^{N,M}}_\infty &\leq \sum_{n = 1}^N \lambda_n \left[ \hnorm{U_n - U_n^M}_\infty
\hnorm{U_n}_\infty + \hnorm{U_n^M}_\infty \hnorm{U_n - U_n^M}_\infty
\right]
\\ & \leq 3 \kappa \max_{n = 1,\ldots, N} \hnorm{U_n - U_n^M}_\infty \cdot \trace(C)
\\ & \leq \vep/2.
\end{align*}
Furthermore, we also have
\begin{align*}
  \tnorm{C^N - C^{N,M}} &\leq \sum_{n = 1}^N \lambda_n \left[ \tnorm{ (U_n - U_n^M) \tensort
U_n} + \tnorm{U_n^M \tensort (U_n - U_n^M)} \right]
\\ &\leq \sum_{n = 1}^N \lambda_n \left[ \hnorm{ U_n - U_n^M}_2 \hnorm{U_n}_2 + \hnorm{U_n^M}_2 \hnorm{ U_n -
U_n^M}_2 \right]
\\ &\leq \sum_{n = 1}^N \lambda_n \left[ \hnorm{ U_n - U_n^M}_\infty \hnorm{U_n}_\infty + \hnorm{U_n^M}_\infty
\hnorm{ U_n - U_n^M}_\infty \right]
\\ & \leq \vep/2.
\end{align*}
Since $C^{N,M} = \sum_{n = 1}^{N} \sum_{j,l = 1}^{K} \left(
\lambda_n u_{n,l}^{(1)}  \tensor
  u_{n,j}^{(1)} \right) \kprod \left(  u_{n,l}^{(2)}  \tensor u_{n,j}^{(2)} \right)$,
the proof is finished.
  \end{proof}
\end{lma}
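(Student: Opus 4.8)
The plan is to construct $C'$ in two stages: truncate a Mercer expansion of $C$, and then replace each of the finitely many (continuous) eigenfunctions by a finite sum of tensor products of continuous functions. Since $C$ is a positive definite trace-class integral operator with symmetric continuous kernel $c$ on the compact set $D_s \times D_t$, Mercer's theorem furnishes continuous orthonormal eigenfunctions $(U_n)_{n \geq 1} \subset L^2(D_s \times D_t, \bR)$ and a summable nonnegative sequence $(\lambda_n)_{n \geq 1}$ with $\sum_{n \geq 1}\lambda_n = \trace(C)$, such that
\[
  c(s,t,s',t') = \sum_{n \geq 1} \lambda_n U_n(s,t) U_n(s',t'),
\]
with convergence \emph{uniform} in $(s,t,s',t')$. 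This uniform convergence is the crucial ingredient, as it is what eventually controls the kernel difference in sup norm rather than merely in $L^2$.

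First I would truncate. Writing $C^N = \sum_{n=1}^N \lambda_n U_n \tensort U_n$, positivity of $C$ makes this its singular value decomposition, so $\tnorm{C - C^N} = \sum_{n > N}\lambda_n$; simultaneously, uniform convergence of the Mercer series controls $\hnorm{c - c^N}_\infty$, where $c^N$ is the kernel of $C^N$. Both quantities tend to $0$ as $N \rainf$, so for $N$ large enough each is at most $\vep/2$.

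Next I would approximate each continuous eigenfunction $U_n$, $n = 1, \ldots, N$, by $U_n^M = \sum_{l=1}^M u_{n,l}^{(1)} \tensor u_{n,l}^{(2)}$ with $u_{n,l}^{(1)} \in L^2(D_s, \bR)$ and $u_{n,l}^{(2)} \in L^2(D_t, \bR)$ continuous; such sup-norm approximations exist because a continuous function on a product of compacta is uniformly approximable by sums of products of continuous functions \citep{cheney1986multivariate}. Setting $C^{N,M} = \sum_{n=1}^N \lambda_n U_n^M \tensort U_n^M$, a triangle-inequality estimate controls $\hnorm{c^N - c^{N,M}}_\infty$ directly (through $\hnorm{U_n - U_n^M}_\infty$ and $\sum_n \lambda_n = \trace(C)$) and controls $\tnorm{C^N - C^{N,M}}$ via the identity $\tnorm{u \tensort v} = \hnorm{u}\hnorm{v}$ together with the bound $\hnorm{g} \leq |D_s|^{1/2}\hnorm{g}_\infty$ (and its analogue on $D_t$) valid on the compact domain, so that sup-norm control of the eigenfunction approximations yields $L^2$- and hence trace-norm control. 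Choosing $\max_{n \leq N}\hnorm{U_n - U_n^M}_\infty$ small enough makes both bounds at most $\vep/2$. Taking $C' = C^{N,M}$ and combining the two halves of each estimate then gives $\tnorm{C - C'} \leq \vep$ and $\hnorm{c - c'}_\infty \leq \vep$. Finally, expanding $C^{N,M}$ exhibits it as a finite sum of terms $(\lambda_n u_{n,l}^{(1)} \tensort u_{n,j}^{(1)}) \kprod (u_{n,l}^{(2)} \tensort u_{n,j}^{(2)})$, which is precisely of the claimed form $\sum A_n \kprod B_n$, each $A_n$, $B_n$ being finite rank with continuous kernel (a product of continuous functions).

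The main obstacle is securing the two approximations \emph{simultaneously} in the two distinct topologies, the trace norm of the operator and the sup norm of the kernel. The device that resolves it is to route everything through the sup norm: Mercer's theorem supplies uniform (not merely $L^2$) convergence at the truncation step, and compactness of $D_s$ and $D_t$ converts sup-norm control of the eigenfunction approximations into trace-norm control via $\hnorm{g} \leq |D_s|^{1/2}\hnorm{g}_\infty$ and $\tnorm{u \tensort v} = \hnorm{u}\hnorm{v}$.
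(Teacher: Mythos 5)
Your proposal is correct and follows essentially the same route as the paper's own proof: Mercer's theorem with uniform convergence, truncation to $C^N$ using that the Mercer expansion is also the singular value decomposition of the positive operator $C$, then approximation of the finitely many continuous eigenfunctions by sums of tensor products of continuous functions (same reference, \citealp{cheney1986multivariate}), with both norms controlled through the sup norm. If anything, your explicit use of $\hnorm{g} \leq |D_s|^{1/2}\hnorm{g}_\infty$ is slightly more careful than the paper's step $\hnorm{\cdot}_2 \leq \hnorm{\cdot}_\infty$, which tacitly requires the domains to have measure at most one.
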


\section{Implementation details}
\label{app:implementation}

All the implementation details described here are implemented in the R package \texttt{covsep}
\citep{R:covsep}.

In practice, random elements of $H_1 \tensor H_2$ are first
projected onto a truncated basis of $H_1 \tensor H_2$. We shall
assume that the truncated basis is of the form $(e_i \tensor
f_j)_{i=1,\ldots, d_1; j=1, \ldots , d_2}$, for some $d_1, d_2
<\infty$,  where $(e_i)_{i \geq 1} \subset H_1$, respectively
$(f_j)_{j\geq 1} \subset H_2$, is an orthonormal basis of $H_1$,
respectively $H_2$. In this way, one can encode (and approximate) an
element $X \in H_1 \tensor H_2$ by a $d_1 \times d_2$ matrix
$\mathbf{X} \in \bR^{d_1 \times d_2}$, whose $(k,l)$-th coordinate
is given by $\mathbf{X}(k,l) = \sc{X, e_k \tensor f_l}, k=1,\ldots,
d_1; l=1,\ldots, d_2$. We therefore assume from now on that only
need to describe the implementation of $T_N(r,s)$ for $H_1 \tensor
H_2 = \bR^{d_1} \tensor \bR^{d_2} = \bR^{d_1 \times d_2}$. In this
case, $\mathbf{X}$ is a random element of $\bR^{d_1 \times d_2}$,
i.e.\  a random $d_1 \times d_2$ matrix, and we observe
$\mathbf{X}_1,\ldots, \mathbf{X}_n \simiid \mathbf{X}$. We have
  \begin{align*}
    \mathbf{C}_{1,N}(k,k') &= \frac{\widetilde{ \mathbf{C}}_{1,N}(k,k')}{\sqrt{\trace( \widetilde{ \mathbf{C}}_{1,N}) }},
    & \mathbf{C}_{2,N}(l,l') &= \frac{\widetilde{ \mathbf{C}}_{2,N}(l,l')}{\sqrt{\trace( \widetilde{ \mathbf{C}}_{2,N}) }},
\end{align*}
and
\begin{align*}
  \widetilde{\mathbf{ C}}_{1,N}(k,k') &= \frac{1}{N} \sum_{i=1}^N \sum_{l=1}^{d_2} \left( \mathbf{X}_i(k,l) - \overline{\mathbf{X}}(k,l)
  \right)\left( \mathbf{X}_i(k',l) - \overline{\mathbf{X}}(k',l)\right) 
  \\ &= \left(  \frac{1}{N} \sum_{i=1}^N \left( \mathbf{X}_i - \overline{\mathbf{X}}
    \right) \left( \mathbf{X}_i - \overline{\mathbf{X}}\right)^\tp  \right)_{k,k'} =
  \sum_{l=1}^{d_2} \mathbf{C}_N(k,l,k',l) ,
  \\  \widetilde{\mathbf{C}}_{2,N}(l,l') &= \frac{1}{N} \sum_{i=1}^N \sum_{k=1}^{d_1}  \left( \mathbf{X}_i(k,l) - \overline{\mathbf{X}}(k,l)
  \right)\left( \mathbf{X}_i(k,l') - \overline{\mathbf{X}}(k,l')\right)  
  \\ &= \left(  \frac{1}{N} \sum_{i=1}^N \left( \mathbf{X}_i - \overline{\mathbf{X}}
    \right)^\tp \left( \mathbf{X}_i - \overline{\mathbf{X}}\right)  \right)_{l,l'} 
  = \sum_{k=1}^{d_1} \mathbf{C}_N(k,l,k,l') ,
\\ \overline{\mathbf{X}}(k,l) &= \frac{1}{N} \sum_{i=1}^N \mathbf{X}_i(k,l),
\\ \mathbf{C}_N(k,l,k',l') & = \frac{1}{N} \sum_{i=1}^N
\left( \mathbf{X}_i(k,l) - \overline{\mathbf{X}}(k,l) \right)\left(
\mathbf{X}_i(k',l') -
  \overline{\mathbf{X}}(k',l')\right),
  \end{align*}
for all $k,k' =1,\ldots, d_1; l,l' = 1,\ldots, d_2$.

The computation of $\widetilde{\mathbf{ C}}_{1,N}(k,k')$ using the
above formula is not efficient in \texttt{R}, when implemented using
a double \texttt{for} loop. However, if we denote by $\mathbf{A}_k$
the $N \times d_2$ matrix with $\left( \mathbf{A}_k \right)_{il} =
\mathbf{X}_i(k,l) - \overline{\mathbf{X}}(k,l)$, $n=1,\ldots, N;
k=1,\ldots, d_1; l=1,\ldots, d_2$, by $\vec(\mathbf{A}_k)$ the
vector obtained by stacking the columns of $\mathbf{A}_k$ into a
vector of length $N d_2$, and by $Y_n$ the $n$-th row of the $Nd_2
\times d_1$ matrix $A = \left( \vec(A_1), \ldots, \vec(A_{d_1})
\right)$, we get $\overline Y = (N d_2)^{-1}\sum_{n = 1}^{N d_2} =
0$ and
\[
  (N d_2)^{-1} \sum_{n=1}^{N d_2} (Y_n)_k
  (Y_n)_{k'} = (N d_2)^{-1} \sum_{i=1}^n \sum_{l=1}^{d_2} \left( \mathbf{X}_i(k,l) -
    \overline{\mathbf{X}}_i(k,l) \right).
\]
Therefore $\tilde{\mathbf{C}}_{1,N} = \frac{N d_2 -1}{N}
\texttt{cov}(A)$, where \texttt{cov} is the standard \texttt{R}
function returning the covariance, and the computation is very fast.
The computation of $\tilde{\mathbf{C}}_{2,N}$ can be done similarly.

If we denote by $(\hat{ \lambda}_r, \hat{ \mathbf{u}}_r)$,
respectively $(\hat \gamma_s, \hat{ \mathbf{v}}_s)$, the $r$-th
eigenvalue/eigenvector pair of $\mathbf{C}_{1,N}$, respectively the
$s$-th eigenvalue/eigenvector pair of $\mathbf{C}_{2,N}$, we have
\[
  T_N(r,s) =  T_N(r,s| X_1,\ldots, X_N) = \sqrt{N} \left[ \frac{1}{N} \sum_{i=1}^N \left(
      \hat{\mathbf{u}}_r^\tp(\mathbf{X}_i
      -
      \overline{\mathbf{X}})\hat{\mathbf{v}}_s\right)^2 - \hat \lambda_r \hat \gamma_s\right],
\]
where $\cdot^\tp$ denotes matrix transposition. The variance of $T_N(r,s| X_1,\ldots, X_N)$ is estimated by
\begin{multline}
  \hat \sigma^2(r,s | X_1,\ldots, X_N) =
  \\ \frac{2 \hat \lambda_r^2 \hat \gamma_s^2  \left( \trace(\mathbf{C}_{1,N})^2 + \hsnorm{\mathbf{C}_{1,N}}^2
      - 2 \hat \lambda_r \trace(\mathbf{C}_{1,N}) \right) \left( \trace(\mathbf{C}_{2,N})^2 +
      \hsnorm{\mathbf{C}_{2,N}}^2 - 2 \hat \gamma_s \trace(\mathbf{C}_{2,N})
    \right)}{\trace(\mathbf{C}_{1,N})^2\trace(\mathbf{C}_{2,N})^2 },
\end{multline}
where $\hsnorm{\mathbf{A}}^2 = \sum_{i=1}^{d_1} \sum_{j=1}^{d_2}
[(\mathbf{A})_{ij}]^2$ for a $d_1 \times d_2$ matrix $\bf A$.

{
If $\mathcal I = \left\{ 1, \ldots, p \right\} \times \left\{ 1,
  \ldots, q \right\}$, then 
$\left( T_N(r,s) \right)_{(r,s) \in \mathcal I}$ is asymptotically a mean zero Gaussian random $p \times q$
matrix, with separable covariance. Its left (row) covariances are consistently estimated by the $p \times p$ matrix
$\hat \Sigma_{L, \mathcal I} = \hat \Sigma_{L, \mathcal I}(X_1, \ldots, X_N)$  with entries
\begin{equation}
  \left( \hat  \Sigma_{L, \mathcal I} \right)_{r,r'} = 
  \frac{
\sqrt{2 \hat \lambda_r \hat \lambda_{r'}}   \left( \delta_{rr'} \trace(\widehat{C}_{1,N})^2 + \hsnorm{\widehat{C}_{1,N}}^2 - (
  \hat \lambda_r+ \hat \lambda_{r'}) \trace(\widehat{C}_{1,N}) \right)}{\trace(\widehat{C}_{1,N})\trace(\widehat{C}_{2,N}) }  ,
  \label{eq:hat-sigma-left}
\end{equation}
$r,r' \in \left\{ 1, \ldots, p \right\}$,
and it right (column) covariances are consistently estimated by the $q \times q$ matrix $\hat \Sigma_{R,
  \mathcal I} = \hat \Sigma_{R, \mathcal I}(X_1, \ldots, X_N)$ 
with entries
\begin{equation}
  \left( \hat \Sigma_{R, \mathcal I} \right)_{s,s'} = 
  \frac{
    \sqrt{ 2 \hat \gamma_s \hat \gamma_{s'} }\left( \delta_{ss'} \trace(\widehat{C}_{2,N})^2 + \hsnorm{\widehat{C}_{2,N}}^2 - (\hat \gamma_s+ \hat \gamma_{s'})
              \trace(\widehat{C}_{2,N}) \right)
  }{\trace(\widehat{C}_{1,N})\trace(\widehat{C}_{2,N}) } 
  \label{<++>}
\end{equation}
$s,s' \in \left\{ 1, \ldots, q \right\}$,
}

The computation of $\hsnorm{D_N}$ can be done without storing the
full covariance $C_N$ in memory. The following pseudo-code returns
$\hsnorm{D_N}$:
\begin{enumerate}[I.]
  \item Compute and store $\mathbf{C}_{1,N}$ and $\mathbf{C}_{2,N}$, and set $s = 0$.
  \item Replace $\mathbf{X}_n$ by $\mathbf{X}_n - \overline{ \mathbf{X}}$ for each $n=1,\ldots, N$.
\item For $i,k = 1,\ldots, d_1; j,l=1,\ldots, d_2$,
  \begin{enumerate}
    \item Compute $y = N^{-1}\sum_{n = 1}^N \mathbf{X}_n(i,j)\mathbf{X}_n(k,l)$.
    \item Set $s = s + \left(y - \mathbf{ C}_{1,N}(i,k)\mathbf{ C}_{2,N}(j,l) \right)^2$.
  \end{enumerate}
\item Return $s$.
\end{enumerate}
The computation of $\hsnorm{D_N^* - D_N}$ requires a slight
modification of the pseudo-code. Given $X_1,\ldots, X_N$ and
$X_1^*,\ldots, X_N^*$,
\begin{enumerate}[I.]
  \item Compute and store $\mathbf{ C}_{1,N}$ and $\mathbf{ C}_{2,N}$, $\mathbf{ C}^*_{1,N}$ and $\mathbf{
C}^*_{2,N}$, and set $s = 0$.
  \item Replace $\mathbf{X}_n$ by $\mathbf{X}_n - \overline{ \mathbf{X}}$, and $\mathbf{X}_n^*$ by
    $\mathbf{X}_n^* - \overline{ \mathbf{X}^*}$ for each $n = 1,\ldots, N$.
\item For $i,k = 1,\ldots, d_1; j,l=1,\ldots, d_2$,
  \begin{enumerate}
    \item Compute $y = N^{-1}\sum_{n = 1}^N \left( \mathbf{X}_n(i,j)\mathbf{X}_n(k,l) -
        \mathbf{X}^*_n(i,j)\mathbf{X}^*_n(k,l) \right)$.
    \item Set $s = s + \left(y - \mathbf{ C}^*_{1,N}(i,k)\mathbf{ C}^*_{2,N}(j,l) + \mathbf{ C}_{1,N}(i,k)\mathbf{ C}_{2,N}(j,l) \right)^2$.
  \end{enumerate}
\item Return $s$.
\end{enumerate}
Finally, Algorithms \ref{alg:parametric-boot-one-direction} and
\ref{alg:empirical-boot-one-direction} describe the procedure to
approximate the p-values for the tests based on parametric and
empirical bootstrap, respectively.

\begin{algorithm}
\caption{Parametric Bootstrap $p$-value approximation for $H_N$}
  \label{alg:parametric-boot-one-direction}
Given $X_1,\ldots, X_N$,
\begin{enumerate}[I.]
  \item compute $\overline X$, $C_{1,N} \kprod C_{2,N}$, and $H_N = H_N(X_1,\ldots,
    X_N)$.
  \item For $b = 1, \ldots, B$,
    \begin{enumerate}
      \item Create bootstrap samples $\mathbf{X}^b = \{X_1^b,\ldots,X_N^b\}$, where \newline $X_i^b \simiid F\left(\: \overline X, C_{1,N} \kprod C_{2,N} \right)$.
      \item Compute $H_N^b = H_N(\mathbf{X}^b)$,
    \end{enumerate}
  \item Compute the estimated bootstrap $p$-value
    \[
      p=\frac{1}{B} \sum_{b=1}^B \ind{H_N^b > H_N},
    \]
    where $\ind{ A} = 1$ if $A$ is true, and zero otherwise.
\end{enumerate}
\end{algorithm}

\begin{algorithm}
\caption{Empirical Bootstrap $p$-value approximation for $H_N$}
  \label{alg:empirical-boot-one-direction}
  Given $\mathbf X = \left\{ X_1,\ldots, X_N \right\}$,
\begin{enumerate}[I.]
  \item Compute $C_{1,N} \kprod C_{2,N}$, and $H_N = H_N(\mathbf X)$.
  \item For $b = 1, \ldots, B$,
    \begin{enumerate}
      \item Create the bootstrap sample $\mathbf{X}^b = \{X_1^b,\ldots,X_N^b\}$ by drawing with
        repetition from $X_1,\ldots, X_N$.
      \item For each bootstrap sample, compute $\Delta_N^b = \Delta_N(\mathbf{X}^b; \mathbf X)$.
    \end{enumerate}
  \item Compute the estimated bootstrap $p$-value
    \[
      p=\frac{1}{B} \sum_{b=1}^B \ind{\Delta_N^b > H_N},
    \]
    where $\ind{ A} = 1$ if $A$ is true, and zero otherwise.
\end{enumerate}
\end{algorithm}

\section{Additional results from the simulation studies}
\label{app:additional-simulations-studies}

Figure~\ref{fig:sim-tdist-increasing-proj} shows the empirical powers empirical bootstrap version of the tests
$\widetilde G_N(\mathcal I)$ for increasing projection subspaces, i.e.\ for $\mathcal I = \mathcal I_l,
l=1,2,3$, where  $\mathcal I_1 = \{(1,1)\}, \mathcal I_2 = \{ (i,j) : i,j = 1,2 \}$ and $\mathcal I _3 = \{
  (i,j) : i = 1,\ldots, 4; j=1,\ldots, 10 \}$, when data are generated from a multivariate $t$ distribution
with $6$ degrees of freedom (the Non-Gaussian scenario in the paper).
Figure~\ref{fig:simulations-gaussian-and-t-R2} shows the empirical power for the asymptotic test, the
parametric and empirical bootstrap tests based on the test statistic $\widetilde G_N(\mathcal{I}_2)$, as well
as parametric and bootstrap tests based on the test statistics $ G_N(\mathcal{I})$, $\widetilde
G^a_N(\mathcal{I}_2)$ where $\mathcal{I}_2= \{ (i,j) : i,j = 1,2 \}$.
Figure~\ref{fig:simulations-gaussian-and-t-R3} shows the analogous results for the projection set $\mathcal
I_3$. Tables~\ref{tab:empirical-size-r1},
\ref{tab:empirical-size-r2} and \ref{tab:empirical-size-r3} give the true levels of the tests for $\mathcal I
= \mathcal I_1, \mathcal I_2,$ and $\mathcal I =  \mathcal I_3$, respectively.

Figure~\ref{fig:grid-level-power-gaussian-3x2} shows the empirical size and power of the
separability test, in the Gaussian scenario, as functions of the projection set
\[
  \mathcal I_{r,s} = \left\{ (i,j) : 1 \leq i \leq r, 1 \leq j \leq s \right\},
\]
for all possible choices of $(r,s)$.
The test used is $\widetilde G_N(\mathcal I)$, with distribution
    approximated by the empirical bootstrap with $B=1000$.  
    Figure~\ref{fig:grid-level-power-student-3x2} is analogous plot for the Non-Gaussian scenario.

\begin{figure}[h!]
\centering
\includegraphics[width=\linewidth]{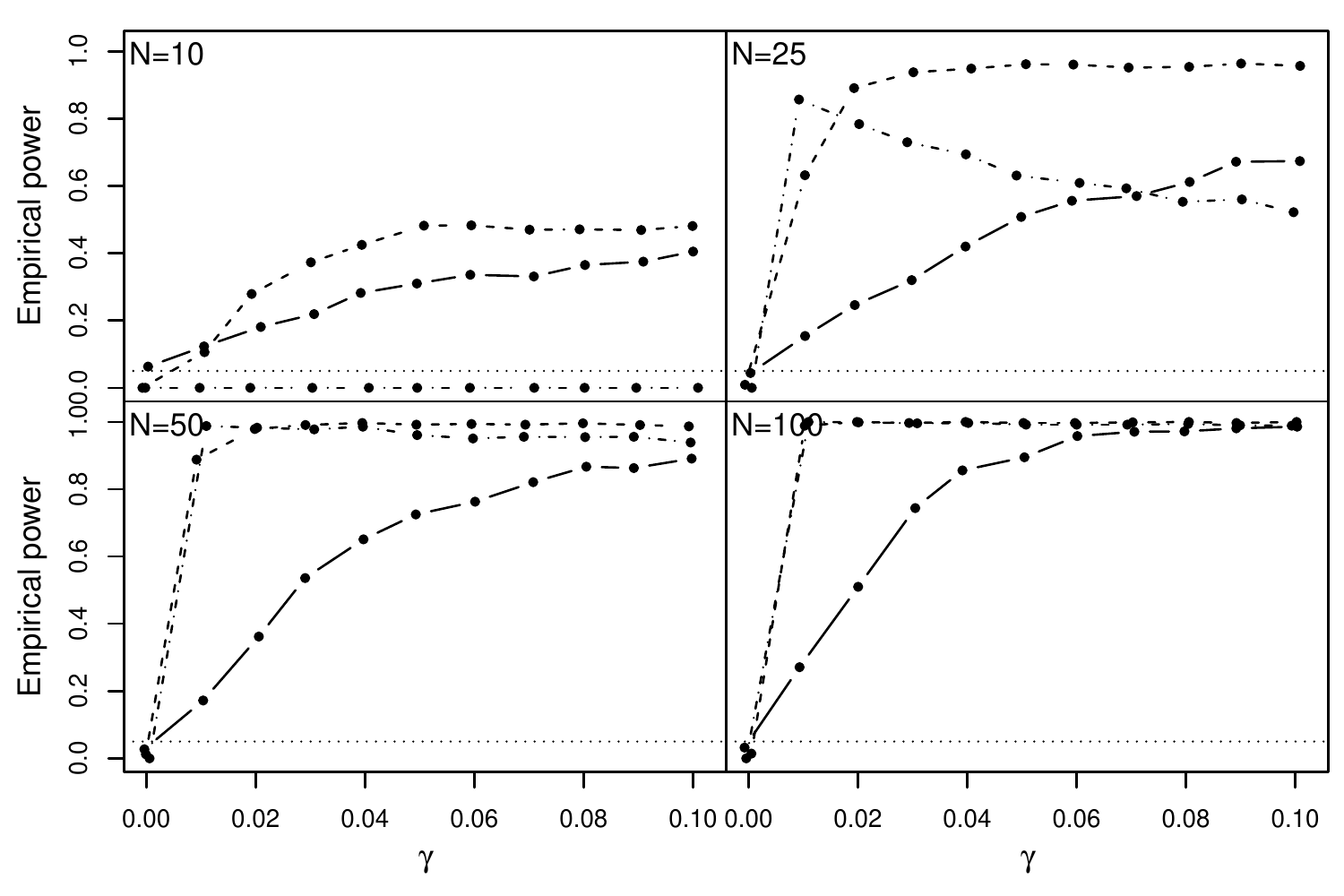}
\caption{Empirical power of the empirical bootstrap version of $\widetilde G_N(\mathcal I_l)$, for $l = 1$
  (solid line), $l=2$ (dashed line) and $l=3$ (dash-dotted line), in the \emph{non-Gaussian} scenario.  The
  horizontal dotted line indicates the nominal level ($5\%$) of the test.
  Note that the points have been horizontally jittered for better visibility.
}
\label{fig:sim-tdist-increasing-proj}
\end{figure}

\begin{figure}[h]
  \begin{center}
\includegraphics[width=\linewidth, trim=0 0 0 5,
clip=TRUE]{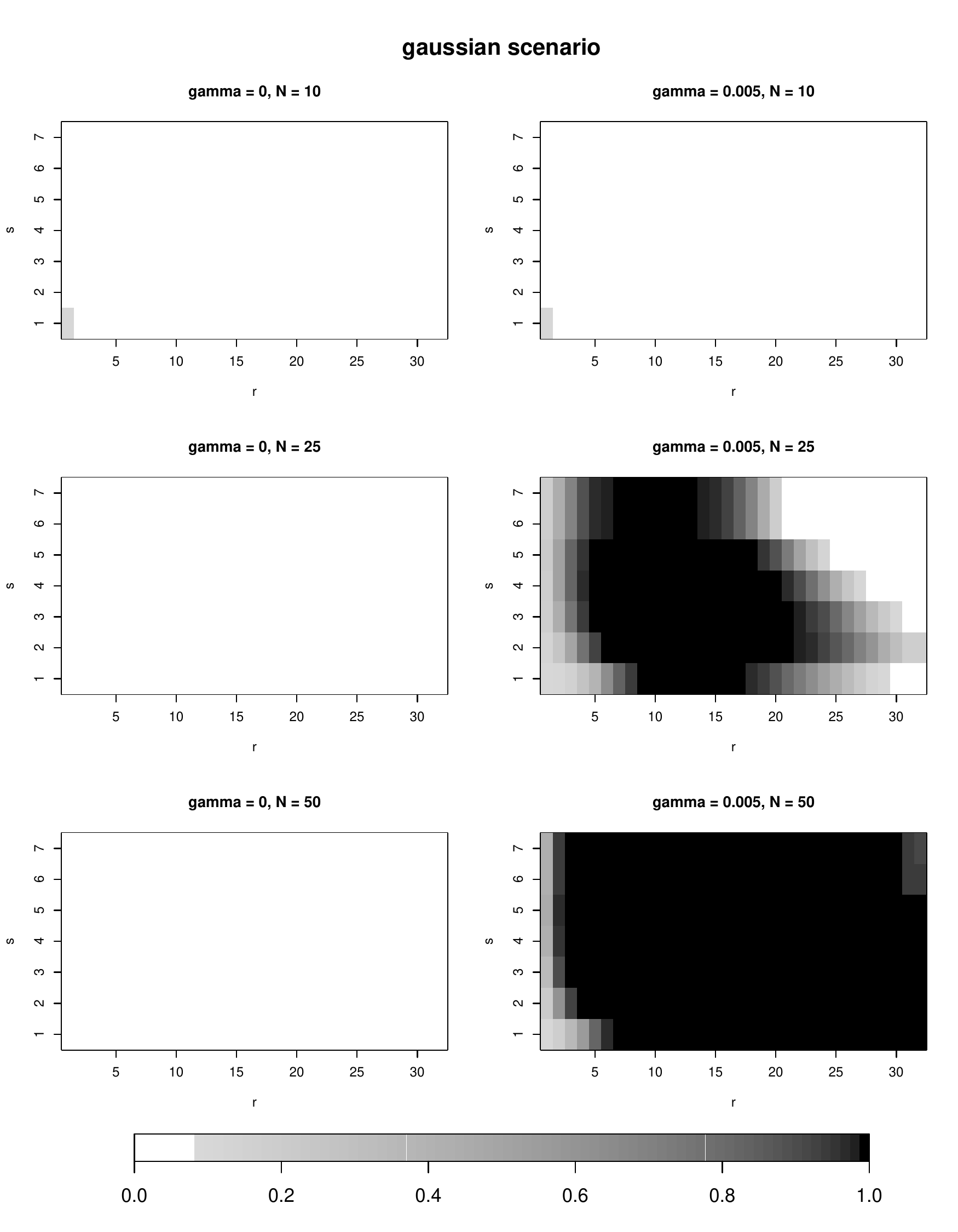}
  \end{center}
  \caption{Empirical size (left column) and power (right column) of the separability test as functions of the projection set
    $\mathcal I$. The test used is $\widetilde G_N(\mathcal I)$, with distribution approximated by the
    empirical bootstrap with $B=1000$. The left plots, respectively the right plots, were simulated from the Gaussian scenario with $\gamma
    =0$, respectively $\gamma = 0.005$. Each row corresponds to a different sample size: $N=10$ (top), $N=25$
    (middle), $N=50$ (bottom). Each $(r,s)$ rectangle represents the level/power of the test based on the projection
    set $\mathcal I = \left\{ (i,j) : 1 \leq i \leq r, 1 \leq j \leq s \right\}$.}
  \label{fig:grid-level-power-gaussian-3x2}
\end{figure}

\begin{figure}[h]
  \begin{center}
\includegraphics[width=\linewidth, trim=0 0 0 5,
clip=TRUE]{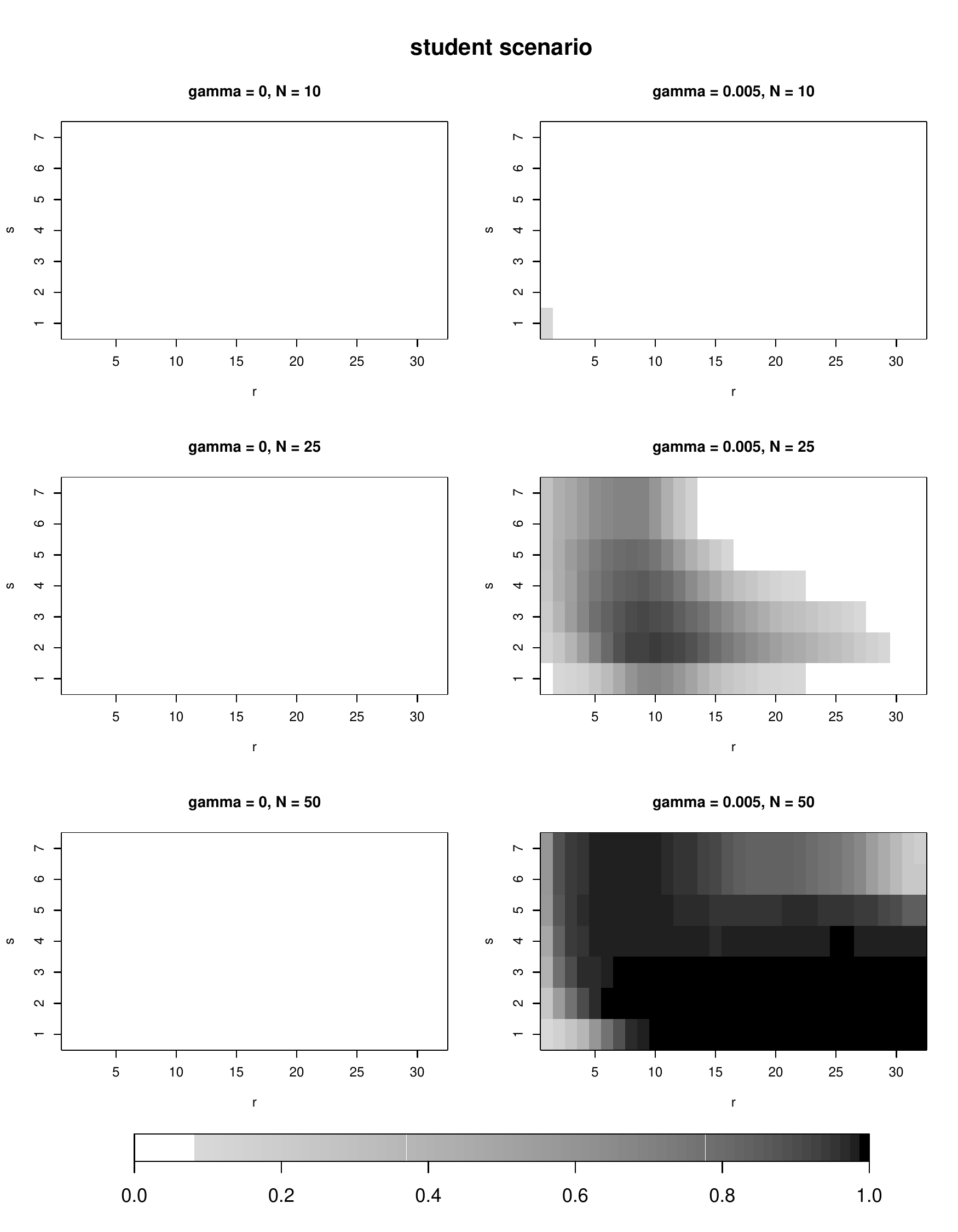}
  \end{center}
  \caption{Empirical size (left column) and power (right column) of the separability test as functions of the projection set
    $\mathcal I$. The test used is $\widetilde G_N(\mathcal I)$, with distribution approximated by the
    empirical bootstrap with $B=1000$. The left plots, respectively the right plots, were simulated from the
    Non-Gaussian scenario with $\gamma
    =0$, respectively $\gamma = 0.005$. Each row corresponds to a different sample size: $N=10$ (top), $N=25$
    (middle), $N=50$ (bottom). Each $(r,s)$ rectangle represents the level/power of the test based on the projection
    set $\mathcal I = \left\{ (i,j) : 1 \leq i \leq r, 1 \leq j \leq s \right\}$.}
  \label{fig:grid-level-power-student-3x2}
\end{figure}

\begin{figure}[p]
  \centering 
  \begin{subfigure}[c]{\linewidth}
    \includegraphics[width=\linewidth, trim=0 0 0 5, clip=TRUE]{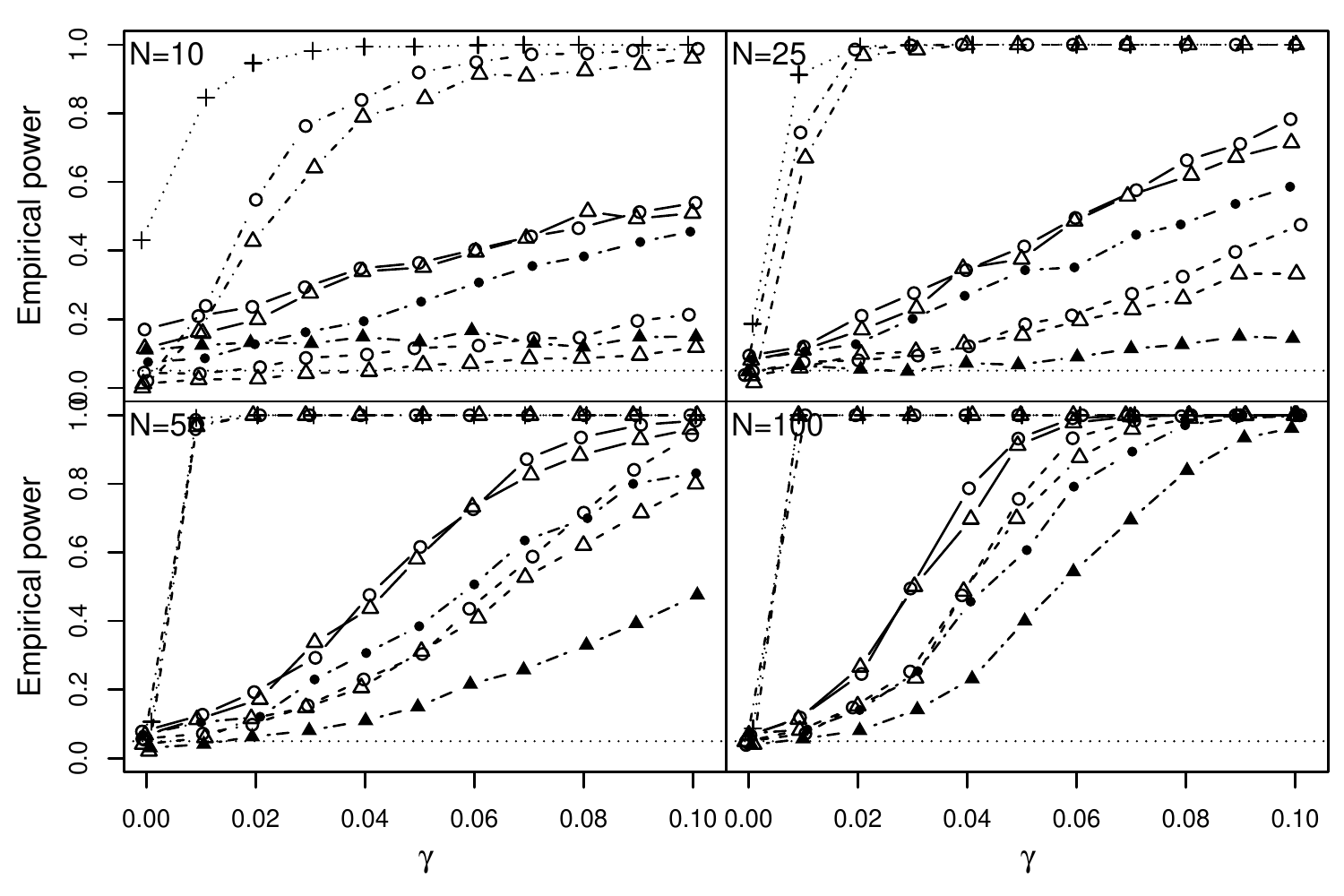}
    \caption{Gaussian scenario}
  \end{subfigure}

  \begin{subfigure}[c]{\linewidth}
    \includegraphics[width=\linewidth, trim=0 0 0 5, clip=TRUE]{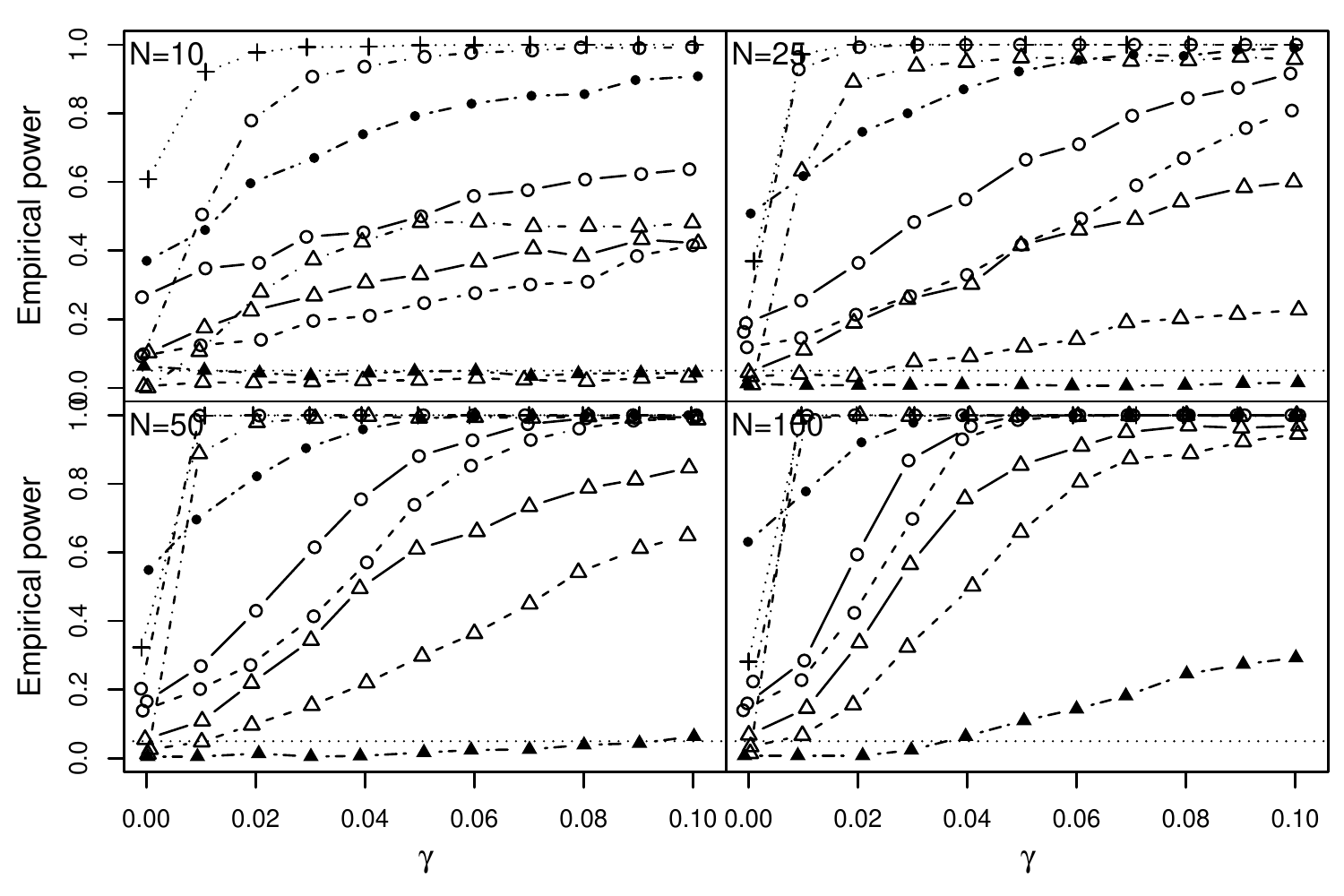}
    \caption{Non-Gaussian scenario}
  \end{subfigure}
  \caption{Empirical power of the testing procedures in the
    \emph{Gaussian} scenario (panel (a)) and \emph{non-Gaussian} scenario (panel (b)), for $N=10, 25, 50, 100$ and $\mathcal I = \mathcal I_3$. The
    results shown correspond to the test \eqref{eq:asymptotically-sigma2-times-chi2-test} based on its
    asymptotic distribution ($\cdot\!\cdot\!\cdot\!\cdot\!+\!\cdot\!\cdot\!\cdot\!\cdot$), the Gaussian
    parametric bootstrap test 
    $\widetilde G_N(\mathcal I_2)$ (dash-dotted line with empty circles),
    $\widetilde G^a_N(\mathcal I_2)$ (dashed line with empty circles), and
    $G_N(\mathcal I_2)$ (solid line with empty circles), 
      the empirical bootstrap projection tests
     $\widetilde G_N(\mathcal I_2)$ (-- $\cdot$
     --$\bigtriangleup$-- $\cdot$ --),
     $\widetilde G^a_N(\mathcal I_2)$ (-- --$\bigtriangleup$-- --), and
     $G_N(\mathcal I_2)$ (---$\bigtriangleup$---),
     the Gaussian parametric Hilbert--Schmidt test (dash-dotted line with
    filled circles) and the empirical Hilbert--Schmidt test (dash-dotted line with filled triangles). The
    horizontal dotted line indicates the nominal level ($5\%$) of the test. 
  Note that the points have been horizontally jittered for better visibility.
  }
  \label{fig:simulations-gaussian-and-t-R2}
\end{figure}

\begin{figure}[p]
  \centering 
  \begin{subfigure}[c]{\linewidth}
    \includegraphics[width=\linewidth, trim=0 0 0 5, clip=TRUE]{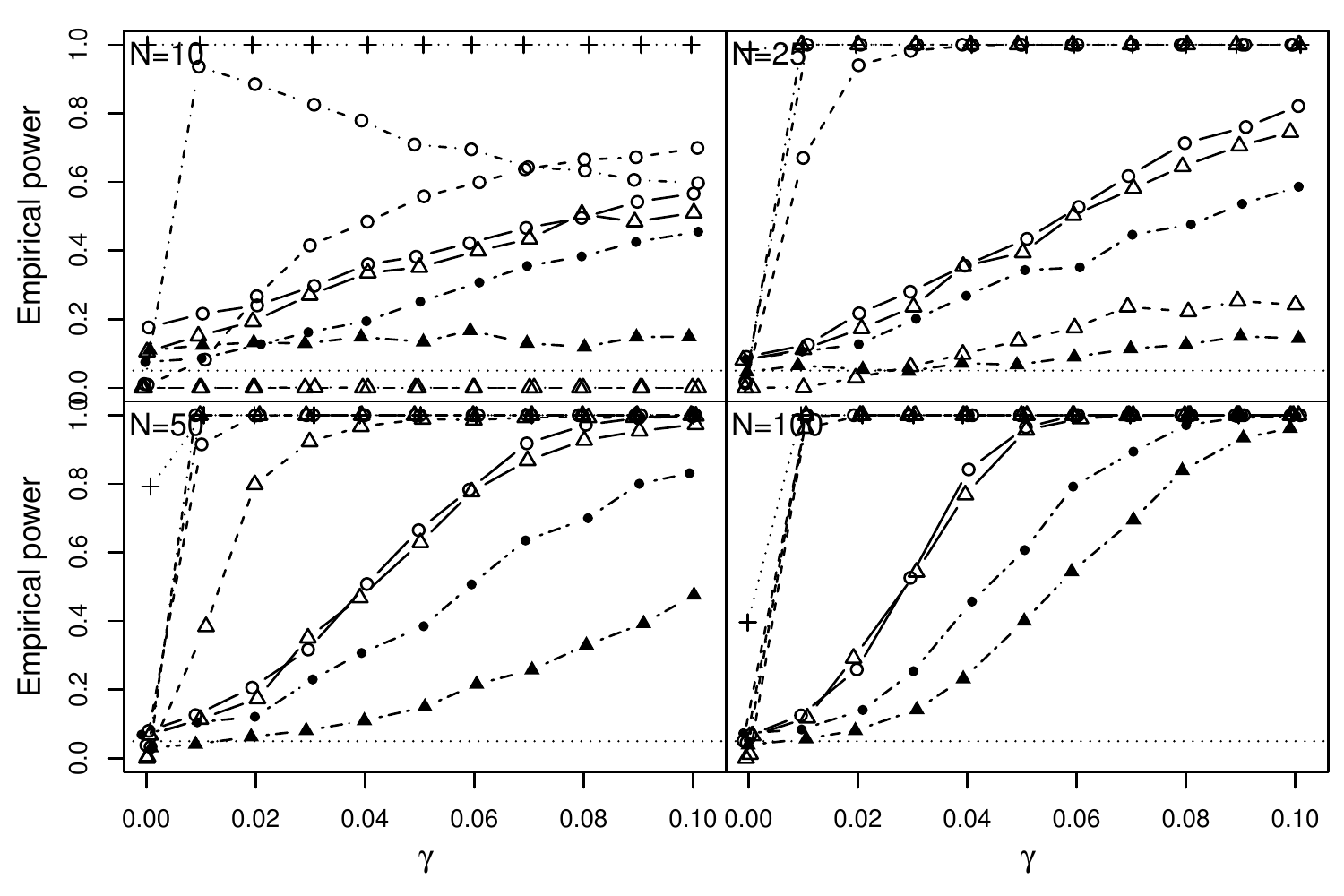}
    \caption{Gaussian scenario}
  \end{subfigure}

  \begin{subfigure}[c]{\linewidth}
    \includegraphics[width=\linewidth, trim=0 0 0 5, clip=TRUE]{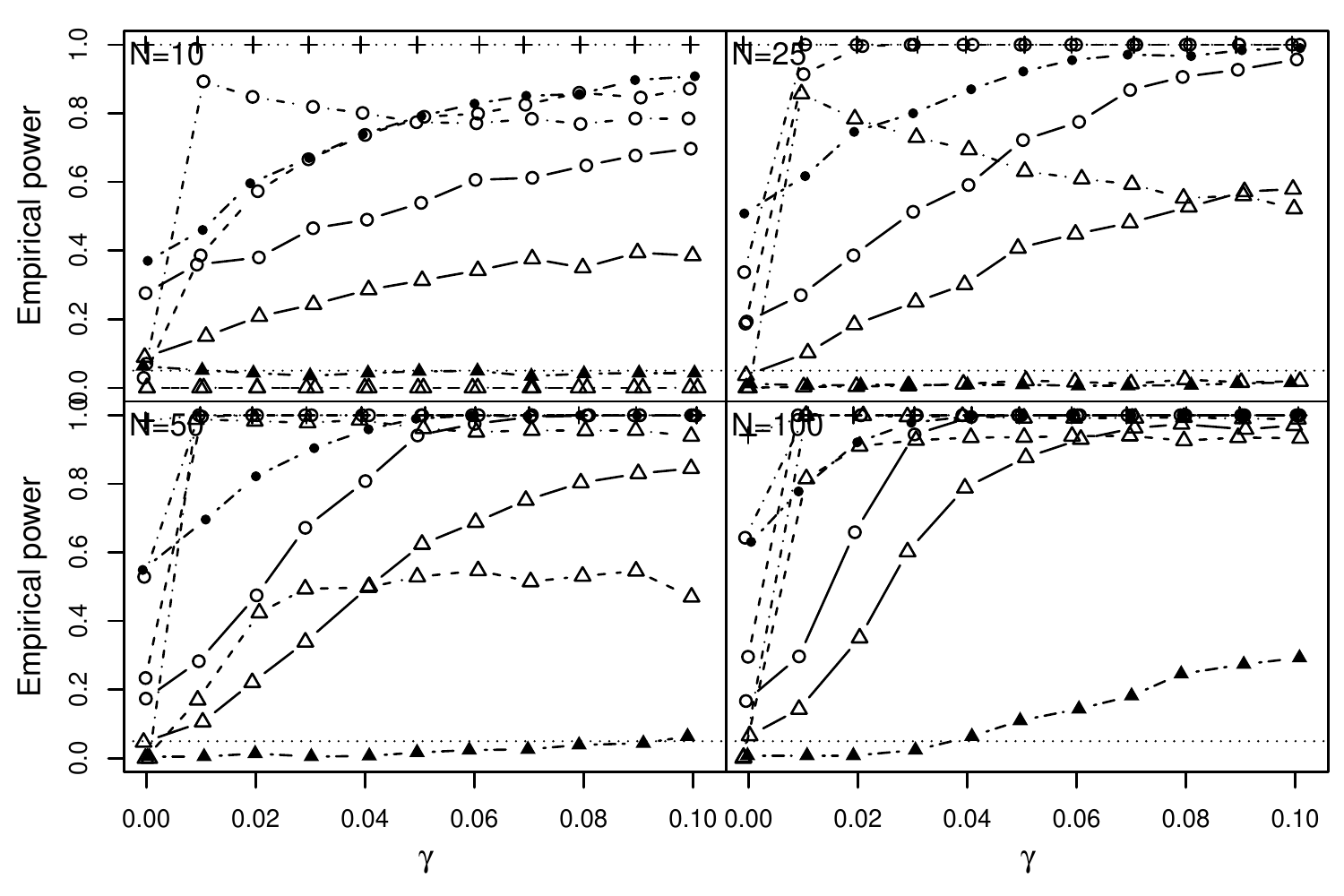}
    \caption{Non-Gaussian scenario}
  \end{subfigure}
  \caption{Empirical power of the testing procedures in the
    \emph{Gaussian} scenario (panel (a)) and \emph{non-Gaussian} scenario (panel (b)), for $N=10, 25, 50, 100$ and $\mathcal I = \mathcal I_3$. The
    results shown correspond to the test \eqref{eq:asymptotically-sigma2-times-chi2-test} based on its
    asymptotic distribution ($\cdot\!\cdot\!\cdot\!\cdot\!+\!\cdot\!\cdot\!\cdot\!\cdot$), the Gaussian
    parametric bootstrap test 
    $\widetilde G_N(\mathcal I_3)$ (dash-dotted line with empty circles),
    $\widetilde G^a_N(\mathcal I_3)$ (dashed line with empty circles), and
    $G_N(\mathcal I_3)$ (solid line with empty circles), 
      the empirical bootstrap projection tests
     $\widetilde G_N(\mathcal I_3)$ (-- $\cdot$
     --$\bigtriangleup$-- $\cdot$ --),
     $\widetilde G^a_N(\mathcal I_3)$ (-- --$\bigtriangleup$-- --), and
     $G_N(\mathcal I_3)$ (---$\bigtriangleup$---),
     the Gaussian parametric Hilbert--Schmidt test (dash-dotted line with
    filled circles) and the empirical Hilbert--Schmidt test (dash-dotted line with filled triangles). The
    horizontal dotted line indicates the nominal level ($5\%$) of the test. 
  Note that the points have been horizontally jittered for better visibility.
  }
  \label{fig:simulations-gaussian-and-t-R3}
\end{figure}

\begin{table}[p]
    \caption{Empirical size of the testing procedures (with $\alpha = 0.05$), for
      $\mathcal I = \mathcal I_1$.}
   \label{tab:empirical-size-r1}
   \vspace{.5cm}
  \begin{subtable}[c]{\linewidth}
    \centering
\begin{tabular}{rrrrr}
  & N=10 & N=25 & N=50 & N=100 \\ 
  \hline
\hline
Asymptotic Distribution & 0.17 & 0.09 & 0.08 & 0.06 \\ 
   \hline
Gaussian parametric bootstrap (non-Studentized) & 0.22 & 0.10 & 0.08 & 0.08 \\ 
   (diag Studentized) & 0.04 & 0.04 & 0.06 & 0.05 \\ 
   (full Studentized) & 0.02 & 0.04 & 0.05 & 0.05 \\ 
   \hline
Empirical bootstrap (non-Studentized) & 0.20 & 0.11 & 0.08 & 0.07 \\ 
  (diag Studentized) & 0.10 & 0.05 & 0.05 & 0.06 \\ 
  (full Studentized) & 0.10 & 0.07 & 0.06 & 0.06 \\ 
   \hline
Gaussian parametric Hilbert--Schmidt & 0.07 & 0.08 & 0.07 & 0.07 \\ 
   \hline
Empirical Hilbert--Schmidt & 0.11 & 0.05 & 0.03 & 0.04 \\ 
  \end{tabular}

    \caption{\emph{Gaussian} scenario}
  \end{subtable}
  \begin{subtable}[c]{\linewidth}
    \centering
\begin{tabular}{rrrrr}
  & N=10 & N=25 & N=50 & N=100 \\ 
  \hline
\hline
Asymptotic Distribution & 0.29 & 0.20 & 0.18 & 0.15 \\ 
   \hline
Gaussian parametric bootstrap (non-Studentized) & 0.31 & 0.21 & 0.18 & 0.17 \\ 
   (diag Studentized) & 0.08 & 0.13 & 0.14 & 0.15 \\ 
   (full Studentized) & 0.08 & 0.12 & 0.14 & 0.14 \\ 
   \hline
Empirical bootstrap (non-Studentized) & 0.20 & 0.07 & 0.06 & 0.08 \\ 
  (diag Studentized) & 0.07 & 0.06 & 0.04 & 0.04 \\ 
  (full Studentized) & 0.06 & 0.04 & 0.03 & 0.03 \\ 
   \hline
Gaussian parametric Hilbert--Schmidt & 0.37 & 0.51 & 0.55 & 0.63 \\ 
   \hline
Empirical Hilbert--Schmidt & 0.06 & 0.01 & 0.01 & 0.01 \\ 
  \end{tabular}

    \caption{\emph{Non-Gaussian} scenario}
  \end{subtable}
\end{table}

\begin{table}[p]
    \caption{Empirical size of the testing procedures (with $\alpha = 0.05$), for
      $\mathcal I = \mathcal I_2$.}
   \label{tab:empirical-size-r2}
   \vspace{.5cm}
  \begin{subtable}[c]{\linewidth}
    \centering
\begin{tabular}{rrrrr}
  & N=10 & N=25 & N=50 & N=100 \\ 
  \hline
\hline
Asymptotic Distribution & 0.43 & 0.19 & 0.11 & 0.09 \\ 
   \hline
Gaussian parametric bootstrap (non-Studentized) & 0.17 & 0.09 & 0.08 & 0.07 \\ 
   (diag Studentized) & 0.04 & 0.05 & 0.06 & 0.05 \\ 
   (full Studentized) & 0.02 & 0.04 & 0.05 & 0.04 \\ 
   \hline
Empirical bootstrap (non-Studentized) & 0.12 & 0.08 & 0.07 & 0.07 \\ 
  (diag Studentized) & 0.01 & 0.04 & 0.04 & 0.05 \\ 
  (full Studentized) & 0.00 & 0.01 & 0.02 & 0.04 \\ 
   \hline
Gaussian parametric Hilbert--Schmidt & 0.07 & 0.08 & 0.07 & 0.07 \\ 
   \hline
Empirical Hilbert--Schmidt & 0.11 & 0.05 & 0.03 & 0.04 \\ 
  \end{tabular}

    \caption{\emph{Gaussian} scenario}
  \end{subtable}
  \begin{subtable}[c]{\linewidth}
    \centering
\begin{tabular}{rrrrr}
  & N=10 & N=25 & N=50 & N=100 \\ 
  \hline
\hline
Asymptotic Distribution & 0.61 & 0.37 & 0.32 & 0.28 \\ 
   \hline
Gaussian parametric bootstrap (non-Studentized) & 0.26 & 0.19 & 0.17 & 0.16 \\ 
   (diag Studentized) & 0.09 & 0.12 & 0.14 & 0.14 \\ 
   (full Studentized) & 0.10 & 0.16 & 0.20 & 0.22 \\ 
   \hline
Empirical bootstrap (non-Studentized) & 0.10 & 0.04 & 0.06 & 0.07 \\ 
  (diag Studentized) & 0.00 & 0.03 & 0.03 & 0.03 \\ 
  (full Studentized) & 0.00 & 0.01 & 0.01 & 0.01 \\ 
   \hline
Gaussian parametric Hilbert--Schmidt & 0.37 & 0.51 & 0.55 & 0.63 \\ 
   \hline
Empirical Hilbert--Schmidt & 0.06 & 0.01 & 0.01 & 0.01 \\ 
  \end{tabular}

    \caption{\emph{Non-Gaussian} scenario}
  \end{subtable}
\end{table}

\begin{table}[p]
    \caption{Empirical size of the testing procedures (with $\alpha = 0.05$), for
      $\mathcal I = \mathcal I_3$.}
   \label{tab:empirical-size-r3}
   \vspace{.5cm}
  \begin{subtable}[c]{\linewidth}
    \centering
\begin{tabular}{rrrrr}
  & N=10 & N=25 & N=50 & N=100 \\ 
  \hline
\hline
Asymptotic Distribution & 1.00 & 0.98 & 0.79 & 0.40 \\ 
   \hline
Gaussian parametric bootstrap (non-Studentized) & 0.18 & 0.09 & 0.08 & 0.07 \\ 
   (diag Studentized) & 0.01 & 0.01 & 0.04 & 0.05 \\ 
   (full Studentized) & 0.01 & 0.02 & 0.03 & 0.05 \\ 
   \hline
Empirical bootstrap (non-Studentized) & 0.10 & 0.08 & 0.07 & 0.07 \\ 
  (diag Studentized) & 0.00 & 0.00 & 0.01 & 0.01 \\ 
  (full Studentized) & 0.00 & 0.00 & 0.00 & 0.00 \\ 
   \hline
Gaussian parametric Hilbert--Schmidt & 0.07 & 0.08 & 0.07 & 0.07 \\ 
   \hline
Empirical Hilbert--Schmidt & 0.11 & 0.05 & 0.03 & 0.04 \\ 
  \end{tabular}

    \caption{\emph{Gaussian} scenario}
  \end{subtable}
  \begin{subtable}[c]{\linewidth}
    \centering
\begin{tabular}{rrrrr}
  & N=10 & N=25 & N=50 & N=100 \\ 
  \hline
\hline
Asymptotic Distribution & 1.00 & 1.00 & 0.98 & 0.94 \\ 
   \hline
Gaussian parametric bootstrap (non-Studentized) & 0.28 & 0.19 & 0.17 & 0.17 \\ 
   (diag Studentized) & 0.03 & 0.19 & 0.23 & 0.30 \\ 
   (full Studentized) & 0.07 & 0.34 & 0.53 & 0.64 \\ 
   \hline
Empirical bootstrap (non-Studentized) & 0.09 & 0.04 & 0.05 & 0.07 \\ 
  (diag Studentized) & 0.00 & 0.00 & 0.00 & 0.00 \\ 
  (full Studentized) & 0.00 & 0.00 & 0.00 & 0.00 \\ 
   \hline
Gaussian parametric Hilbert--Schmidt & 0.37 & 0.51 & 0.55 & 0.63 \\ 
   \hline
Empirical Hilbert--Schmidt & 0.06 & 0.01 & 0.01 & 0.01 \\ 
  \end{tabular}

    \caption{\emph{Non-Gaussian} scenario}
  \end{subtable}
\end{table}

\section*{Acknowledgments}

We wish to thank the editor, associate editor, and the referees for their comments that have led to an
improved version of the paper. We also wish to thank Victor Panaretos for interesting discussions.

\clearpage

\pagebreak

\bibliographystyle{agsm}
\bibliography{references}

\end{document}